\newcommand{\OMIT}[1]{}
\newcommand{\Lang}{\mathcal{L}}
\newcommand{\ialphabet}{\Sigma}
\newcommand{\Var}{\mathcal{V}}
\newcommand{\N}{\mathbb{N}}
\newcommand{\Z}{\mathbb{Z}}
\newcommand{\defn}[1]{\emph{#1}}
\newtheorem{theorem}{Theorem}
\newtheorem*{remark}{Remark}
\newtheorem{example}{Example}
\newcommand{\PSPACE}{\ensuremath{\mathsf{PSPACE}}}
\renewcommand{\P}{\ensuremath{\mathsf{P}}}
\newcommand{\EXPSPACE}{\ensuremath{\mathsf{EXPSPACE}}}
\newcommand\anthony[1]{{\color{blue}
[#1 - \textbf{Anthony}]}}
\newcommand{\replace}[2]{\ensuremath{\text{\ttfamily replace-all}(#1,#2)}}
\newcommand{\String}{\ensuremath{\text{\textbf{str}}}}
\newcommand{\Rel}{\ensuremath{\text{\textbf{rel}}}}
\newcommand{\Int}{\ensuremath{\text{\textbf{int}}}}
\newcommand{\Char}{\ensuremath{\text{\textbf{char}}}}
\newcommand{\Reg}{\ensuremath{\text{\textbf{reg}}}}
\newcommand{\NewStrCon}[1]{\ensuremath{\lambda_{\Rel}(#1)}}
\newcommand{\NewCon}[1]{\ensuremath{\lambda(#1)}}
\newcommand{\NewRegCon}[1]{\ensuremath{\lambda_{\Reg}(#1)}}
\newcommand{\Aut}{\ensuremath{\mathcal{A}}}
\newcommand{\AutB}{\ensuremath{\mathcal{B}}}
\newcommand{\Transducer}{\ensuremath{T}}
\newcommand{\controls}{\ensuremath{Q}}
\newcommand{\finals}{\ensuremath{F}}
\newcommand{\transrel}{\ensuremath{\delta}}
\newcommand{\tran}[1]{\stackrel{#1}{\longrightarrow}}
\newcommand{\mcl}[1]{\mathcal{#1}}
\newcommand{\IndexOf}{\ensuremath{\text{\sf IndexOf}}}
\newcommand{\rarw}{\rightarrow}
\newcommand{\GraphR}{\mathbb{G}}
\newcommand{\Dim}{\text{Dim}}
\newcommand{\AC}{\ensuremath{\sf AC}}
\newcommand{\pspace}{\PSPACE}
\newcommand{\M}{{\cal M}} 
\newcommand{\Cons}{{\rm Cons}} 
\newcommand{\AUWR}{{\sf SL}} 
\newcommand{\AUWRe}{\AUWR^e} 
\newcommand{\AUR}{{\sf SL}_r} 
\newcommand{\AURe}{\ensuremath{\AUR^e}}
\newcommand{\sat}{{\sc Satisfiability}} 
\newcommand{\sigmas}{\Sigma^*}
\newcommand{\sigmabot}{\Sigma_\bot}
\newcommand{\sigbot}{\sigmabot}
\newcommand{\e}{\varepsilon}
\renewcommand{\l}{\ell}
\newcommand{\A}{{\cal A}} 
\newcommand{\G}{{\cal G}}
\newcommand\shortlong[2]{#2}
\newtheorem{proposition}[theorem]{Proposition}
\newtheorem{lemma}[theorem]{Lemma}
\newtheorem{definition}{Definition}
\newtheorem*{convention}{Convention}
\newtheorem{future}{Future Work}
\newif\ifdraft\draftfalse
\newcommand\anthonychanged[1]{{\color{blue}{#1}}}
\newcommand\anthonychanged[1]{#1}
\begin{document}

\toappear{}

\setlength{\pdfpageheight}{\paperheight}
\setlength{\pdfpagewidth}{\paperwidth}

\conferenceinfo{To appear in POPL'16}{January 20--22, 2016, St. Petersburg,
FL, USA} 
\copyrightyear{2016} 
\copyrightdata{978-1-nnnn-nnnn-n/yy/mm} 
\doi{nnnnnnn.nnnnnnn}




\titlebanner{}        
\preprintfooter{String Solving with Word Equations and Transducers}   

\title{String Solving with Word Equations and Transducers: Towards a Logic
for Analysing Mutation XSS\shortlong{}{ (Full Version)}}

\authorinfo{Anthony W. Lin}
           {Yale-NUS College, Singapore}
           {anthony.w.lin@yale-nus.edu.sg}
\authorinfo{Pablo Barcel\'o}
           {Center for Semantic Web Research \\ \& Dept. of Comp. Science, Univ.
           of Chile, Chile}
           {pbarcelo@dcc.uchile.cl}

\maketitle

\begin{abstract}
\OMIT{
In the past seven years there has been a lot of work in string solving
especially with applications to detecting security vulnerabilities in
web applications against code 
injections and cross-site scripting (XSS). 
}
We study the fundamental issue of decidability of satisfiability over string
logics with concatenations and finite-state transducers as atomic operations.
Although restricting to one type of operations yields decidability,
%
little is known about the decidability of their combined theory, which is 
especially relevant when 
analysing security vulnerabilities of dynamic web pages in a more realistic 
browser model.
%
On the one hand, word equations (string logic with concatenations) cannot 
precisely capture sanitisation functions (e.g. htmlescape) and 
implicit browser transductions (e.g. innerHTML mutations).
On the other hand, transducers suffer from the reverse problem of being able to 
model sanitisation functions and browser transductions, but not string
concatenations.
%
Naively combining word equations and transducers easily leads to an undecidable 
logic. Our main contribution is to show that the 
``straight-line fragment'' of the logic is decidable (complexity ranges from 
PSPACE to 
EXPSPACE). The fragment can express the program logics of straight-line 
string-manipulating programs with concatenations and 
transductions as atomic operations, which arise when
performing bounded model checking or dynamic symbolic executions.
\OMIT{
can express formulas obtained after eliminating 
loops in 
programs (with string concatenations and transductions as atomic operations), 
and converting them into static single
assignment form. 
}
We demonstrate that the logic can naturally express constraints required
for analysing mutation XSS in web applications.
%
Finally, the logic remains decidable in the presence of length, 
letter-counting, regular, indexOf, and disequality constraints.
 
\end{abstract}

\category{F.3.1}{Logics and Meanings of Programs}{Specifying and Verifying and
Reasoning about Programs}[Logics of Programs]




\keywords
String analysis, XSS, word equations, transducers

\section{Introduction}
The past decade has witnessed a significant amount of progress in constraint 
solving technologies, thanks to the emergence of highly efficient SAT-solvers
(e.g. see \cite{SAT-CACM,SAT-handbook,KS08})
and SMT-solvers (e.g. see \cite{SMT-CACM,SMT-chapter,KS08}). The goal of SAT-solvers
is to solve constraint satisfaction 
problem in its most basic form, i.e., satisfiability of propositional formulas. 
Nowadays there are numerous highly efficient solvers including Chaff,
Glucose, Lingeling, and MiniSAT, to name a few (e.g. see 
\cite{SAT-competition} for others). 
Certain applications, however, require more expressive constraint
languages. The framework of satisfiability modulo theories (SMT) builds on top
of the basic constraint satisfaction problem by interpreting
atomic propositions as a quantifier-free formula in a certain ``background''
logical
theory, e.g., linear arithmetic. Today fast SMT-solvers supporting a 
multitude of background theories are available including
Boolector, CVC4, Yices, and Z3, to name a few (e.g. see \cite{SMT-competition} 
for others).
The backbones of fast SMT-solvers are usually
a highly efficient SAT-solver (to handle disjunctions) and an effective
method of dealing with satisfiability of formulas in the background theory.

In the past seven years or so there have been a lot of works on developing
robust SMT-solvers for constraint languages over 
strings (a.k.a. \emph{string solvers}). The following (incomplete) list 
of publications indicates the amount of interests in string solving (broadly 
construed):
\cite{Stranger,fmsd14,Berkeley-JavaScript,S3,BTV09,Abdulla14,cvc4,HW12,RVG12,Yu09,Balzarotti08,patching,symbolic-transducer,ganesh-word,SMV12,Was07,Was08a,Was08b,FL10,FPBL13,BEK,HAMPI,Z3-str,yu2011,CMS03,GSD04,Min05,DV13}. 
One main driving force behind this
research direction --- as argued by the authors of
\cite{Berkeley-JavaScript,S3,Z3-str,symbolic-transducer,Balzarotti08,fmsd14,DV13,Min05,BEK,Was07,Was08a,Was08b,FPBL13,HAMPI,Min05,HAMPI} 
among others --- is the application to analysis of security vulnerabilities in 
web applications against code injections and cross-site scripting (XSS), which
are typically caused by improper handling of untrusted strings
by the web applications, e.g., leading to an execution of malicious JavaScript 
in the clients' browsers. 

Despite the amount of recent progress in developing practical string solvers, 
little progress has been made on the foundational issues of string solving
including \emph{decidability}, which is particularly important since it 
imposes a fundamental limit of what we can expect from a solver for a given 
string constraint language (especially with respect to soundness and 
completeness). Perhaps the most important theoretical result in string solving 
is the decidability of satisfiability
for \emph{word equations} (i.e. string logic with the concatenation operator) 
by Makanin \cite{Makanin} (whose computational complexity was improved 
to $\PSPACE$ by Plandowski \cite{Plandowski,Plandowski06}). It is known
that adding regular constraints (i.e. regular-expression matching) preserves 
decidability without increasing the complexity
\cite{Plandowski06,diekert}. Very few decidability
results extending this logic are known.

For such an application as detecting security vulnerabilities in web 
applications in a realistic browser model (e.g. see \cite{web-model}),
word equations with regular constraints alone are \emph{insufficient}. Firstly,
browsers regularly perform \emph{implicit transductions}. For example,
upon \texttt{innerHTML} assignments or \texttt{document.write} invocations, 
browsers mutate the original string values by HTML entity decoding, e.g.,
each occurrence of \texttt{\&\#34;} will be replaced by \verb+"+.
[Modern browsers admit some exceptions including the HTML entity names 
\texttt{\&amp;}, \texttt{\&lt;}, and \texttt{\&gt;} (among others), which
will not be decoded.] Since such transductions involve only conversions of
one character-set encoding to another, they can be encoded as \emph{finite-state
(input/output) transducers}, as has already been noted in 
\cite{web-model,BEK,symbolic-transducer,DV13} (among others).
Secondly, in an attempt to prevent code injection and XSS attacks, most web 
applications will first sanitise untrusted strings (e.g. obtained from an 
untrusted application) before processing them. Common sanitisation functions 
include JavaScript-Escape and HTML-Escape (an implementation can be found in
The Closure Library \cite{closure}). HTML-Escape converts reserved characters 
in HTML such as {\ttfamily \verb+&+}, {\ttfamily \verb+<+}, and {\ttfamily 
\verb+'+} to their respective HTML entity 
names {\ttfamily \verb+&amp;+}, {\ttfamily \verb+&lt;+} and 
{\ttfamily \verb+&#39;+}.
On the other hand, JavaScript-Escape will backslash-escape certain 
metacharacters,
e.g., the character {\ttfamily \verb+'+} and {\ttfamily \verb+"+} are replaced 
by {\ttfamily \verb+\'+} and {\ttfamily \verb+\"+}. Again, such sanitisation
functions can be encoded as finite-state transducers, as has been noted in
\cite{BEK} (among others). 
\begin{example}
    \label{ex:cacm}
    {\em 
The following JavaScript code snippet adapted from a recent CACM article 
\cite{Kern14} is a simple example that uses \emph{both} concatenations and 
finite-state transducers (both explicitly and implictly):
\begin{flushleft}
    \footnotesize
\begin{verbatim}
var x = goog.string.htmlEscape(cat);
var y = goog.string.escapeString(x);
catElem.innerHTML = '<button onclick=
    "createCatList(\'' + y + '\')">' + x + '</button>';
\end{verbatim}
\end{flushleft}
The code assigns an HTML markup for a hyperlink to the DOM element catElem. The 
hyperlink creates a category \texttt{cat} whose value is provided by an
untrusted third party. For this reason, the code attempts to first sanitise 
the value of \texttt{cat}.
This is done via The Closure Library \cite{closure} string functions 
\texttt{htmlEscape} and \texttt{escapeString} (implementing JavaScript-Escape). 
Inputting the value \texttt{Flora \& Fauna} into \texttt{cat} gives the desired
HTML markup:
\begin{flushleft}
    \footnotesize
        \begin{verbatim}
<button onclick="createCatList('Flora &amp; Fauna')">
    Flora &amp; Fauna</button>
    \end{verbatim}
\end{flushleft}
On the other hand, inputting the value {\ttfamily \verb+');alert(1);//+}
to \texttt{cat}, results in the HTML markup:
\begin{flushleft}
    \footnotesize
    \begin{verbatim}
<button onclick="createCatList('&#39;);alert(1);//')">
    &#39;);alert(1);//')</button>
    \end{verbatim}
\end{flushleft}
When this is inserted into the DOM via \texttt{innerHTML}, an implicit browser 
transduction will take place, i.e., first HTML-unescaping the value inside the 
\texttt{onclick} attribute and invoking the attacker's script 
\texttt{alert(1)} after \texttt{createCatList}.
This subtle XSS bug (a type of \emph{mutation XSS} \cite{mXSS}) is due 
to calling the appropriate escaping functions in the wrong order.
}
\qed
\end{example}

It is well-known (e.g. see \cite{Balzarotti08,fmsd14,Berkeley-JavaScript})
that string solving can be applied to detecting security 
vulnerabilities against a given injection and XSS attack pattern\footnote{Attack
patterns are identified from previous vulnerabilities, some of which
have been well-documented, e.g., see \cite{OWASP-XSS,html5sec}.}
$P$ in the form of a regular expression.
After identifying certain 
``hot spot'' variables in the program where attacks can be performed (e.g. 
possibly via taint analysis), a string constraint will be generated that is 
satisfiable
iff the program is vulnerable against a given attack pattern. In the above
example, to analyse security vulnerabilities in the variable
\verb+catElem.innerHTML+ against the following attack pattern (given 
in JavaScript regex notation; blank space inserted for readability):
{\footnotesize 
\begin{verbatim}
e1 = /<button onclick=
        "createCatList\(' ( ' | [^']*[^'\\] ' ) \); 
            [^']*[^'\\]' )">.*<\/button>/
\end{verbatim}}
\noindent
one would express the program logic as a conjunction of:
\begin{itemize}
    \item $x = R_1(\text{\texttt{cat}})$
    \item $y = R_2(x)$
    \item $z = w_1 \cdot y \cdot w_2 \cdot x \cdot
        w_3$ for some constant strings $w_1, w_2, w_3$, e.g., 
        $w_1$ is \verb+<button onclick="createCatList('+
    \item $\text{\texttt{catElem.innerHTML}} = R_3(z)$
    \item $\text{\texttt{catElem.innerHTML}}$ matches \text{\texttt{e1}}.
\end{itemize}
Here, $R_1$ and $R_2$ are, respectively, transducers implementing 
\texttt{htmlEscape} and \texttt{escapeString}, while $R_3$ is a transducer
implementing the implicit browser transductions upon \texttt{innerHTML}
assignments. Note that the above string constraint cannot be written as
word equations with regular constraints alone since the finite-state transducers
replace \emph{each occurrence} of a substring (e.g. \verb+&#39;+) in a string 
by another string (e.g. the single character \verb+"+). To the best of our
knowledge, there is no known decidable logic which can express the above
string constraint.

\paragraph{{\bf Contribution:}} We study the decidability of satisfiability over 
string logics with concatenations, finite-state transductions, and regex
matching as atomic operations. Naively combining concatenations and 
transducers easily leads to an undecidable logic (e.g. see \cite{BTV09,BFL13}). 
In fact, it was 
shown in \cite{BFL13} that restricting to string constraints of the form
$x = y\cdot z \wedge x = R(z)$, where $R$ ranges over finite-state transducers,
is undecidable. [Actually, $R$ can be restricted to a relatively weak class of 
finite-state transducers that express only \emph{regular relations} (a.k.a. 
\emph{automatic relations} \cite{BG04}).] 
Our main contribution is to show that the ``straight-line
fragment'' of the logic is decidable (in fact, is $\EXPSPACE$-complete, but 
under a certain reasonable assumption the complexity reduces to $\PSPACE$). 
In fact, our decidability results provide an upper bound for the maximum size of
solutions that need to be explored to guarantee completeness for bounded-length
string solvers whenever the input constraint falls within our straight-line
fragment. The 
fragment can express the program 
logics of straight-line string-manipulating programs with concatenations and 
transductions as atomic operations. This includes
the program logic of Example \ref{ex:cacm}. 
In fact, straight-line programs naturally arise when performing \emph{bounded 
model 
checking}\footnote{Note that this does \emph{not} mean restrictions to strings 
of bounded length.} or \emph{dynamic symbolic executions}, which
unrolls loops in the programs (up to a given depth) and converts the resulting 
programs into \emph{static single assignment form} (i.e. each variable defined 
once). Please consult \cite{S3,KS08,DKW08} for more details.

Example \ref{ex:cacm} is one example of analysis of mutation XSS vulnerabilities
that can be expressed in our logic. In this paper, we provide three other
mutation XSS examples 
that can be expressed in our logic (adapted from \cite{Kern14,Stock14,mXSS}).

Finally, the case study of
\cite{Berkeley-JavaScript} suggested
that the use of length constraints (comparing lengths of different strings) and 
$\IndexOf$
constraints ($\IndexOf(x,y)$ outputs the position of an occurrence of a given 
string $x$ in another string $y$) is prevalent in JavaScript 
programs. 
To this end, we show that our logic is still decidable (with the
same complexity) when extended with:
\begin{enumerate}
    \item \emph{(linear) arithmetic constraints} 
        whose free variables are interpreted
        as one of the following: integer variables in the program, length of a 
        string variable, or the number of occurrences of a certain letter
        in a string variable.
    \item \emph{$\IndexOf$ constraints} of the form $h = \IndexOf(w,y)$, where
        $w$ is a constant string and $h$ is an integer variable. This is the most 
        frequent usage of $\IndexOf$ operator in JavaScript.
\end{enumerate}
All the examples in the benchmark examples of \cite{Berkeley-JavaScript} were 
observed
to belong to a class called \emph{solved forms} \cite{ganesh-word}, which is a 
subset of our string logic with linear arithmetic constraints. Lastly,
we can also add unrestricted disequality constraints between string variables,
while still preserving decidability.


\OMIT{
Despite the large amount of interests and progress, research in string 
solving is far from mature, especially compared to other logical theories 
commonly supported by SMT-solvers (e.g. linear arithmetic). 
}
\OMIT{
Firstly,
there is a large number of possible \emph{atomic string operations} that a 
string 
constraint language can support, e.g., concatenation, length constraints,
counting constraints, regular constraints, index-of, character-at, 
subsequence-of, and replace-all (more generally, finite-state transducers). 
In fact, given two research papers on string 
solving, chances are that they do not use precisely the same string logics. 
}
\OMIT{
One reason for this is many important foundational issues regarding string 
solving --- 
including decidability and complexity (which dictate soundness and completeness
of string solvers) --- are still unsolved. Perhaps the most important 
theoretical result in string solving is the decidability of satisfiability
for \emph{word equations} (i.e. string logic with the concatenation operator) 
by Makanin \cite{Makanin} (whose computational complexity was improved 
to $\PSPACE$ by Plandowski \cite{Plandowski,Plandowski06}). It is known
that adding regular constraints preserves decidability and the complexity
becomes $\PSPACE$-complete \cite{Plandowski06,diekert}. Very few decidability
results extending this logic are known. 
}
\OMIT{
For 
example, it is a long-standing open problem whether the logic is still 
decidable if length constraints are incorporated (e.g. see 
\cite{diekert}).}
\OMIT{
Concatenations, regular constraints,
and length constraints have all been argued to be important string operations 
from the point of view of analysing vulnerabilities in web applications
against such attacks as code injections and XSS (e.g. see
\cite{Berkeley-JavaScript,fmsd14,Yu09}).
}

\OMIT{
The decidability 
word equations with regular constraints

For example, although
the satisfiability for \emph{word equations} (i.e. string logic with
the concatenation operator) was shown to be decidable by Makanin 
\cite{Makanin} 
(whose computational complexity was improved to $\PSPACE$ by 
Plandowski \cite{Plandowski,Plandowski06}), 
it is a long-standing open problem whether the logic is still 
decidable if length constraints are incorporated (e.g. see \cite{diekert}).
}

\OMIT{

In this paper, we initiate the study of the decidability of the string logic 
combining word equations with finite-state transducers. Both string 
concatenations and transducers are especially important when
analysing security vulnerabilities in client-side applications in a more 
realistic browser model (e.g. see \cite{web-model}). String concatenation is 
the most standard string operator and is widely used in web applications
(see \cite{Berkeley-JavaScript} for a case study revealing usage percentages of 
string concatenations from JavaScript applications). This explains why
most string solvers support string concatenations (e.g. see
\cite{Berkeley-JavaScript,...}).
Finite-state transducers, on the other hand, are well-known to be able to
model sanitisation functions and implicit browser transductions
(e.g. innerHTML mutations \cite{web-model,mXSS}). 
}

\OMIT{
(e.g. see 
\cite{sql-injection-book,web-app-sec-book,shema-book}). Analysing such 
vulnerabilities amounts to checking whether some untrusted string variables
can lead to some undesirable outcomes (e.g. execution of malicious JavaScript
in the clients' browsers).
}

\OMIT{
Say something about SMT. Integers and strings being primitive data types in
many programming languages (esp. popular dynamic languages like Python, 
JavaScript, etc.). Mention works by Makanin, ... and that there are still 
many open problems ...
}

\OMIT{
\anthonychanged{[Not done: this is telling about our natural restrictions and 
our decidability/complexity and how they capture interesting examples] We then 
propose a natural restriction that 
yields decidability without severing its expressive power for vulnerability
detection in web applications. We have collected some real-world programs
that can be naturally expressed in our logic but cannot be accurately modelled in 
existing decidable formalisms over strings. Finally, we also show that 
the logic remains decidable in the presence of length, letter-counting, and regular 
constraints.}

\anthony{Can we allow one context-free language constraint? It seems that the
image of CFL is CFL (See Berstel)}
}

\paragraph{{\bf Organisation: }}
Section \ref{sec:prelim} fixes general mathematical notations and reviews 
the necessary concepts on automata and transducers that will be used 
throughout the paper. In Section \ref{sec:lang}, we
define a general string constraint language that combines
concatenations, finite-state transductions, and regex matching. 
The language is undecidable even after imposing various restrictions 
that have been proposed in the literature.
In Section 
\ref{sec:decidable_string}, we recover decidability for the straight-line
fragment. 
In Section \ref{sec:ext}, we show that decidability can be retained even when 
length and $\IndexOf$
constraints are incorporated. We conclude with the related work and possible
future works in Section \ref{sec:rw}. Additional material can be found in
the \shortlong{full version}{appendix}.

\section{Preliminaries}
\label{sec:prelim}

\OMIT{
Constraint languages over strings often have roots in automata and formal
language theory. For this reason, this section aims to
review the definitions and
basic results on finite-state automata and input/output transducers, 
which our constraint language will be based on. 
}

\paragraph*{{\bf General notations:}}
Given two integers $i, j$, we write $[i,j]$ to denote the set
$\{i,\ldots,j\}$ of integers in between $i$ and $j$. For each integer $k$, we 
write $[k]$ to denote $[0,k]$.
Given a binary relation $R \subseteq S \times S$ and a set $A \subseteq S$, we
use $R(A)$ to denote the set $\{s' \in S \mid (s,s') \in R \text{ for some
$s \in A$}\}$. 
In other words, $R(A)$ is the post-image of $A$ under $R$. 
We use $R^{-1}$ to denote the reverse relation, i.e., 
$(s,s') \in R^{-1}$ iff $(s',s) \in R$ for each $s,s' \in S$. 
Notice then that $R^{-1}(A)$ is the pre-image of $A$ under $R$.  
The term \defn{DAG} stands for \emph{directed acylic graphs}.



\paragraph*{{\bf Regular languages:}}
Fix a finite alphabet $\Sigma$. Elements in $\Sigma^*$ are interchangeably called words or strings. 
For each finite word  
$w = w_1\ldots w_n \in 
\Sigma^*$, we
write $w[i,j]$, where $1 \leq i \leq j \leq n$, to denote the segment
$w_i\ldots w_j$. We use $w[i]$ to denote $w[i,i] = w_i$. In addition,
the symbol $|w|$ denotes the length $n$ of $w$, while the symbol $|w|_a$ 
$(a \in \ialphabet)$ denotes the number of occurrences of $a$ in $w$.

Recall that a {\em nondeterministic
finite-state automaton} (NFA) is a tuple 
$\mcl{A} = (\Sigma,Q,\delta,q_0,F)$, where $Q$ is a finite set of states, $\delta 
\subseteq Q \times \Sigma \times Q$ is the transition relation, $q_0
\in Q$ is the initial state, and $F \subseteq Q$ is the set of final
states. 
A {\em run} of $\mcl{A}$ on $w$ is a function $\rho: \{0,\ldots,n\}
\rarw Q$ with $\rho(0) = q_0$ that obeys the transition relation $\delta$,
i.e., $(\rho(i),w_i,\rho(i+1)) \in \delta$ for each $i \in \{0,\ldots,n-1\}$. 
We write $\Aut_{[q,q']}$ to denote $\Aut$ but the initial state (resp. set of 
final states) is
replaced by $q$ (resp. $\{q'\}$).
We may also denote the run $\rho$ by the word $\rho(0)\cdots \rho(n)$ over
the alphabet $Q$. 
The run $\rho$ is said to be \defn{accepting} if $\rho(n) \in F$, in which
case we say that the word $w$ is \defn{accepted} by $\mcl{A}$. The language
$L(\mcl{A})$ of $\mcl{A}$ is the set of words in $\Sigma^*$ accepted by
$\mcl{A}$. Such a language is said to be \defn{regular}. 
Recall that regular languages are precisely the ones that can be defined by 
regular expressions. 
In the sequel, when the meaning is clear from the context, we will sometimes 
confuse an NFA or a regular expression with the regular language that it
recognises/generates.


\paragraph*{{\bf Transducers and rational relations:}} 
\OMIT{
We start by defining {\em regular relations} (a.k.a. synchronised rational
relations \cite{FS93}). 
Let $\Sigma$ be a finite alphabet, 
$\bot\not\in\Sigma$ a new alphabet letter, and $\sigmabot := 
\Sigma\cup\{\bot\}$. Each pair $(w_1,w_2)$ of words, where $w_i \in \Sigma^*$,
can be viewed as a word $w_1 \otimes w_2$ over $\sigmabot \times \sigmabot$ 
as follows. Assume that $w_1$ is shorter than $w_2$ (the other case is analogous). 
Pad word $w_1$ 
with the new symbol $\bot$ until it is of the same length that $w_2$ and let $w_1'$ be the 
resulting word over $\sigmabot$. Then 
use as the $k$-th symbol of $w_1 \otimes w_2$ the pair of the $k$-th
symbols of $w_1'$ and $w_2$. Formally, for $i = 1,2$ 
let $|w_i|$ be the length of
the word $w_i$ and $\l=\max \{|w_1|,|w_2|\}$. Then
$w_1 \otimes w_2$ is a word of length $\l$ whose $k$-th
symbol is $(a_1,a_2)\in \sigmabot \times \sigmabot$ such that: 
$$a_i = \begin{cases} \text{the }k\text{th letter of }w_i & \text{ if
  }|w_i| \geq k \\ \bot & \text{ otherwise.}\end{cases}
$$ 
A relation $R\subseteq \sigmas \times \sigmas$ is \defn{regular}
if there is an NFA (or equivalently,
a regular expression) over $\sigbot \times \sigbot$ that defines
$\{w_1 \otimes w_2 \ | \ (w_1,w_2) \in R\}$. 
We typically do not distinguish between a
regular relation $R$ and the regular expression over
$\sigbot \times \sigbot$ which defines it. 
\begin{remark}
    An NFA recognising regular relations are known by many different names in
    the literature, e.g., synchronised automata \cite{BG00,FS93},
    letter-to-letter transducers \cite{rmc-survey,Libkin-string}, 
    synchronised rational transducers \cite{TL10}, and
    aligned multi-track automata \cite{yu2011}.
\end{remark}

\begin{example} {\em The relation that contains all pairs
    $(w_1,w_2)$ of words over $\Sigma$ such that $w_1$ is a prefix of
    $w_2$ is a regular relation, as witnessed by the regular
    expression $\big(\bigcup_{a\in\Sigma}(a,a)\big)^* \cdot
    \big(\bigcup_{a\in\Sigma}(\bot,a)\big)^*$. The same is true for
    the relation that contains all pairs $(w_1,w_2)$ of words over
    $\Sigma$ such that $w_1 = w_2$ and $|w_1| = |w_2|$, respectively. 
This is witnessed by the
    regular expressions $\big(\bigcup_{a \in\Sigma}(a,a)\big)^*$ and 
$\big(\bigcup_{a,b\in\Sigma}(a,b)\big)^*$, respectively.

On the other hand, 
there is a finite alphabet $\Sigma$ such that the relation that consists of all 
pairs $(w_1,w_2)$ of words 
over $\Sigma$ such that $w_1$ is a suffix of $w_2$ 
is not regular. The same holds for the set of pairs 
$(w_1,w_2)$ such that (1) $w_1$ is a {\em subsequence} of $w_2$ (that
is, $w_1$ is obtained from $w_2$ by removing none or some of its
letters), and (2) $|w_1| = 2 \cdot |w_2|$. 
\qed }
\end{example} 
}

A  transducer (short for ``finite-state input output transducer'') is a 
two-tape automaton that has two 
heads for the tapes and one 
additional finite control; at every step, based on the state and the letters it 
is reading, the automaton can enter a new state and move some (but not
necessarily all) tape heads. Each transducer generates a binary relation over
strings called rational relation. 

We will now make this definition more precise. A \defn{transducer} over the 
alphabet $\ialphabet$ is a tuple $\Aut = 
(\Gamma,\controls,\transrel,q_0,\finals)$, where
$\Gamma := \ialphabet_\epsilon^2$ and $\ialphabet_\epsilon := \ialphabet
\cup \{\epsilon\}$, such that $\Aut$ is syntactically an NFA over $\Gamma$.
The transducer $\Aut$ is said to be \defn{synchronised} if $\Aut$ (viewed as
an NFA) does not accept words $w = (a_1,b_1)\cdots (a_n,b_n) \in
(\ialphabet_\epsilon^2)^*$ such that there exist $i,j \in [n]$ with $i < j$ 
such that one of the following conditions holds: (1) $(a_i,b_i) \in
\ialphabet \times \{\epsilon\}$ and $b_j \in \ialphabet$, (2)
$(a_i,b_i) \in \{\epsilon\} \times \ialphabet$ and $a_j \in \ialphabet$.
Intuitively, as soon as the two heads go out of sync, the head that is lagging
behind is no longer allowed to move forward.
The relation $R \subseteq (\ialphabet^*)^2$ that
$\Aut$ \emph{recognises} consists of all tuples $\bar w$ for which there is a run 
\[
    \pi := q_0 \tran{\sigma_1} q_1 \tran{\sigma_2} \cdots \tran{\sigma_n} q_n
\]
of $\Aut$ (treated as an NFA) such that $\bar w = \sigma_1 \circ \sigma_2 
\circ \cdots \circ \sigma_n$, where the string concatenation operator $\circ$ 
is extended to tuples over words component-wise (i.e. $(v_1,v_2)
\circ (w_1,w_2) = (v_1w_1,v_2w_2)$). A relation is said
to be \defn{rational} if it is recognised by a transducer. 
A relation is said to be \defn{synchronised rational} (a.k.a. \defn{regular}
or \defn{automatic}; see \cite{BTV09,BG00,BFL13})
if it is recognised by a synchronised transducer.
Rational relations satisfy some nice properties (e.g. see
\cite{Berstel,Sakarovitch}): (1) closure under union and concatenation, and 
(2) the pre/post image of a regular language under a rational relation is 
regular. The transducers/automata witnessing the above two properties can also 
be constructed efficiently: taking union can be done in linear-time, while
taking concatenation and pre/post image of a regular language can be done
in quadratic time (e.g. see \cite{Berstel,BG07,BG08}).
Synchronised rational relations are adequate for certain applications (e.g. see 
\cite{BG00,BLLW12,TL10,yu2011}), while also satisfying effective
closure under intersection and complementation (cf. \cite{BG00}).

\begin{example}
    \label{ex:erase}
\em
The operator \texttt{replace-all} replaces all occurrences of subwords matched 
by a regular expression $e$ by a word in a regular expression $e'$, which in a 
Vim-like notation can be written as \texttt{s/$e$/$e'$/g}. There are various 
matching strategies that are used by real-world programming languages, e.g., 
first match, longest match, etc. They can all be encoded as transducers (e.g. 
see \cite{SMV12}). One particular use of \texttt{replace-all} is to replace 
words that match a regular language
$L$ by $\epsilon$ (i.e. an \emph{erase} operation). Such usage of 
\texttt{replace-all} can be found in sanitisation of PHP scripts,
e.g., see \cite{Balzarotti08,fmsd14,patching,FL10,FPBL13}. For example, to 
thwart XSS attack patterns of the form
\texttt{<script>}$\ialphabet^*$\texttt{</script>} from a string variable
$x$, one could erase each occurrence of \texttt{<} from $x$ (e.g. see
\cite{fmsd14,patching,Balzarotti08}). 

Let $y = \replace{x}{\epsilon/A}$ denote the operation of erasing each
occurrence of letters $a \in A$ (e.g. $A = \{\text{\texttt{<}}\}$) from 
$x$ and assign it to $y$. The transducer $\Transducer$ for this is simple. It 
has one state $q$, which is both an initial and final state. It has $|A|$ 
transitions, i.e.,
for each $a \in A$ the transducer 
$\Transducer$ has the transition $(q,(a,\epsilon),q)$.
\qed
\end{example}

\OMIT{
There are two equivalent ways to define {\em rational relations} over
$\Sigma$ (e.g. see \cite{Berstel,Sakarovitch}). One uses regular expressions, 
which are now built from
pairs $(a_1,a_2) \in (\Sigma\cup\{\e\}) \times (\Sigma\cup\{\e\})$,
where $\e$ is the empty word, applying the usual operations of union,
concatenation, and Kleene star.  

Alternatively, rational relations can be defined by means of two-tape
automata, that have two heads for the tapes and one additional
finite control; at every step, based on the state and the letters it is
reading, the automaton can enter a new state and move some (but not
necessarily all) tape heads. [In particular, the projection of a
rational relation over any of its two components must be regular]. 
More formally,
a \defn{transducer} over the alphabet $\ialphabet$ is a
tuple $\Aut = (\Gamma,\controls,\transrel,q_0,\finals)$, where
$\Gamma := \ialphabet_\epsilon^2$ and $\ialphabet_\epsilon := \ialphabet
\cup \{\epsilon\}$, such that $\Aut$ is syntactically an NFA over $\Gamma$.
The relation $R \subseteq (\ialphabet^*)^2$ that
$\Aut$ \emph{recognises} consists of all tuples $\bar w$ for which there is a run 
\[
    \pi := q_0 \tran{\sigma_1} q_1 \tran{\sigma_2} \cdots \tran{\sigma_n} q_n
\]
of $\Aut$ (treated as an NFA) such that $\bar w = \sigma_1 \circ \sigma_2 
\circ \cdots \circ \sigma_n$, where the string concatenation operator $\circ$ 
is extended to tuples over words component-wise (i.e. $(v_1,v_2)
\circ (w_1,w_2) = (v_1w_1,v_2w_2)$). A relation is said
to be \defn{rational} if it is recognised by a transducer. 
Notice that it is immediate from the definition that the class of regular relations 
is a subset of the rational relations. In the
sequel, to avoid notational clutter we shall not distinguish between a rational 
relation $R$ and the transducer that defines it. 
}

\OMIT{
\begin{example} \label{exa:rat} {\em The relation that contains all pairs $(w_1,w_2)$
    of words over $\Sigma$ such that $w_1$ is a suffix of $w_2$ is
    rational. This is witnessed by the expression
    $\big(\bigcup_{a\in\Sigma}(\epsilon,a)\big)^* \cdot
    \big(\bigcup_{a\in\Sigma}(a,a)\big)^*$. The same holds in the case
    when $w_1$ is a subsequence of $w_2$. Recall that neither of these relations is regular.  

Also the set of pairs $(w_1,w_2)$ 
of words such that $|w_1| = k \cdot |w_2|$, for $k \geq 1$, is rational. 
This is witnessed by the expression $\big((\bigcup_{a\in\Sigma}(a,\e))^{k-1}  \cdot
    \bigcup_{a,b\in\Sigma}(a,b)\big)^*$. Recall that this relation is not regular even for 
$k = 2$. 
\qed }
\end{example} 
}
\OMIT{
It is known (e.g. see \cite{Berstel,Sakarovitch}) that rational relations 
are not closed under complementation and intersection. 
In fact, checking
whether 
}

\OMIT{
\medskip 

\paragraph{{\bf Word equations}}  

Let $\Var$ be a countably infinite set of variables. 
A {\em word equation} over finite alphabet $\Sigma$ \cite{Makanin} 
is an expression $e$ of the form $\phi =
\psi$, where both $\phi$ and $\psi$ are words over $\Sigma \cup \Var$. 
A {\em solution} for $e$ 
is a mapping $\sigma$ from the
variables that appear in $e$ to $\Sigma^*$  
that unifies both sides of the equation, i.e., $\sigma(\phi) =
\sigma(\psi)$, assuming that $\sigma(a) = a$ for each symbol $a \in
\Sigma$.\footnote{We assume, as usual, that if $\phi = a_1 \dots a_n$ then 
$\sigma(\phi) = \sigma(a_1) \dots \sigma(a_n)$.} 
A {\em word equation with regular constraints} \cite{DG} is a
pair $(e,\nu)$, where $e$ is a word equation 
and $\nu$ is a mapping that associates a regular language
$L_x$ over $\Sigma$ with each variable $x$ that appears in $e$.    
A solution for $(e,\nu)$ is a solution $\sigma$ for $e$ over $\Sigma$
that satisfies $\sigma(x) \in L_x$, for each $x \in \Var$ that appears in
$e$. If $(e,\nu)$ has a solution then we say that it is {\em
  satisfiable}. 

A deep result due to Makanin states that the satisfiability problem
for word equations is decidable \cite{Makanin}. By applying somewhat
different techniques, Plandowski proved that the problem can be solved
in polynomial space (\pspace) \cite{Plandowski}. Then Guti\'errez et
al. developed an extension of those techniques to prove that the
latter holds even for word equations with regular constraints:

\begin{proposition} {\em \cite{DG}} \label{prop:we} 
The satisfiability problem for word equations with regular
constraints is \pspace-complete. 
\end{proposition} 

Word equations can be used to define relations on words (see, e.g.,
\cite{KMP00}). 
Formally, 
a binary relation $R$ over $\Sigma^*$ is {\em definable by word
  equations}, if there is a word equation $e$
over $\Sigma$ and variables
$x,y$ appearing in $e$ such that: 
$$R \ = \ \{(\sigma(x),\sigma(y)) \, \mid \, \sigma \text{ is a solution for
  $e$}\}.$$ Equivalently, $R$ is {\em definable by word
  equations with regular constraints}, if there is a word equation with regular 
constraints $(e,\nu)$ 
over $\Sigma$ and variables
$x,y$ appearing in $e$ such that 
$R \ = \ \{(\sigma(x),\sigma(y)) \, \mid \, \sigma \text{ is a solution for
  $(e,\nu)$}\}$. 

The next example shows that the expressiveness of word equations (with
regular constraints) is incomparable to that of rational relations.

\begin{example} {\em Consider first the word equation $x = ww$.  On
variables $x$ and $w$ it defines the relation $R$ that consists of all
pairs of words $(w_1,w_2)$ such that $w_1 = w_2 \cdot w_2$. Notice
that the projection of $R$ on its first component corresponds to the
language of {\em squared words}, i.e., words of the form $w \cdot w$,
which is non-regular. Therefore, $R$ cannot be rational. 

It is known \cite{Ilie99} that there is a finite alphabet $\Sigma$
  for which the set of pairs
  $(w_1,w_2)$ of words over $\Sigma$ 
such that $w_1$ is a subsequence of $w_2$ is not definable by a word
equation. 
Recall from Example \ref{exa:rat} that this relation is
rational. 
\qed } 
\end{example}
}
  
\paragraph*{{\bf Computational complexity:}} 
In this paper, we study not only decidability but also the \emph{complexity}
of string logics. Pinpointing the precise complexity of
verification problems is not only of fundamental importance, but also it often
suggests algorithmic techniques that are most suitable for attacking the
problem in practice. In this paper, we deal with the following computational
complexity classes (see \cite{Sipser-book} for more details): 
$\P$ (problems solvable in polynomial-time), 
$\PSPACE$ (problems solvable in polynomial space and exponential time), and
$\EXPSPACE$ (problems solvable in exponential space and double exponential
time).
Verification problems that have complexity $\PSPACE$ or beyond --- see
\cite{schwoon-thesis,MONA} for a few examples --- have substantially benefited 
from techniques like symbolic model checking \cite{McMillan}. 
As we shall
see later, our complexity upper bound also suggests the maximum lengths of
words that need to be explored to \emph{guarantee completeness}.
 
\section{The Core Constraint Language}
\label{sec:lang}

We start by defining a general string constraint language that supports
concatenations, finite-state transducers, and regex matching. The
language is a natural generalisation of three decidable string constraint
languages: word equations, finite-state transducers, and regex matching. 
The generality of the language, however, quickly makes it undecidable. 
To delineate the border of undecidability, we shall show that the 
undecidability already holds for various restrictions that have been 
proposed in the literature (e.g. restricting finite-state transducers to 
\texttt{replace-all}). We will recover decidability in the next section.
%

\OMIT{
We start by introducing a general yet simple constraint language for
reasoning about strings. This language consists of {\em relational} and
{\em regular
constraints}, which allow to express three important
features of our setting: concatenation, transducers, and language
constraints.  As we shall see, the generality of the language makes it
easily undecidable. Later, in Sections \ref{sec:decidable_string} and
\ref{sec:ext}, we shall provide a 
decidable fragment of this language, as well as a decidable extension
of such fragment with integer and character constraints, that are 
sufficiently expressive for our applications.
}

\subsection{{\bf Language Definition}}
We assume a finite alphabet $\ialphabet$ and countably many 
{\em string variables} $x,y,z,\dots$ ranging over
$\Sigma^*$. We start by defining \defn{relational constraints}. 

\begin{definition}[Relational constraints]
An \emph{atomic relational constraint} over $\Sigma$ is an expression 
$\varphi$ defined by the following grammar:
\[
\varphi \ ::= \ y = x_1 \circ \cdots \circ x_n (n \in \N)\ |\ y = w\ |\ 
R(x,y)
\]
where $y, x_i$ are string variables, 
$w \in \ialphabet^*$ is a constant word,
and $R$ is a rational relation over 
$\Sigma$ given as a transducer.  
Here, $\circ$ is used to denote the string concatenation operator, which we
shall often omit (or simply replace by $\cdot$) to avoid notational clutter. 
A \emph{relational constraint} is a conjunction of atomic relational 
constraints. 
\end{definition} 

In other words, an atomic relational constraint allows us to test equality of a 
string variable $y$ with either a concatenation of string variables or a 
constant string, or whether the transducer $R$ can transform $x$ into $y$.
\anthonychanged{Notice that the atomic constraint $y = x_1 \circ \cdots \circ x_n$ cannot 
be defined as a rational relation $R(x,y)$ (or in fact any binary 
relation) when $n > 1$.}
We now define regular constraints (i.e. regex matching constraints), which 
check whether a word belongs to a boolean combination of regular languages. 

\begin{definition}[Regular constraints] An {\em atomic regular
    constraint} over $\Sigma$ is an expression of the form $P(x)$, for
  $P$ a regular language over $\Sigma$ given as an NFA and $x$ a
  variable. 
A \emph{regular constraint} over $\Sigma$ is a boolean combination of atomic 
regular constraints over $\Sigma$ defined by the following grammar:
\[
    \varphi \ ::= \ P(x)\ |\ \varphi \wedge \varphi\ |\ \varphi \vee \varphi\ 
        |\ \neg \varphi 
\]
\end{definition} 

We finally define the class of string constraints by taking the
conjunction of relational and regular constraints. 

\begin{definition}[String constraints]  
A \defn{string constraint} over finite alphabet 
$\Sigma$ is a conjunction of a relational
constraint and a regular constraint over $\Sigma$. 
\end{definition}

String constraints allow us to express general
    \defn{word equations} (i.e., $x_1\ldots x_n = y_1\ldots y_n$ for 
not necessarily distinct variables) by a conjunction of
$y = x_1\ldots x_n$ and $y = y_1\ldots y_n$. 
In addition, when the word equation asserts that one of the
$x_i$'s or $y_j$'s is the constant string $w \in \ialphabet^*$, we simply add a
regular constraint that enforces the variable to belong to the language 
$\{w\}$. 

An \defn{assignment} for a string 
constraint $\varphi$ over $\Sigma$ is simply a mapping $\iota$ from
the set of string variables mentioned in $\varphi$ to $\Sigma^*$.
 It \defn{satisfies} $\varphi$ if the constraint
$\varphi$ becomes true under the substitution of each variable $x$ by $\iota(x)$.
We formalise this for atomic relational and regular constraints (boolean
connectives are standard):
\begin{enumerate}
\item $\iota$ satisfies the relational constraint   
 $y = x_1 \cdots x_n$, for string variables $y,x_1,\dots,x_n$, if and
 only if $\iota(y) = \iota(x_1) \cdots \iota(x_n)$. 
\item $\iota$ satisfies the relational constraint   
 $y = w$, for string variable $y$ and word $w \in \Sigma^*$, if and
 only if $\iota(y) = w$. 
\item $\iota$ satisfies the relational constraint $R(x,y)$, for 
    a rational relation $R$, if and
  only if the pair $(\iota(x),\iota(y))$ belongs to $R$. 
\item $\iota$ satisfies the atomic regular constraint $P(x)$, for $P$
  a regular language, if and only if $\iota(x) \in P$.
\end{enumerate} 

A satisfying assignment for $\varphi$ is also called a \defn{solution} for 
$\varphi$. If $\varphi$ has a solution, then it is said to be 
\defn{satisfiable}.

\begin{example}
    \em
    In Introduction, we have expressed the program logic of the script
    in Example \ref{ex:cacm} and the attack pattern as a conjunction of
    four atomic relational constraints and one regular constraint. \qed
\end{example}

\OMIT{
\begin{remark}
A word of caution is necessary here: it is important to remember that $R$
is a relation, \emph{not} a function in general. So, although writing
$y = R(x)$ is more natural, the more proper way of writing it is 
$R(x,y)$. We should do so when we want to emphasise that $R$ is a relation.
\end{remark}
}

\subsection{{\bf The Satisfiability Problem}}

The definition of the problem is given as follows.

 \begin{center}
\fbox{\begin{tabular}{ll}
{\small PROBLEM} : & \sat. \\{\small INPUT} : & A string constraint $\varphi$
over $\Sigma$. \\
{\small QUESTION} : & Is $\varphi$ satisfiable?  
\end{tabular}}
\end{center}

The generality of the constraint language makes it undecidable, even
in very simple cases.

\begin{proposition}
\sat\ is undecidable. 
\end{proposition}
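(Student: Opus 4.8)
The plan is to reduce a known undecidable problem to \sat. Two natural candidates are immediately available from the discussion in the paper itself: the halting problem via Post's Correspondence Problem (PCP), or the result cited in the introduction from \cite{BFL13} that constraints of the form $x = y\cdot z \wedge x = R(z)$ are undecidable even when $R$ is restricted to a synchronised rational (automatic) relation. The cleanest route is arguably a direct reduction from PCP, since it only uses the raw ingredients of the language (concatenation, one rational relation, and regular constraints) and makes the ``even in very simple cases'' remark concrete.

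First I would recall PCP: given pairs $(u_1,v_1),\dots,(u_k,v_k)$ of nonempty words over some alphabet, decide whether there is a nonempty sequence of indices $i_1,\dots,i_m$ with $u_{i_1}\cdots u_{i_m} = v_{i_1}\cdots v_{i_m}$. The key encoding trick is to build a single string that simultaneously records the chosen index sequence and both concatenations, and then use a rational relation to enforce that the $u$-part matches the $v$-part. Concretely, over an extended alphabet with fresh index letters $a_1,\dots,a_k$, a solution corresponds to a word of the form $a_{i_1} u_{i_1} a_{i_2} u_{i_2} \cdots$; I would use a regular constraint $P(x)$ on a variable $x$ to force $x$ to be a well-formed interleaving of index letters and their $u$-blocks (this is a regular language, recognised by an NFA that after reading $a_i$ must read exactly $u_i$), another regular constraint $P'(y)$ to force $y$ to be the analogous interleaving with $v$-blocks, and a rational relation $R(x,y)$ that checks $x$ and $y$ carry the \emph{same} index subsequence (i.e. erasing the $u$-letters from $x$ and the $v$-letters from $y$ yields identical index strings) — this ``erase certain letters and compare'' relation is rational (in fact it is a composition/product of the erasing transducers from Example~\ref{ex:erase} with an equality check, and projecting onto index letters on each side is realisable by a transducer). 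Finally a word equation $x' = y'$ where $x'$ is obtained from $x$ by erasing index letters and $y'$ from $y$ by erasing index letters — again via rational relations $R_u(x,x')$, $R_v(y,y')$ of the erase-type — together with a regular constraint forcing $x$ (hence the index sequence) nonempty, makes the whole constraint satisfiable iff the PCP instance has a solution.

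The steps in order: (1) fix alphabets and state the PCP-to-\sat{} reduction target; (2) define the variables $x,y,x',y'$ and the fresh index letters; (3) write down the two regular constraints pinning $x$ and $y$ to well-formed index/block interleavings, and argue each is a regular language with an explicitly constructible NFA; (4) write down the rational relations: one ($R$) equating the index subsequences of $x$ and $y$, and the erasing relations $R_u(x,x')$, $R_v(y,y')$ that strip index letters, each constructible as a transducer (cf.\ Example~\ref{ex:erase}); (5) add the word equation $x' = y'$ (expressible in the language as noted right after the String constraints definition) and a nonemptiness regular constraint; (6) prove the equivalence: a satisfying assignment yields a common index sequence whose $u$- and $v$-concatenations coincide, and conversely any PCP solution induces a satisfying assignment. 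Then invoke undecidability of PCP.

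The main obstacle — and the only place that needs care — is getting the rational relation for ``$x$ and $y$ have the same index subsequence'' exactly right while staying within the rational (transducer) class: a transducer must read $x$ and $y$ on its two tapes, skip over $u$-blocks on tape one and $v$-blocks on tape two at its own pace, and match index letters when both heads are at index letters. Because a transducer may move its heads independently (it need not be synchronised), this is fine: it is essentially a product of the erasing transducer of Example~\ref{ex:erase} applied to each tape composed with the identity on the remaining index alphabet, and rational relations compose. I would double-check that I am not secretly requiring a non-rational relation (e.g.\ something needing unbounded counting), but since all the heavy lifting — ``$u_i$ follows $a_i$'' — is pushed into the \emph{regular} constraints on $x$ and $y$ individually, the relational part only ever has to compare two already-validated strings letter-for-letter on a sub-alphabet, which is plainly rational. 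An alternative, even shorter, proof would be to cite \cite{BFL13} directly: the constraint $x = y\cdot z \wedge R(z,x)$ is already a string constraint in our sense (one concatenation atom plus one relational atom), and it is undecidable; I would mention this as the one-line proof and keep the PCP argument as the self-contained version.
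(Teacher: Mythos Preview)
Your reduction is correct, but it is considerably more elaborate than what the paper does. The paper's entire argument is a single observation: the atomic constraint $R(x,x)$ is already a string constraint in the language (nothing in the grammar forbids the two arguments of $R(\cdot,\cdot)$ from coinciding; and even if one insisted on distinct variables, $x = y \wedge R(x,y)$ does the job), and satisfiability of $R(x,x)$ for rational $R$ is a well-known undecidable problem via PCP (the paper just cites Morvan). No concatenation, no auxiliary erasing transducers, no regular constraints are needed.

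Your construction---encoding the index sequence into $x$ and $y$ via regular constraints, using a rational relation to equate the index projections, erasing transducers to extract the $u$- and $v$-parts, and a word equation $x' = y'$---is sound (the ``same index subsequence'' relation is indeed rational, as you argue), but it spreads across several variables and constraint types what can be packed into a single transducer atom. In effect you are re-deriving the classical PCP-to-rational-transducer encoding by hand and then gluing its pieces together in the logic, whereas the paper simply invokes that encoding as a black box inside one $R$. Your ``one-line'' alternative via \cite{BFL13} (the constraint $x = y \cdot z \wedge R(z,x)$) is also valid but still heavier than necessary: it uses concatenation, which the paper's argument avoids entirely. The takeaway is that the undecidability here comes purely from allowing a rational relation to relate a variable to itself; concatenation plays no role in this particular proposition.
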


This is because checking satisfiability of constraints of the form
$R(x,x)$ is already undecidable by a simple reduction from the
\emph{Post Correspondence Problem} (PCP), e.g., see \cite[Proof of
Proposition 2.14]{Morvan00}. For this reason, an \emph{acyclicity}
constraint is often imposed (e.g. see \cite{BFL13,BG07,BG08}) to
obtain decidability for formulas that are conjunctions of constraints
of the form $P(x)$ or $R(x,y)$, where $P$ is a regular language and
$R$ is a rational constraint. This condition is defined as follows. Let $\varphi$ be a
formula of the form above and $\GraphR(\varphi)$ the 
undirected graph whose nodes are the variables in $\varphi$ and there is
an edge $\{x,y\}$ if $R(x,y)$ is a constraint in $\varphi$. 
Further, let $\sf AC$ be the class of those formulas $\varphi$ such
that $\GraphR(\varphi)$ is acyclic.
Then: 

\begin{proposition}[\cite{BFL13}]
    Checking satisfiability of formulas in $\sf AC$ is 
$\PSPACE$-complete. In fact, if a formula $\varphi \in \AC$ is satisfiable,
then it has a solution of size at most exponential in $|\varphi|$.
    \label{prop:BFL-PSPACE}
\end{proposition}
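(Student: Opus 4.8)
The plan is to prove Proposition~\ref{prop:BFL-PSPACE} in two parts: a $\PSPACE$ upper bound together with the exponential-size-solution bound, and a matching $\PSPACE$ lower bound. Since the statement is attributed to \cite{BFL13}, the argument is essentially a reconstruction of the known proof, so I would aim for a clean self-contained exposition rather than novelty.

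For the upper bound, the key structural observation is that the dependency graph $\GraphR(\varphi)$ of a formula $\varphi \in \AC$ is a forest, so I would process it tree by tree. For each variable $x$, collect the regular constraints $P(x)$ appearing in $\varphi$; since regular languages are effectively closed under Boolean operations, the regular constraint part of $\varphi$ restricted to $x$ is equivalent to a single NFA $A_x$ (of size at most exponential in $|\varphi|$, coming from the complementations). Now root each tree of $\GraphR(\varphi)$ arbitrarily and process leaves upward: if $x$ is a leaf with parent $y$ connected by a constraint $R(x,y)$, then the set of values of $y$ that are consistent with \emph{some} value of $x$ satisfying $A_x$ is exactly $R(L(A_x))$, the post-image of $L(A_x)$ under the rational relation $R$, which is again regular (closure property (2) of rational relations stated in the excerpt). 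So I would replace the constraint on $y$ by the intersection of $A_y$ with an automaton for $R(L(A_x))$ and delete $x$. Iterating, the whole formula reduces to a nonempty-ness check for one NFA per tree; $\varphi$ is satisfiable iff all of them are nonempty. The crucial point for complexity is that this is done \emph{on the fly} in $\PSPACE$: one does not build the exponential-size product automata explicitly but instead guesses an accepting run symbol by symbol, using the standard $\PSPACE$ algorithm for nonemptiness of a succinctly presented automaton (a polynomial-size configuration being a tuple of states, one per automaton/transducer involved, together with the Boolean-combination bookkeeping). Reading off a witness along the accepting run, its length is polynomial in the number of states, i.e. at most exponential in $|\varphi|$ because the complemented regular constraints may have exponentially many states; this gives the claimed exponential bound on solution size.

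For the lower bound, I would reduce from a known $\PSPACE$-complete problem such as emptiness of intersection of a list of DFAs, or alternatively directly from the $\PSPACE$-hardness of satisfiability of word equations with regular constraints (Plandowski, cited in the excerpt) restricted to an acyclic instance --- in fact even satisfiability of a conjunction of regular constraints $\bigwedge_i P_i(x)$ on a single variable $x$, which is the DFA-intersection-emptiness problem, already lies in $\AC$ (its graph has one node and no edges) and is $\PSPACE$-hard. That immediately yields $\PSPACE$-hardness of checking satisfiability of formulas in $\AC$.

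The main obstacle, and the place where care is needed, is the complexity bookkeeping in the upper bound: a naive implementation that materialises $R(L(A_x))$ as an explicit automaton and then intersects would blow up the state space geometrically along the height of the tree, giving only an exponential-space algorithm. The fix --- and the heart of the argument --- is to observe that a single configuration of the composed construction is a tuple with one component per atomic constraint (a state of each $A_x$ or of each transducer $R$), hence of polynomial size, and that the transition relation of this composed machine is checkable in polynomial time; $\PSPACE = \mathsf{NPSPACE}$ (Savitch) then lets us nondeterministically guess the accepting run. I would also need to be slightly careful that the acyclicity is genuinely used: if $\GraphR(\varphi)$ had a cycle, eliminating variables in this leaf-stripping fashion would fail (this is exactly where the undecidable case $R(x,x)$ intervenes), so the forest structure is what makes the elimination terminate with each variable processed once. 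A final minor point is that transducers here are general rational transducers, not synchronised ones, but the only closure property used --- regularity of the post-image of a regular language --- holds for all rational relations, so no restriction to the synchronised subclass is needed.
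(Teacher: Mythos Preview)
Your reconstruction is correct and matches what the paper does: the paper does not give its own proof of this proposition but simply cites \cite{BFL13} (Theorem~6.7 there, the ``generalised intersection problem with acyclic queries''), remarking only that decidability follows from Nivat's closure of regular languages under rational pre/post-images together with the complexity analysis of \cite{BG07,BG08}; your leaf-stripping over the forest plus on-the-fly $\PSPACE$ guessing of a tuple of states is exactly this argument made explicit, and your lower bound via NFA-intersection emptiness is standard. One small inaccuracy: as defined here, $\AC$ consists only of \emph{conjunctions} of atoms $P(x)$ and $R(x,y)$, with no negation, so the exponential blowup in the implicit product automaton comes from intersecting several $P_i(x)$ on the same variable and from iterated image computations along the tree, not from complementation---this does not affect your argument, only the stated source of the blowup.
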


The authors of \cite{BFL13} refer to this problem as \emph{the generalised 
intersection problem with acyclic queries} (see Theorem 6.7 in \cite{BFL13}). 
[Decidability (in fact, in exponential time) of such a
restriction already follows from 
the classic result by Nivat in the study of rational
relations (e.g. see the textbook \cite{Berstel}) that the pre/post
images of regular languages under a rational transducer is effectively
regular and the complexity analysis in \cite{BG07,BG08}.]

\OMIT{
\begin{remark} \label{rm:nivat} 
Decidability (in fact, in exponential time) of such a
restriction already follows from an old result. It can be obtained by
first using the classic result by Nivat in the study of rational
relations (e.g. see the textbook \cite{Berstel}) that the pre/post
images of regular languages under a rational transducer is effectively
regular. In fact, Nivat's construction outputs an NFA for the language
$\{ x \in \Sigma^* : \exists y( R(x,y) \wedge P(x) \}$, for a
transducer $R$ and an NFA $P$, in time $O(|R|\cdot |P|)$ (see, e.g.,
\cite{BG07,BG08} for more details). Next, one applies the product
automata construction to construct an NFA recognising the intersection
of regular languages at each point of branching in $\GraphR(\varphi)$.
This NFA is of exponential size, and thus can be checked for non-emptiness in 
exponential time. 
\end{remark}
}

Unfortunately, the positive result in Proposition
\ref{prop:BFL-PSPACE} 
cannot be easily extended to our
constraint language, as taking the conjunction 
of a formula in $\sf AC$ (in particular, of a single rational
constraint $R(x,y)$) with the very simple word
equation $x = y$ turns the satisfiability problem undecidable. This is because
the undecidable constraint $R(x,x)$ can be expressed as: $x = y \, \wedge \, 
R(x,y).$
Restricting $R$ to synchronised rational relations 
does not help either: satisfiability of string constraints of the form
    $x = yz \, \wedge \, R(x,z),$
where $R$ is a synchronised rational relation, is undecidable
\cite{BFL13}. 

Another option is to restrict the use of finite-state transducers to
the \texttt{replace-all} operators. As we have argued in Example
\ref{ex:erase} this might be sufficient to model some sanitisation 
functions that arise in practice. In fact, some string constraint languages
that have been proposed in the literature (e.g. see
\cite{S3,fmsd14}) permit the use of the \texttt{replace-all} operator, but
not finite-state transductions in general. It turns out that this restriction
is still undecidable even for the very restricted use of \texttt{replace-all}
of the form $\replace{x}{\epsilon/A}$ (defined in Example \ref{ex:erase}) which
erase all occurrences of characters $a \in A$ in $x$. The proof (see
\shortlong{full version}{Appendix}) is via a simple but tedious reduction
from PCP.
\OMIT{
Moreover, we prove next that an undecidability
result also holds when we restrict rational relations to the \texttt{replace-all} 
operation that simply erases certain characters (as in the program in Figure 
\ref{prog:undec}).
}

\begin{proposition}
    Checking satisfiability of a constraint of the form $x = yz \wedge
\varphi$, where $\varphi$ is a formula in $\sf AC$ that only mentions 
transducers $R(x,y)$ of the form $\replace{x}{\epsilon/A}$, 
    is undecidable.
    %
    \label{prop:undec-prog}
\end{proposition}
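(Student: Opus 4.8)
The plan is to reduce from Post's Correspondence Problem (PCP). Recall that an instance of PCP is a finite list of pairs $(u_1,v_1),\ldots,(u_k,v_k)$ of nonempty words over some alphabet, and a solution is a nonempty sequence $i_1,\ldots,i_m$ of indices with $u_{i_1}\cdots u_{i_m} = v_{i_1}\cdots v_{i_m}$. The key idea is to use the word equation $x = yz$ together with an erasing \texttt{replace-all} to implement the ``matching'' of the two concatenations, and to use a bounded number of regular and erasing constraints (whose underlying graph $\GraphR$ stays acyclic) to enforce the combinatorial structure of a PCP solution.

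First I would fix an encoding. Working over an enlarged alphabet, I would represent a candidate sequence of indices together with the induced $u$- and $v$-blocks as a single string, using fresh separator and tag symbols that mark which letters belong to the $u$-side, which to the $v$-side, and where index-blocks begin and end. The string $z$ would be designed to encode, roughly, an interleaving of the $u$-words and the $v$-words dictated by a common index sequence; using regular constraints $P(z)$ one can force $z$ to be a well-formed sequence of such tagged blocks drawn from the finite pool $\{(u_i,v_i)\}$, since ``is a concatenation of blocks from a finite set, correctly tagged'' is a regular property. Then I would introduce two more variables $y_u$ and $y_v$ obtained from $z$ by erasing, respectively, all $v$-tagged letters and separators (leaving the bare $u$-concatenation) and all $u$-tagged letters and separators (leaving the bare $v$-concatenation); each of these is an instance of $\replace{z}{\epsilon/A}$ for an appropriate set $A$ of characters. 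Finally the word equation is used to assert $y_u = y_v$ — concretely one writes $w = y_u$, $w = y_v$ as relational constraints, which in the paper's language is exactly a word equation, and one can fold this into the required shape $x = yz \wedge \varphi$ by a small amount of padding (e.g. taking $x := y_u$, choosing the split $y,z$ trivially, or reusing the erasing transducers so that the only genuine concatenation constraint is $x = yz$). The nonemptiness requirement of a PCP solution is imposed by a regular constraint forcing $z$ to contain at least one block.

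The main point to check is that the transducer constraints all have the special erasing form $\replace{\cdot}{\epsilon/A}$ and that the graph $\GraphR(\varphi)$ restricted to those constraints is acyclic: since the erasing constraints only ever go from $z$ to the two fresh variables $y_u,y_v$ (and, if needed, from $x$ to $z$), $\GraphR$ is a tree, so $\varphi$ minus the single equation $x=yz$ lies in $\sf AC$. Correctness is then a routine verification: any PCP solution yields a well-formed tagged $z$ whose two erasures coincide, satisfying all constraints; conversely, any satisfying assignment gives a $z$ that, by the regular constraints, parses into tagged blocks from the pool, hence determines an index sequence, and the equation $y_u = y_v$ forces the two induced concatenations to agree, i.e. a genuine PCP solution, which is nonempty by the extra regular constraint.

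The hard part will be getting the tagging discipline exactly right so that ``erase the $v$-side'' and ``erase the $u$-side'' really do recover the two raw concatenations — one must make sure the separators and index markers are themselves erasable and that no letter of the base PCP alphabet is accidentally removed, which typically forces one to work over disjoint tagged copies $\Sigma_u, \Sigma_v$ of the base alphabet and then erase entire copies, rather than erasing ``by position.'' A secondary nuisance is bending the construction into the literal syntactic form $x = yz \wedge \varphi$ with $\varphi \in \sf AC$ and only erasing transducers, since the natural formulation uses the equation $y_u = y_v$; this is handled by a padding trick or by introducing one extra concatenation variable, but it requires care that the acyclicity of $\GraphR(\varphi)$ is preserved. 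None of these steps is deep — hence the paper's remark that the proof is ``simple but tedious'' — but the bookkeeping of tags is where almost all the work lies.
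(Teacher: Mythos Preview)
Your high-level plan (reduce from PCP, use erasing transducers to project out the two sides, and one concatenation to close the loop) is the right one and matches the paper's. But the concrete encoding you sketch has two genuine gaps that are not just bookkeeping.

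First, the tagged-alphabet idea defeats itself. If $z$ carries both sides over disjoint copies $\Sigma_u,\Sigma_v$ so that character-erasure can separate them, then $y_u\in\Sigma_u^*$ and $y_v\in\Sigma_v^*$ live over disjoint alphabets and can only be equal when both are empty. Conversely, if both sides sit over the same base alphabet inside $z$, then a transducer of the form $\replace{\cdot}{\epsilon/A}$ cannot tell a $u$-letter from a $v$-letter. The paper sidesteps this by using \emph{two} source strings over the \emph{same} mixed alphabet $\Sigma\cup\{1,\dots,n\}$: $x_1\in(\bigcup_i a_i\cdot i)^+$ and $x_4\in(\bigcup_j b_j\cdot j)^+$. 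Erasing the index symbols from each gives two $\Sigma$-strings that can meaningfully be equated (through a shared variable $x_2$), and erasing the $\Sigma$-symbols from each gives two index strings equated through a shared $x_3$.

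Second, your way of asserting $y_u=y_v$ breaks the $\sf AC$ requirement. Writing $w=y_u$ and $w=y_v$ as identity (erase-$\emptyset$) transducer constraints, together with the two erasures from $z$, yields the undirected cycle $z\!-\!y_u\!-\!w\!-\!y_v\!-\!z$ in $\GraphR(\varphi)$. The paper's four erasure constraints form only a path $x_4\!-\!x_2\!-\!x_1\!-\!x_3\!-\!x_5$, and the last needed link $x_4=x_5$ is obtained \emph{outside} $\varphi$ via the single allowed concatenation $x_4 = x_6\cdot x_5$ with the regular constraint $x_6\in\{\epsilon\}$. That trick---using $x=yz$ with $y$ forced to $\epsilon$ to simulate one equality without adding a transducer edge---is exactly the missing idea behind your vague ``padding'' remark, and it only suffices because the other two equalities are already absorbed as shared targets of erasures arranged in a tree.
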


\section{The Straight-Line Fragment}
\label{sec:decidable_string}

In Section \ref{sec:lang}, we have explored various syntactic restrictions of 
the
core constraint language that still permit both concatenation and transducers,
and saw that undecidability still held.
In this section, we will show that the ``straight-line fragment'' of the
language is decidable (in fact, solvable in exponential space). 
This straight-line fragment captures the structure of straight-line 
manipulating programs with concatenations and finite-state transductions as
atomic string operations. Note that straight-line programs naturally arise when 
verifying string-manipulating programs using bounded model checking and
dynamic symbolic executions (e.g. see \cite{S3,KS08,DKW08}),
which
unrolls loops in the programs (up to a given depth) and converts the resulting 
programs into \emph{static single assignment form} (i.e. each variable defined 
only once). For applications to detecting mutation XSS \cite{mXSS}, we will
see that the formula for analysing mutation XSS from Example \ref{ex:cacm} 
in Introduction can be expressed in the straight-line fragment. We will also
see other such examples in this section.


\begin{convention}
In the sequel, we will treat an atomic relational constraint of 
the form $y = w$, for a word $w \in \Sigma^*$, as an atomic regular 
constraint, i.e., simply treat $w$ as regular expression and assert $y \in 
L(w)$. 
\end{convention}

To define the straight-line fragment of the core constraint language, we
first write $R(x,y)$ as $y = R(x)$. This notation is quite natural since
$R$ can be viewed as a string transformation from the input $x$ to the output 
$y$.
[However, a word of caution is necessary: it is important to remember that $R$ 
    is a relation that need not be a function in general.]  
    We also say that a variable $x$ is \defn{a source variable} in the 
    relational 
    constraint $\varphi$ if there is no conjunct in $\varphi$ of the form
    $x = x_1\cdots x_n$ or $x = R(y)$ for some transducer $R$.

\begin{definition}[Straight-line constraints] \label{defn:SL}
    A relational constraint $\varphi$ 
    is said to be \defn{straight-line} if it can be rewritten
    (by reordering the conjuncts) as a relational constraint of the form
    $\bigwedge_{i=1}^m x_i = P_i$ such that:
    \begin{description}
        \item[(SL1)] $x_1,\ldots,x_m$ are different variables
        \item[(SL2)] Each $P_i$ uses only source variables in $\varphi$ or 
            variables from $\{x_1,\ldots,x_{i-1}\}$. 
    \end{description}
    We say that a 
    string constraint is straight-line if it is a conjunction of a straight-line 
relational
     constraint with a regular constraint. Let $\AUWR$ be the set of
    all straight-line string constraints.
\end{definition}

An example of a straight-line string constraint is $y = R(x) \wedge z = yyz'$. Another
example is the constraint from Example \ref{ex:cacm}. Straight-line
restrictions also rule out
all the undecidable constraints from Section \ref{sec:lang}, e.g., formulas of the form
$x = y \wedge R(x,y)$ and $x =yz \wedge R(z,x)$. 

The literal definition of straight-line constraints does not give an
efficient algorithm for checking whether a constraint is straight-line.
This, however, can be done efficiently.

\begin{proposition}
    There is a linear-time algorithm for checking whether a relational
    constraint $\varphi$ can be made a straight-line constraint (by reordering 
    equations) and, if so, outputs the reordered constraint 
    $\bigwedge_{i=1}^m x_i = P_i$ satisfying \textbf{(SL1)} and
    \textbf{(SL2)}.
    \label{prop:linear}
\end{proposition}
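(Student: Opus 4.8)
The plan is to recognize that checking whether a relational constraint $\varphi$ can be reordered into straight-line form is essentially a topological-sorting problem on an appropriately defined directed graph, and that the whole task reduces to a single linear-time pass. First I would build the \emph{dependency graph} $D(\varphi)$: its vertices are the string variables occurring in $\varphi$, and for each conjunct of the form $y = x_1 \circ \cdots \circ x_n$ or $y = R(x)$ we add edges from each of $x_1, \ldots, x_n$ (resp. $x$) to $y$. Note that a variable is a source variable precisely when it has no conjunct "defining" it, i.e. it is never the left-hand side $y$ of any conjunct. A necessary precondition for straight-line form is \textbf{(SL1)}: no variable may be the left-hand side of two distinct conjuncts; this is trivially checkable in linear time by marking each variable the first time it appears as an LHS and flagging a repeat. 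Assuming \textbf{(SL1)} holds, I claim $\varphi$ can be reordered to satisfy \textbf{(SL1)}–\textbf{(SL2)} if and only if $D(\varphi)$ restricted to the non-source (i.e. defined) variables is acyclic, and in that case any topological order of the defined variables, listing $x_i$ before $x_j$ whenever there is a path from $x_i$ to $x_j$, yields a valid ordering $x_1, \ldots, x_m$ of the left-hand sides.

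The correctness argument in both directions is short. For soundness: if the defined variables admit a topological order $x_1, \ldots, x_m$, then by construction every variable appearing on the right-hand side of the conjunct defining $x_i$ is either a source variable or a predecessor of $x_i$ in $D(\varphi)$, hence one of $x_1, \ldots, x_{i-1}$; this is exactly \textbf{(SL2)}, and \textbf{(SL1)} was already guaranteed. For necessity: if $\varphi$ can be reordered as $\bigwedge_{i=1}^m x_i = P_i$ satisfying \textbf{(SL1)} and \textbf{(SL2)}, then \textbf{(SL2)} says every edge of $D(\varphi)$ into a defined variable $x_i$ comes from a source variable or from some $x_j$ with $j < i$; so all edges among defined variables go "forward" in this ordering, which witnesses acyclicity of $D(\varphi)$ on defined variables.

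For the algorithm and its running time, I would run a standard linear-time topological sort (e.g. Kahn's algorithm or a DFS-based postorder) on $D(\varphi)$, but restricted to outputting only the defined variables; if the sort detects a cycle among defined variables, report that $\varphi$ is not straight-line, otherwise emit the conjuncts in the computed order (with each $y = w$ conjunct treated as a regular constraint per the stated convention, and source-variable-only conjuncts placed anywhere consistent with the order). Building $D(\varphi)$ takes time linear in $|\varphi|$ since the total number of variable occurrences across all conjuncts is $O(|\varphi|)$; the \textbf{(SL1)} check and the topological sort are each linear in the size of the graph, hence linear in $|\varphi|$.

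The only mild subtlety — and the point I would be most careful about — is the interaction between \textbf{(SL1)} and the graph condition: a cycle in $D(\varphi)$ could in principle arise either from a genuine cyclic dependency (e.g. $x = R(y) \wedge y = R'(x)$) or be masked by a violation of \textbf{(SL1)}. The clean way to handle this is to check \textbf{(SL1)} first as a separate linear-time filter, so that when we reach the topological sort we already know each defined variable has a unique defining conjunct, and acyclicity of $D(\varphi)$ on the defined variables is then exactly equivalent to reorderability. A secondary point worth a sentence in the writeup is that self-loops (a conjunct $x = R(x)$ or $x = x_1 \cdots x_n$ with some $x_j = x$) are cycles of length one and are correctly rejected by the same test, matching the fact that such constraints — precisely the undecidable ones from Section~\ref{sec:lang} — are not straight-line.
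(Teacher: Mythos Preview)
Your proposal is correct and follows essentially the same approach as the paper: first check unique definitionality (your \textbf{(SL1)} check), then build the dependency graph, test it for acyclicity, and output a topological sort. The only cosmetic difference is that you restrict the acyclicity test to the induced subgraph on defined variables, whereas the paper checks the whole graph $\G(\varphi)$; these are equivalent since source variables have no incoming edges and hence cannot lie on any cycle.
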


The proof of this proposition is standard. Straight-line relational constraints can be 
visualised by drawing a ``dependency
graph'' of the variables in the constraints. Formally, given a relational 
constraint $\varphi$, 
the \defn{dependency graph} $\G(\varphi)$ of $\varphi$ is the \emph{directed} 
graph whose nodes are the string 
variables appearing in $\varphi$ and there is an edge from variable $x$ to $y$ 
iff (a) $\varphi$ contains a conjunct of the form $R(x,y)$, for a rational 
relation $R$, or (b) an equation of the form $y = x_1\ldots x_n$ for some 
string variables $x_1\ldots x_n$ which include $x$. It is easy to see that
straight-line constraints have acyclic dependency graphs. In fact,
the converse is also true provided that the relational constraint $\varphi$ is
\defn{uniquely definitional} (i.e. there are no two conjuncts $x = P$ and 
$x= P'$ in $\varphi$ with the same left-hand side variable). A linear-time algorithm
for Proposition \ref{prop:linear}, then, first checks if the given 
constraint $\varphi$ is uniquely definitional by sequentially going through
each conjunct while maintaining $m$ bits in memory (one for each
variable in the left-hand side of an equation). Once 
unique definitionality has been checked, the algorithm checks whether the
directed
graph $\G(\varphi)$ is acyclic and, if so, output a topological sort, which
is well-known to be solvable in linear-time (e.g. see \cite{Cormen}).
The topological sort corresponds to a reordering of the conjuncts in
$\varphi$ so that \textbf{(SL1)} and \textbf{(SL2)} are both satisfied.
%


\OMIT{
\anthony{Markmark}
Examples of straight

The first
condition that we will require for decidability is ``unique definitionality''.

\begin{definition}[Uniquely definitional] \label{defn:unique} 
    Suppose that $\varphi$ is a string constraint. 
    An \defn{assignment} for a variable $y$ in $\varphi$ is a conjunct of the form
    $y = x_1\ldots x_n$ or of the form $R(x,y)$, for some string variables 
    $y,x,x_1,\ldots,x_n$. A string constraint $\varphi$ is \defn{uniquely
    definitional} if each string variable in $\varphi$ has at most one assignment.
\end{definition}

\anthony{Say that this is just the same as straight-line programs.}
Our definition of an assignment to a string variable is natural from a
programming language standpoint, i.e., a variable $y$ can be defined
by applying concatenation to variables that have been previously
defined (i.e., in the form $y = x_1\ldots x_n$), or by applying a
transducer to a variable that has been defined (i.e., in the form
$R(x,y)$). It is also more natural to write $y = R(x)$ to denote the
constraint $R(x,y)$, so we will often do so in the sequel. [Although a
word of caution is necessary: it is important to remember that $R$ is
a relation that need not be a function in general.]  
Each
atomic constraint is now of the form $y = \psi(x_1,\ldots,x_n)$ (for
some string operation $\psi$), which can be thought of as a definition
of the string variable on the left-hand side of an equation.

Uniquely definitional constraints $\varphi$ such that $\GraphR(\varphi)$ is
acyclic are still undecidable. After all, the
constraint of the form $x = y \wedge y = R(x)$ is uniquely 
definitional and $\GraphR(\varphi)$ is acyclic, but satisfiability for it
is undecidable as we saw in Section \ref{sec:lang}.
For this reason, we will refine the ``acyclicity'' restriction to also include
the string concatenation. For a string constraint $\varphi$, we 
define $\G(\varphi)$ to be the \emph{directed} graph whose nodes are the string 
variables 
which
appear in $\varphi$ and there is an edge from variable $x$ to $y$ if and only if 
(a) $\varphi$ contains a conjunct of the form $R(x,y)$, for a rational relation
$R$, or (b) an equation of the form $y = x_1\ldots x_n$ for some string variables
$x_1\ldots x_n$ which include $x$.

\begin{definition} \label{defn:acyclic}
    A string constraint $\varphi$ is \defn{acyclic} if $G(\varphi)$ is a DAG.
\end{definition} 

Notice that imposing both unique definitionality and acyclicity conditions
rule out undecidable constraints like $x = y \wedge y = R(x)$ and also the
problematic case like in Figure \ref{prog:undec}. We formally define this class below. 

\begin{definition}[The class $\AUWR$] 
We denote by $\AUWR$ the class of constraints which are formed by taking the 
conjunction of a string constraint that is both acyclic and 
uniquely definitional with a regular constraint (i.e., a Boolean
combination of constraints of the form $P(x)$, where $P$ is a regular
language given as an NFA). 
\end{definition} 
}

Our main result states that satisfiability for $\AUWR$ is decidable.

\begin{theorem} \label{th:expspace}  
\sat\ for the class $\AUWR$ is $\EXPSPACE$-complete. 
\end{theorem}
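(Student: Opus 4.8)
The plan is to prove the theorem in two parts: the $\EXPSPACE$ upper bound, and a matching $\EXPSPACE$-hardness lower bound. For the upper bound, the natural approach is a ``backward propagation along the dependency DAG'' argument that generalises Nivat's classical construction (pre-images of regular languages under rational transducers are effectively regular) already invoked in the discussion following Proposition~\ref{prop:BFL-PSPACE}. Given a straight-line constraint $\bigwedge_{i=1}^m x_i = P_i$ together with a regular constraint $\psi$, we first normalise: push the boolean regular constraint into a single NFA $A_x$ for each variable $x$ (using closure of regular languages under boolean operations, at a single exponential blow-up), so that we maintain, for every variable, a regular language of ``still-feasible'' values. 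We then process the equations $x_m = P_m, x_{m-1} = P_{m-1}, \dots, x_1 = P_1$ in reverse topological order. When we hit an equation $x_i = R(y)$, we replace the current constraint language $L_{x_i}$ on $x_i$ by intersecting it with $R(L_y)$ is wrong in direction — rather we intersect $L_y$ with $R^{-1}(L_{x_i})$, using the quadratic-time pre-image construction for rational relations cited in the preliminaries. When we hit $x_i = y_1 \circ \cdots \circ y_n$, we must intersect $L_{x_i}$ with the set of words that admit \emph{some} split into factors lying in $L_{y_1}, \dots, L_{y_n}$, and simultaneously sharpen each $L_{y_j}$; this is handled by a product construction on the NFA for $L_{x_i}$ with guessed ``cut points'', extracting from it the induced constraint on each $y_j$. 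Source variables accumulate all constraints imposed on them. Satisfiability then reduces to checking non-emptiness of the final language attached to each source variable (and consistency along the way). The key point for the complexity bound is that each variable's NFA can grow by at most a polynomial factor per processing step relative to the NFAs of its successors in the DAG, but since the DAG has depth up to $m$, iterating this gives a tower of height $m$ — which is too much. The fix, and the technically delicate step, is to observe that the straight-line shape lets us bound the blow-up: each $L_{x_i}$ is obtained from the \emph{original} input automata by at most a linear number of pre-image / concatenation-splitting operations along a single root-to-leaf path, and a careful accounting (each operation at most squares the size, path length $\le m$, so size $\le$ single-exponential in the input) keeps everything in $\EXPSPACE$. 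Alternatively — and this is probably the cleaner route — one shows a \emph{small-model property}: if the constraint is satisfiable, there is a solution in which every string variable has length at most single-exponential in $|\varphi|$; then one guesses such a solution nondeterministically and verifies it in exponential space, invoking Savitch to eliminate nondeterminism. The small-model property itself is proved by the same backward-propagation argument, reading off from the final non-empty product automaton a shortest accepting run and pulling the induced short witnesses back through the transducers and concatenations.

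For the lower bound, the plan is to reduce from a known $\EXPSPACE$-complete problem. The natural candidate is the intersection-nonemptiness / membership problem for exponentially many automata, or more directly the acceptance problem for a fixed $\EXPSPACE$ Turing machine on input $w$, encoded via a tiling-style argument: a computation of length $2^{p(|w|)}$ on a tape of the same width is encoded as a single long string, and transducers are used to enforce that consecutive configurations are related by the transition function (a transducer can implement the local ``next-configuration'' check), while concatenation is used to build up, in a straight-line fashion, a string of doubly-exponential length from a polynomial description — crucially, the straight-line restriction still permits ``doubling'' chains $x_{k+1} = x_k \circ x_k$, which is exactly what yields the exponential space requirement. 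One must check that the encoding stays within the straight-line fragment (the dependency graph is a DAG — indeed essentially a line with fan-in), which it does by construction. A convenient alternative is to cite that $\EXPSPACE$-hardness already follows from hardness of the intersection problem for the $\sf AC$ fragment augmented with the concatenation-doubling trick, but a self-contained tiling reduction is more transparent.

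I expect the main obstacle to be the upper-bound complexity accounting: making precise why the backward propagation does not produce a non-elementary tower. The delicate point is that although the dependency DAG may be deep, each individual variable's constraint automaton is built by composing operations along paths, and one needs the invariant that the size of the automaton for $x_i$ is at most $2^{q(|\varphi|)}$ \emph{uniformly}, not $2^{\uparrow i}$. The honest way to get this is the small-model property: prove directly that a satisfiable straight-line constraint has a solution whose total size is single-exponential, by induction along the topological order, bounding $|\iota(x_i)|$ in terms of the (already bounded) lengths of the variables it depends on, using that a transducer of size $t$ maps a word of length $\ell$ to \emph{some} output of length $\le t \cdot \ell$ (choosing a shortest output on an accepting run), and that a concatenation node just sums the lengths of its arguments. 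Since each step multiplies by at most the transducer size and the depth is $\le m$, the length bound is $\big(\max_i |R_i|\big)^{m} \cdot (\text{initial regular-constraint length})$, which is single-exponential in $|\varphi|$; one then also needs each chosen string to satisfy its regular constraint, which is arranged by running the pre-image construction first to prune infeasible values before picking shortest witnesses. This length bound is exactly the ``maximum size of solutions'' promised in the introduction, and it immediately gives the $\mathsf{NEXPSPACE} = \EXPSPACE$ membership.
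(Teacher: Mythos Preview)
Your upper-bound argument contains a genuine gap: the small-model property you claim is false. You assert that a satisfiable straight-line constraint has a solution in which every string has \emph{single}-exponential length, but in fact solutions can require \emph{double}-exponential length (the paper states exactly this as Theorem~\ref{theo:smp}, and the $\EXPSPACE$ lower bound forces it). Concretely, take the chain $x_1 = x_0 \circ x_0,\ x_2 = x_1 \circ x_1,\ \ldots,\ x_m = x_{m-1} \circ x_{m-1}$ together with a regular constraint on $x_m$. Your forward length bound correctly gives $|x_m| = 2^m |x_0|$, but the question is how short $x_0$ can be made while still satisfying all downstream constraints. Propagating the constraint on $x_m$ back to $x_0$ yields the language $\{w : w^{2^m} \in L\}$; each backward step through $y = x \circ x$ raises the NFA state count to a fixed polynomial of itself (e.g.\ from $k$ to $k^3$), so after $m$ steps the constraint on $x_0$ has $k^{c^m}$ states and its shortest word may have that order of length. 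Your sentence ``each operation at most squares the size, path length $\le m$, so size $\le$ single-exponential'' is precisely where the arithmetic slips: squaring $m$ times gives $k^{2^m}$, which is double-exponential in $|\varphi|$. The lower-bound construction exploits exactly this: the source variable $x$ there encodes a full computation history of a $2^{cn}$-space machine, which has length $2^{2^{\Theta(n)}}$, and $|\varphi(w)|$ is polynomial in $n$.

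Consequently your ``guess a single-exponential solution and verify'' route to $\EXPSPACE$ does not go through. The backward-propagation idea can be salvaged, but only by arguing that the double-exponential product NFA has states that are tuples of exponential length over the original state sets, so that reachability can be done on the fly in nondeterministic exponential space; you did not make this argument. The paper takes a different route: it eliminates concatenation by introducing, for each variable $y$, a vector of ``segment'' variables $y(1),\ldots,y(\max_y)$ (whose number can be exponential), and splits every transducer and every regular constraint along guessed intermediate states. The result is a concatenation-free formula of at most exponential size lying in the acyclic fragment $\sf AC$ of Proposition~\ref{prop:BFL-PSPACE}, whose $\PSPACE$ algorithm then yields $\EXPSPACE$ overall. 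This decomposition is what cleanly isolates the source of the blow-up (the dimension $\max_y$) and is also what gives the $\PSPACE$ bound of Theorem~\ref{th:pspace} when the dimension is bounded; your backward-propagation sketch does not make that parameter visible.

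Your lower-bound plan is essentially correct and matches the paper: doubling chains $x_{k+1} = x_k \circ x_k$ together with transducers that project onto address-blocks let one encode acceptance of a $2^{cn}$-space Turing machine inside a straight-line constraint of polynomial size.
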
 

Recall that $\EXPSPACE$ problems require double-exponential time
algorithms in the worst case. 
An important 
corollary of the proof of Theorem \ref{th:expspace} is a bounded model property.

\begin{theorem} \label{theo:smp}
If a string constraint $\varphi$ in $\AUWR$ is satisfiable, then it has a solution with each
word of length at most $2^{2^{p(|\varphi|)}}$, for some
polynomial $p(x)$.
\end{theorem}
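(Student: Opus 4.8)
The plan is to extract the bound from the satisfiability procedure underlying Theorem~\ref{th:expspace}, which I envisage as follows. Rewrite $\varphi$ as $\bigwedge_{i=1}^m x_i = P_i$ in straight-line form (Proposition~\ref{prop:linear}), absorbing each constant equation $x_i = w$ into the regular constraint, and put the regular constraint into disjunctive normal form so that, after fixing one disjunct, every string variable $x$ carries a single NFA $B_x$ (the intersection of the positive and complemented regex atoms about $x$) of size singly exponential in $|\varphi|$. The procedure then propagates these NFAs \emph{backwards} along the topological order $x_m, x_{m-1}, \dots, x_1$, maintaining for each variable $x$ (defined or source) an NFA for the set of values it may take consistently with $B_x$ and with every already-processed conjunct. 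Two operations suffice: when $P_i = R(x_j)$, conjoin to $x_j$'s current NFA the language $R^{-1}(L(\Aut_i))$, which, since the pre-image of a regular language under a rational relation is effectively regular (Section~\ref{sec:prelim}), is an NFA of size $O(|R|\cdot|\Aut_i|)$; when $P_i = x_{j_1}\cdots x_{j_k}$, guess states $q_0,\dots,q_k$ of $\Aut_i$ with $q_0$ initial and $q_k$ final and conjoin to each $x_{j_\ell}$ the language $L\big((\Aut_i)_{[q_{\ell-1},q_\ell]}\big)$, a sub-automaton of $\Aut_i$. Conjoining is NFA intersection.

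The first thing I would verify is the size bound $|\Aut_i| \le 2^{2^{q(|\varphi|)}}$ for a suitable polynomial $q$. The recurrence for $|\Aut_i|$ --- each conjunct $x_l = P_l$ using $x_i$ contributes a factor at most $|\varphi|\cdot|\Aut_l|$ --- unfolds into a product over a tree of depth at most $m \le |\varphi|$ and out-degree at most $|\varphi|$, hence over at most $2^{O(|\varphi|\log|\varphi|)}$ nodes, each contributing a factor singly exponential in $|\varphi|$; this evaluates to $2^{2^{O(|\varphi|\log|\varphi|)}}$. (This is why the procedure must be run nondeterministically and on the fly over exponential-size configurations rather than materialising the $\Aut_i$; determinisation is by Savitch's theorem, which is what yields the $\EXPSPACE$ bound.)

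Given the bound, I would extract a small solution. By soundness and completeness of the backward propagation, $\varphi$ is satisfiable iff there is a choice of disjunct and of the state sequences in the concatenation steps for which every NFA built is non-empty; fix such a choice. For each source variable $x$, let $\iota(x)$ be a shortest word of its accumulated NFA; since that NFA has at most $2^{2^{q(|\varphi|)}}$ states, $|\iota(x)| < 2^{2^{q(|\varphi|)}}$. Now compute $\iota(x_1),\dots,\iota(x_m)$ in order: for $x_i = x_{j_1}\cdots x_{j_k}$ set $\iota(x_i) = \iota(x_{j_1})\cdots\iota(x_{j_k})$; for $x_i = R(x_j)$, let $\iota(x_i)$ be a shortest word $v$ with $(\iota(x_j),v)\in R$ and $v \in L(\Aut_i)$. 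Such a $v$ exists because $\iota(x_j)$ was forced into $L(R^{-1}(\Aut_i))$, and it lies in $R(\{\iota(x_j)\}) \cap L(\Aut_i)$, recognised by an NFA of size $O(|R|\cdot|\iota(x_j)|\cdot|\Aut_i|)$, so $|\iota(x_i)| = O(|R|\cdot|\iota(x_j)|\cdot 2^{2^{q(|\varphi|)}})$. Along a chain of at most $m \le |\varphi|$ such steps the length grows by a further factor only singly exponential in $|\varphi|$, so every word of the assignment $\iota$ has length at most $2^{2^{p(|\varphi|)}}$ for a polynomial $p$; and $\iota$ satisfies $\varphi$ by construction, since the fixed splits and disjunct guarantee consistency with all concatenation and regex constraints. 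This proves Theorem~\ref{theo:smp}.

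The step I expect to be the main obstacle is the two size analyses. One must show that intersecting the constraints that arrive at a variable from its (possibly singly-exponentially many) distinct dependency paths in $\G(\varphi)$ composes to only a \emph{doubly} --- not triply --- exponential NFA, which requires careful bookkeeping of how transducer pre-images, NFA intersections, and concatenation splits interleave; this is the heart of the matter. One must also confirm that a single transducer application inflates word length only by a factor linear in the transducer's size --- a property of rational relations that fails for stronger models of transduction --- so that the forward extraction stays within a double exponential. The remaining ingredients --- a non-empty $n$-state NFA has a word of length $<n$, and concatenation is length-additive --- are routine.
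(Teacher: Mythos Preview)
Your approach is correct but genuinely different from the paper's. The paper does not perform backward NFA propagation; instead, Step~2 of the upper-bound proof \emph{splits} each variable $y$ into fresh variables $y(1),\dots,y(\max_y)$ (where $\max_y$ is defined inductively and can be exponential), thereby eliminating concatenation entirely and producing a formula $\varphi''$ in the class $\AC$ of acyclic rational constraints. Theorem~\ref{theo:smp} then follows by invoking the single-exponential small-model property of $\AC$ (Proposition~\ref{prop:BFL-PSPACE}) on $\varphi''$, whose size is $O(\Dim(\varphi)\cdot|\varphi|)$ and hence at most singly exponential, and translating the resulting small solution back via Lemma~\ref{lm:correctness-hypothesis} (each $\iota(y)$ is the concatenation of the $\iota'(y(i))$'s). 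Your route is more self-contained --- it does not appeal to the black-box result of \cite{BFL13} --- and makes the doubly-exponential NFA blow-up explicit via the unfolding-tree argument; the paper's route is more modular and immediately yields the refinement to $\PSPACE$ for bounded dimension (Theorems~\ref{th:pspace} and~\ref{th:pspace-bmp}), since then $|\varphi''|$ is polynomial. Your unfolding tree is essentially the same combinatorial object as the paper's $\Dim(\varphi)$, just viewed from the source variables rather than the sinks.

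One imprecision to flag: in your forward-extraction analysis, the per-step blow-up for a transducer step is $O(|R|\cdot|\Aut_i|)$, which is already doubly exponential, so ``a further factor only singly exponential'' is not right. The conclusion survives, however: over $m\le|\varphi|$ steps the accumulated factor is at most $\big(2^{2^{q(|\varphi|)}}\big)^{m} = 2^{m\cdot 2^{q(|\varphi|)}} = 2^{2^{q(|\varphi|)+\log m}}$, and $q(|\varphi|)+\log m$ is still polynomial in $|\varphi|$, so the final bound $2^{2^{p(|\varphi|)}}$ stands.
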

\anthonychanged{
In Theorem \ref{th:pspace} and Theorem \ref{th:pspace-bmp} below, we identify a 
natural restriction of $\AUWR$ that yields an upper bound for satisfiability 
--- $\PSPACE$ (i.e., a single-exponential time) and satisfying models of size
at most single-exponential --- and seems sufficiently
expressive in practice (e.g. it subsumes all our examples)}. 

\begin{remark}
    \anthonychanged{
    The reader might be wondering whether
    Theorem \ref{th:expspace} and Theorem \ref{theo:smp} immediately follow from
    Proposition \ref{prop:BFL-PSPACE}? This is not the case
    since $\AC$ does not permit equalities and concatenations (therefore, 
    expressions of the form $x = y.z$ cannot be expressed). 
    Similarly, the theorems also do not immediately follow from the
    effective closure of regular languages under (1) pre/post images under
    rational relations (see discussion below
    Proposition \ref{prop:BFL-PSPACE}), and (2) concatenation. For example,
    consider the constraint $x = yy \wedge y \in L(a^*+b^*) \wedge x \in L(ab)$
    over the alphabet $\ialphabet = \{a,b\}$. This is unsatisfiable. 
    Instead, naively applying the two aforementioned closure, we would deduce 
    that $x$ matches $(a^*+b^*).(a^*+b^*)$, which can be matched by $ab$, i.e.,
    a false positive. As we shall see later, multiple occurrences of variables 
    in an assignment yield constraints with ``dimensions'' $> 1$ (definition
    below), instances of which include all the mutation XSS examples in the 
    paper.
}
\end{remark}




\subsection{Application to Detecting Mutation XSS}
Before proving Theorems \ref{th:expspace} and Theorem \ref{theo:smp}, 
we will mention that the logic
$\AUWR$ is sufficiently powerful for expressing string constraints that arise
from analysis of mutation XSS vulnerability in web applications. By Theorem
\ref{th:expspace}, such a vulnerability analysis can be performed automatically.
We have seen one such example (i.e. Example \ref{ex:cacm}).
%
We will now provide other examples of mutation XSS attacks
that can be expressed within the framework of $\AUWR$.
\begin{example}
    \em
    We have seen that the script in Example \ref{ex:cacm} contains an XSS
    bug. As suggested in \cite{Kern14}, the corrected version of the script
    swaps the \emph{order} of the sanitisation functions \texttt{escapeString} 
    and \texttt{htmlEscape} resulting in the following script:
\begin{flushleft}
    \footnotesize
    \begin{verbatim}
var x = goog.string.escapeString(cat);
var y = goog.string.htmlEscape(x);
catElem.innerHTML = '<button onclick=
    "createCatList(\'' + y + '\')">' + x + '</button>';
    \end{verbatim}
\end{flushleft}
This script is no longer vulnerable to the attack pattern \texttt{e1} provided
in Example \ref{ex:cacm}. The program logic of the corrected script can be 
expressed in $\AUWR$ using the same formula as for Example \ref{ex:cacm} (see
Introduction) except that the transducers $R_1$ and $R_2$ are swapped. The
algorithm from Theorem \ref{th:expspace} will be able to automatically
point out that the script is secure against the attack pattern \texttt{e1}.
\qed
\label{ex:corrected}
\end{example}

\begin{example}
    \label{ex:mxss1}
    \em
    This example is an adaptation of vulnerable code patterns from
    \cite[Listing 1.12]{mXSS} applied to the previous example. After the 
    JavaScript from Example \ref{ex:cacm}
    has been corrected in Example \ref{ex:corrected}, suppose now that a 
    programmer wishes to introduce a new ``title'' HTML element into
    the HTML document 
    in which the new catalogue category name will be
    displayed. The following code snippet contains two lines in this HTML file:
\begin{flushleft}
    \footnotesize
    \begin{verbatim}
<h1>New catalogue category: 
    <span id="node1">TBA</span></h1>
<div id="node2"></div>
    \end{verbatim}
\end{flushleft}
    The following JavaScript code snippet is a modification of the JavaScript 
    from
    Example \ref{ex:corrected} which additionally puts the new catalogue 
    category name in the title (i.e. ID \texttt{node1}):
\begin{flushleft}
    \footnotesize
    \begin{verbatim}
var titleElem = document.getElementById("node1");
var catElem = document.getElementById("node2");

var x = goog.string.escapeString(cat);
titleElem.innerHTML = x;
var y = goog.string.htmlEscape(titleElem.innerHTML);
catElem.innerHTML = '<button onclick=
    "createCatList(\'' + y + '\')">' + x + '</button>';
    \end{verbatim}
\end{flushleft}
    This JavaScript now contains a subtle mXSS vulnerability. Consider
    the value \verb+cat = '&#39;);alert(1);//'+. The value \verb+x+
    will be the same as \verb+cat+ since the metacharacters \verb+'+ and
    \verb+"+ do not occur in \verb+cat+. The value of 
    \texttt{titleElem.innerHTML} is, however, \verb+');alert(1);//+ since
    an implicit browser transduction occurs upon writing into the DOM via
    \texttt{innerHTML}. HTMLescaping \verb+');alert(1);//+ results in the 
    string \verb+&#39;);alert(1);//+, which is the value of \verb+y+.
    Another implicit browser transduction takes place when assigning a
    value to \texttt{catElem.innerHTML}.  The DOM element \texttt{catElem} now 
    contains the HTML markup:
\begin{flushleft}
    \footnotesize
    \begin{verbatim}
<button onclick="createCatList('');alert(1);//')">
    ');alert(1);//')</button>
    \end{verbatim}
\end{flushleft}
    Upon clicking the button, the browser executes the attacker's
    script \texttt{alert(1)} after \verb+createCatList+ has been invoked.
    The XSS bug is due to the programmer's wrong assumption that
    the value of \texttt{titleElem.innerHTML} is the same as
    \texttt{x} after the assignment. 

    Let us now encode the program logic of the above JavaScript 
    as a straight-line string constraint. Let \texttt{e1} be the attack
    pattern from Example \ref{ex:cacm}. As in Example \ref{ex:cacm},
    let $R_1$ and $R_2$ be transducers implementing \texttt{htmlEscape}
    and \texttt{escapeString}, and let $R_3$ be the transducer
    implementing the implicit browser transductions upon \texttt{innerHTML}
    assignments. The desired formula can now be written as a conjunction of
    \begin{itemize}
    \item $x = R_2(\text{\texttt{cat}})$
    \item $\text{\texttt{titleElem.innerHTML}} = R_3(x)$
    \item $y = R_1(\text{\texttt{titleElem.innerHTML}})$
    \item $z = w_1 \cdot y \cdot w_2 \cdot x \cdot
        w_3$ for some constant strings $w_1, w_2, w_3$
    \item $\text{\texttt{catElem.innerHTML}} = R_3(z)$
    \item $\text{\texttt{catElem.innerHTML}}$ matches \text{\texttt{e1}}.
    \end{itemize}
    Observe that this is a straight-line string constraint. Therefore,
    the algorithm from Theorem \ref{th:expspace} will be able to
    automatically detect a vulnerability against the attack pattern
    \texttt{e1}.
    \qed
\end{example}

In the \shortlong{full version}{appendix}, we will provide another example of 
mXSS bug from adapted from \cite{Stock14} and show how it can be analysed 
within the framework of $\AUWR$. 

\OMIT{
For the purpose of vulnerability detection, many researchers in string
analysis have proposed to use techniques including dynamic symbolic
executions and bounded model checking 
\cite{KPV03,RVG12,S3,Berkeley-JavaScript,Abdulla14}). 
Both approaches systematically explore finite \emph{symbolic executions} of 
a given program, each of whose nodes is annotated by a program statement 
(e.g. $x = yz$) instead of concrete program states (e.g. $x = 
\text{\texttt{<script></script>}}, y = \text{\texttt{<script>}}, 
z = \text{\texttt{</script>}}$). 
From the symbolic executions they generate constraints (in some logical
theories), which will have to be evaluated by SMT-solvers (more
generally, decision procedures for the theories). In particular, such constraints 
are often quite restricted since finite symbolic executions are first converted into
an SSA form (see Figure \ref{prog:undec}), which avoids redefining variables by 
introducing new variables. Our restriction naturally corresponds to loopless
programs in an SSA form where: (1) an assignment could use a concatenation operator
or a string transducer, and (2) a conditional could use boolean combinations of 
regular constraints. Note that our restriction prohibits a conditional to 
check equality of two variables, which causes undecidability (see Figure
\ref{prog:undec}).

We give an example how to encode the program from Example \ref{ex:cacm} in
$\AUWR$. We may assume that rational transducers $R_1$ and $R_2$ have been 
implemented for both \texttt{htmlEscape} and \texttt{escapeString}. [As we
mentioned, see \cite{SMV12,BEK,symbolic-transducer} for examples how this can be 
done.] The program logic can then be expressed as a conjunction of
the following string constraints:
\begin{itemize}
    \item $x = R_1(\text{\texttt{cat}})$
    \item $y = R_2(x)$
    \item $\text{\texttt{catElem.innerHTML}} = w_1 \cdot y \cdot w_2 \cdot x \cdot
        w_3$ for some constant strings $w_1, w_2, w_3$.
\end{itemize}
Observe that this constraint belongs to $\AUWR$. One possible attack pattern
that a programmer might want to safeguard against is that defined by the
following (in a JavaScript regular expression notation):
\smallskip
{\footnotesize 
\begin{verbatim}
e1 = /<a onclick=
        "createCatList\(' ( ' | [^']*[^\\] ' ) \); 
            [^']*[^\\]' )">.*<\/a>/
e = R(e1)
\end{verbatim}}
where $R$ is a transducer that nondeterministically maps a character to itself or 
its URL encoding (e.g. \verb+'+ may be replaced by \verb+&#39;+). This attack 
pattern simply states that there is another function call made after 
\verb+createCatList+ inside the onclick attribute, which is not desirable.
It is easily expressible in our logic. 

Finally, we shall remark that the examples from the vulnerability benchmark in
\cite{Stranger} also make an interesting use of both concatenation and 
\texttt{replace-all} operations. They all can be modeled \emph{precisely} in 
$\AUWR$. For these, the default attack patterns are rather simple, i.e., of the form
$\Sigma^*\text{\texttt{<script}}\Sigma*$.
Note that Stranger \cite{Stranger} applies an \emph{overapproximation} when
    analysing an assignment involving a concatenation of variables which might
    originate from the same sources (e.g., Example \ref{ex:cacm}). In fact,
    this is the case for some of the examples in their benchmark (e.g.
    \texttt{vuln03.php}). It would be 
interesting to combine the overapproximation techniques of \cite{Stranger,fmsd14} 
(which can also handle loops) and our technique. We leave this for future work.
}

\begin{remark}
    At this stage, the reader might be wondering about the security 
    implication when the above
    formulas $\varphi$ are satisfiable or unsatisfiable. Unsatisfiability
    rules out specific vulnerability pattern. In the case of satisfiability,
    since attack patterns
    (including \texttt{e1}) generally \emph{overapproximate} a set of
    bad strings, a solution to $\varphi$ might not correspond to an actual
    attack. 
    There are multiple proposals to address this problem (e.g. see
    \cite{fmsd14,Balzarotti08}). One method (e.g. see \cite{Balzarotti08})
    is to supply each attack pattern (e.g. 
    $\Sigma^*\text{\texttt{<script>}}\Sigma^*$) with a set of test cases in
    the form of actual strings
    (e.g. \texttt{<script>alert(1);</script>}). Replacing \texttt{e1}
    by a specific test case, $\varphi$ can be checked again for
    satisfiability.
\end{remark}

\subsection{{\bf Proofs of Theorem \ref{th:expspace}} and Theorem 
\ref{theo:smp}}
\label{sec:ub} 

We start by proving the upper bound of Theorem \ref{th:expspace}. Then
we explain how Theorem \ref{theo:smp} follows from it. Finally, we
sketch the proof of the lower bound of Theorem \ref{th:expspace}. 

\subsubsection{Upper Bound of Theorem \ref{th:expspace}} 

Let $\AUR$ denote the restriction of $\AUWR$ to formulas which do not involve 
any concatenation, i.e., constraints of the form $y = y_1\ldots y_n$.
The crux of the algorithm witnessing Theorem \ref{th:expspace} is that we 
transform the input constraint $\varphi$ in $\AUWR$ into one in the class $\AUR$. 
After this, in Step 3 we will apply a classic result in the theory of rational 
relations to
obtain decidability (or a recent result \cite{BFL13} for a better complexity).
We now provide the details of our algorithm for the satisfiability problem for 
$\AUWR$.

\paragraph*{Step 1: simplification of regular constraints}
Recall that a regular constraint is a boolean combination of constraints of the
form $P(x)$, where $P$ is a regular language given as an NFA over a
finite alphabet $\Sigma$. Given a regular
constraint $\psi$, there exists an equivalent constraint in disjunctive normal
form (DNF) of exponential size, where each disjunct $\theta$ is a conjunction 
of literals involving the atoms from $\psi$ of the form $P(x)$ (so $\theta$ has 
size linear in $|\psi|$).
From propositional logic, we know that there is a standard enumeration of these
disjuncts, i.e., enumerate satisfying assignments of $\psi$ treated as a 
propositional formula (i.e. each atom $P(x)$ is now a proposition). Such an
enumeration runs in polynomial space and exponential time.

Next, each disjunct $\theta$ in the aforementioned enumeration can now be converted 
into a conjunction 
\begin{equation} \label{eq:prc} 
    P_1(x_1) \wedge \cdots \wedge P_n(x_n)
\end{equation} 
of positive literals, where each string variable $x_i$ is constrained only by
precisely one atomic regular constraint $P_i$. This can be done by complementing
each NFA that occurs as a negative literal in $\theta$, and by computing a
single NFA for the intersection of regular languages by a standard product automata
construction in the case when a string variable is constrained by several literals
in $\theta$. This construction runs in exponential
time. 


Let $\varphi$ be the given constraint in $\AUWR$, consisting of the conjunction
of a relational constraint $\chi$ and a regular
constraint $\psi$. In order for $\varphi$ to 
be satisfiable, it is sufficient and necessary that $\chi \wedge \theta$ 
is satisfiable, for some disjunct $\theta$ in the enumeration of disjuncts of $\psi$ 
in DNF. 
In this step, our algorithm simply guesses such disjunct
$\theta$ (which is of polynomial size). The remaining steps of the
algorithm then check whether $\chi \wedge \theta$ is satisfiable.   
From our previous remarks, 
we may assume then that $\theta$ 
is of the form \eqref{eq:prc}, where each string
variable $x_i$ is uniquely constrained by a literal
$P_i(x_i)$. We also assume, without loss of generality, that each
variable $y$ in $\chi$ is constrained by some literal $P(y)$ in
$\theta$. Otherwise, we simply add to $\theta$ a literal which states that $y$
belongs to $\Sigma^*$. 

\paragraph*{Step 2: Removing concatenation}
For this step, we will transform a given constraint 
$\varphi$ in $\AUWR$ into a constraint $\varphi'$ that uses only atomic string 
constraints
of the form $x = R(y)$, i.e., all string constraints of the form $x=y_1\ldots y_n$
are removed. Since word equations of the form $w = yy$ cannot in general
be expressed as a transducer, our transformation cannot possibly express the same
property that $\varphi$ expresses, i.e., it is impossible that $\varphi$ and 
$\varphi'$ have the same set of satisfying assignments in general. However, as we 
shall see later, 
\emph{by introducing extra variables and allowing both conjunctions/disjunctions for
our string constraints} it is possible to produce the formula $\varphi'$ without
concatenation operators
whose satisfying assignments can be easily transformed into satisfying assignments 
of $\varphi$ and vice versa (see Lemma \ref{lm:reduce-correctness} below).

\OMIT{
involving only the \emph{source variables} 
(i.e. nodes in $\G(\varphi)$ with no incoming edges) and the 
\emph{sink variables} (i.e. nodes in $\G(\varphi)$ with no outgoing edges).
This algorithm can be construed as computing a ``normal form'' for $\varphi$. 

\begin{definition}
    Suppose that $\varphi$ has source variables $x_1,\ldots,x_n$ and 
    transducers $R_1,\ldots,R_m$. A \defn{term} over $\varphi$ is defined by 
    induction: (1) each $x_i$ is a term over $\varphi$, and (2) if $t$ is a 
    term over $\varphi$ and $R = R_i$ for some $i \in [1,m]$, then 
    $R_{[q,q']}(t)$ for all states $q, q'$ of $R$.
\end{definition}
Note that each term has a \defn{unique} string variable.

\begin{definition}
    We define an intermediate language $\AUWRe$ of string constraints. 
    An atomic string constraint in $\AUWRe$ is of the form $y = t_1\ldots t_n$, 
    where
    $t_1,\ldots,t_n$ are terms (over some string constraint $\varphi$ in $\AUWR$).
    Each string constraint in $\AUWRe$ is of the form 
\end{definition}
}

The structure of the straight-line relational constraint $\varphi$ immediately gives us 
a topological sort $x_1 \prec \ldots \prec x_m$ on $\G(\varphi)$. 
We may assume without loss of generality that all the source nodes are at the
beginning of the topological sort, i.e., it is not the case that $x_i \prec 
x_{i+1}$ for some non-source node $x_i$ and some source node $x_{i+1}$ 
(which might
happen if they are incomparable in $\G(\varphi)$ construed as a partial order). 
Next, for each variable $y$ in $\varphi$, we will define a relational 
constraint
$\NewStrCon{y}$ involving only rational relations (but with conjunctions and
disjunctions), a regular constraint $\NewRegCon{y}$, and fresh
variables $y(1),\dots,y(\max_y)$, where $\max_y$ is a positive
integer. We will define these
by induction on the position of $y$ with respect to the order $\prec$.

\smallskip 

\underline{Base cases}: A source node $y$ with a regular constraint $P(y)$. For this, 
we set 
$\NewStrCon{y} := \top$ and $\NewRegCon{y} := P(y)$. We also set
$\max_y = 1$ and define a fresh variable $y(1)$.

\smallskip 

\underline{First inductive case}: A non-source node $y$ with an assignment $y = 
y_1\ldots y_n$ and a regular constraint $P(y)$, for some $y_1,\ldots,y_n \prec y$. 
By induction, we assume 
that $\max_{y_i} \in \Z_{>0}$ and new string variables $y_i(1),\ldots,y_i(\max_{y_i})$
have all been defined (for each $1 \leq i \leq n$).
The main idea behind our construction 
is to interpret each assignment for $y_i$ ($1 \leq i \leq
n$) as the
``concatenation'' of an assignment for   
$y_i(1),\ldots,y_i(\max_{y_i})$, respectively.  

Formally, let $\max_y = \sum_{j=1}^n 
\max_{y_j}$ and $S = \{1,\dots,\max_y\}$. We define a {\em selector
function} 
$\nu: S \to \Z_{>0}^2$ as follows: For each $k \in S$ it is the case
that $\nu(k)$ is a pair of numbers $(i,l)$, where $i$ is
the smallest number $i$ such that $k \leq \sum_{j=1}^i \max_{y_j}$ and
$l := k - \sum_{j=1}^{i-1} \max_{y_j}$. Intuitively, 
if $\nu(k) = (i,l)$, then
$\nu$ ``selects'' the variable $y_i$ for the fresh variable $y(k)$, 
and $l$ further ``refines'' this selection to $y_i(l)$. 

In the set $\NewStrCon{y}$ we ``define'' the fresh variables $y(k)$,
for each $k \in S$. We do so by setting $\NewStrCon{y}$ to be the 
conjunction of all formulas of the form $y(k) = y_i(l)$, where 
$k \in S$ and $(i,l) = \nu(k)$. Notice that we can express each such conjunct as a 
transducer $R(y_i(l),y(k))$, where $R$ is the identity transducer,
i.e., the one that outputs
its input without modification. 

We now define $\NewRegCon{y}$. To this end, we first
observe that an accepting run of the automaton $P$ on a word $w = w_1\cdots w_N$,
can be split into $N$ subruns. To make this notion more precise, we define an
\defn{$N$-splitting} $\sigma$ of an automaton $\Aut$ with states $\controls$ to be
a sequence $q_0, \ldots, q_{N} \in \controls$ such that each state $q_i$ ($i \in 
\{1, \ldots,N\}$) is reachable from the state $q_{i-1}$ in $\Aut$.
Recall that $\Aut_{[q,q']}$ is the NFA $\Aut$ whose initial (resp. set of final
states) is replaced by $q$ (resp. $\{q'\}$). Then
$\NewRegCon{y}$ is defined as: 
\[
    \bigvee_{ \sigma = q_0,\ldots,q_{\max_y}}
    P_{[q_0,q_1]}(y(1)) \, \wedge \, \cdots \, \wedge \,  
    P_{[q_{\max_y-1},q_{\max_y}]}(y({\max}_y)), 
\]
where $\sigma$ ranges over $\max_y$-splittings of $P$. That is, 
$\NewRegCon{y}$ states that there is some $\max_y$-splitting $q_0,\ldots,q_{\max_y}$ 
of $P$ such that each $y(k)$ ($1 \leq k \leq \max_y$) is accepted by $P_{[q_{k-1},q_k]}$. 
This amounts to $y$ being accepted by $P$ (since $y$ is interpreted as the 
``concatenation'' of    
$y(1),\ldots,y(\max_{y})$, respectively). 

\OMIT{
By induction, we may assume that each $\NewStrCon{y_i}$ is a disjunction of 
constraints of the form $y_i = t_i^1\ldots t_i^{r_i}
\wedge \bigwedge_{j=1}^{r_i} P_i^j(t_i^j)$, where $r_i$ depends only on 
$i$. The definition of $\NewCon{y}$ will again be a disjunction of formulas 
in this form, which we will define below.
}

\OMIT{
For a disjunct $\theta_i$ of $\NewCon{y_i}$ with a string constraint $y_i = 
t_i^1\ldots 
t_i^{r_i}$ and a regular constraint $\psi_i := \bigwedge_{j=1}^{r_i} P_i^j(t_i^j)$, 
we let $h(\theta_i)$ denote the right hand side of the equation, 
i.e.,
$t_i^1\ldots t_i^{r_i}$. 
We define $\chi_{\theta_1,\ldots,\theta_n}$ to be a conjunction of the string
constraint $y = h(\theta_1)\cdots h(\theta_n)$ (i.e., the concatenation of each 
$h(\theta_i)$) and regular constraints $\psi_1, \ldots,\psi_n$. Notice that, if 
$N := \sum_{i=1}^n r_i$, the string constraint component of
$\chi_{\theta_1,\ldots,\theta_n}$ can be rewritten as 
$y = t_1\ldots t_N$ for some terms $t_1,\ldots,t_N$.
In addition, the additional regular constraint $P(y)$ has to be incorporated
into the formula $\chi_{\theta_1,\ldots,\theta_n}$.
To this end, we first
observe that an accepting run of the automaton $P$ on a word $w = w_1\cdots w_N$,
where $w_i$ corresponds to the word in the segment corresponding to $t_i$, 
can be split into $N$ subruns. To make this notion more precise, we define an
\defn{$N$-splitting} $\sigma$ of an automaton $\Aut$ with states $\controls$ to be
a sequence $q_0, \ldots, q_{N} \in \controls$ such that each state $q_i$ ($i \in 
\{1, \ldots,N\}$) is reachable from the state $q_{i-1}$ in $\Aut$.
\anthony{Define $\Aut_{[q,q']}$ to be the NFA $\Aut$ whose initial state is
$q$ and final state is $q'$.} 
We may now define
\[
    \mu_{\sigma,\theta_1,\ldots,\theta_n} := \chi_{\theta_1,\ldots,\theta_n} \wedge
    P_{[q_0,q_1]}(t_1) \wedge \cdots \wedge P_{[q_{N-1},q_N]}(t_n).
\]
[Recall that $P_{[q,q']}$ is the NFA $P$ whose initial state has been replaced by
$q$ and the final state has been replaced by $q'$.]
The final formula can be defined as follows:
\[
    \NewCon{y} := \bigvee_{\sigma,\theta_1,\ldots,\theta_n} 
        \mu_{\sigma,\theta_1,\ldots,\theta_n}
\]
where $\sigma$ ranges over $N$-splittings of $P$, and each $\theta_i$ ranges over
disjuncts of $\NewCon{y_i}$.
}


\smallskip 

\underline{Second inductive case}: \, A non-source node $y$ with an assignment $y = 
R(x)$ and a regular constraint $P(y)$, for some $x \prec y$. By induction, we assume 
that $\max_x \in \Z_{>0}$ and new string variables $x(1),\ldots,x(\max_x)$
have all been defined. The crux of the construction is to replace $y =
R(x)$ by ``splitting'' it into $\max_x$ different constraints. This is achieved by
splitting the transducer $R$ (syntactically seen as an NFA over $\Sigma_\e \times
\Sigma_e$). 
More precisely, let $\max_y = \max_x$. Then $\NewStrCon{y}$ is: 
\[
    \bigvee_{\sigma = p_0,\ldots,p_{{\max}_y}}
    \bigwedge_{i=1}^{{\max}_y} y(i) = R_{[p_{i-1},p_i]}(x(i)), 
\]
where $\sigma$ ranges over all ${\max}_y$-splittings of $R$.
We may define $\NewRegCon{y}$ precisely in the same way as in the first inductive 
case.

\smallskip 

\underline{The output formula $\varphi'$:} \, 
We have now defined $\NewStrCon{y}$ and $\NewRegCon{y}$
for each node $y$ in $\G(\varphi)$. The output of our transformation is the
formula $$\varphi' \ := \ \bigwedge_{y} (\NewStrCon{y} \wedge
\NewRegCon{y}),$$ 
where
$y$ ranges over all variables in $\varphi$. 

Notice that $\varphi'$ is not necessarily a string constraint, as
$\NewStrCon{y}$ might be a {\em disjunction} of relational constraints for
some $y$'s. The notion of satisfiability extends to this class of
formulas in the standard way. In particular,
an assignment $\iota$ for the variables in $\NewStrCon{y}$ satisfies
this formula iff it satisfies one of its disjuncts. The formula   
$\varphi'$ is satisfiable iff there is an
assignment $\iota$ for the variables of $\varphi'$ that satisfies $\NewStrCon{y} \wedge
\NewRegCon{y}$ for each string variable $y$ in $\varphi$.  

\smallskip 

\underline{Correctness:} \, 
The following lemma shows that
our transformation is satisfiability preserving; in fact, there is an easy way
to obtain satisfying assignments for $\varphi$ from those of $\varphi'$ and vice
versa. 

\begin{lemma}[Correctness] \label{lm:reduce-correctness}
    $\varphi$ is satisfiable iff $\varphi'$ is satisfiable.
\end{lemma}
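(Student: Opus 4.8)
The plan is to prove both directions by induction on the topological order $x_1 \prec \cdots \prec x_m$ of $\G(\varphi)$, establishing a tighter statement relating the two assignments via a ``decomposition'' invariant. Concretely, I would prove: for every satisfying assignment $\iota$ of $\varphi$ there is a satisfying assignment $\iota'$ of $\varphi'$ such that for every variable $y$ of $\varphi$, $\iota(y) = \iota'(y(1)) \cdots \iota'(y(\max_y))$; and conversely, given a satisfying $\iota'$ of $\varphi'$, defining $\iota(y) := \iota'(y(1)) \cdots \iota'(y(\max_y))$ yields a satisfying assignment of $\varphi$. The key point is that the stronger ``concatenation invariant'' is exactly what makes the induction go through, since $\NewStrCon{y}$ and $\NewRegCon{y}$ are built precisely to mirror a splitting of $\iota(y)$ into $\max_y$ pieces.

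For the forward direction, I would process variables in $\prec$-order. For a source node $y$ with regular constraint $P(y)$, set $\iota'(y(1)) := \iota(y)$; the invariant and $\NewRegCon{y} = P(y)$ hold trivially. For the first inductive case ($y = y_1 \cdots y_n$), by the induction hypothesis each $\iota(y_i)$ decomposes as $\iota'(y_i(1))\cdots\iota'(y_i(\max_{y_i}))$; define $\iota'(y(k)) := \iota'(y_i(l))$ where $(i,l) = \nu(k)$. Then $\iota'$ satisfies every conjunct $y(k) = y_i(l)$ of $\NewStrCon{y}$ (the identity transducer), and $\iota(y) = \iota(y_1)\cdots\iota(y_n) = \iota'(y(1))\cdots\iota'(y(\max_y))$, giving the invariant. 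Since $\iota(y) \in L(P)$, an accepting run of $P$ on $\iota(y)$ visits states $q_0,\ldots,q_{\max_y}$ at the boundaries of the pieces $\iota'(y(1)),\ldots,\iota'(y(\max_y))$; this $\max_y$-splitting $\sigma$ witnesses the disjunct of $\NewRegCon{y}$ with each $\iota'(y(k))$ accepted by $P_{[q_{k-1},q_k]}$. For the second inductive case ($y = R(x)$), since $(\iota(x),\iota(y)) \in R$ there is an accepting run $\pi$ of $R$ (as an NFA over $\Sigma_\e \times \Sigma_\e$); I would cut $\pi$ at the positions where the reading of the $x$-component crosses the boundaries between $\iota'(x(1)),\ldots,\iota'(x(\max_x))$ (using the induction hypothesis decomposition of $\iota(x)$), obtaining states $p_0,\ldots,p_{\max_y}$ and sub-runs that show $(\iota'(x(i)), u_i) \in R_{[p_{i-1},p_i]}$ for the corresponding output pieces $u_i$; setting $\iota'(y(i)) := u_i$ satisfies the chosen disjunct of $\NewStrCon{y}$ and gives $\iota(y) = u_1\cdots u_{\max_y} = \iota'(y(1))\cdots\iota'(y(\max_y))$. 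The regular constraint is handled exactly as before.

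For the backward direction, given a satisfying $\iota'$ of $\varphi'$, I would define $\iota(y) := \iota'(y(1))\cdots\iota'(y(\max_y))$ for every $y$ and verify by induction that $\iota$ satisfies each conjunct of $\varphi$. For $y = y_1\cdots y_n$: the conjuncts $y(k) = y_i(l)$ force $\iota'(y(k)) = \iota'(y_i(l))$, so concatenating over $k \in S$ in order and regrouping by the selector $\nu$ yields $\iota(y) = \iota(y_1)\cdots\iota(y_n)$. For $y = R(x)$: the chosen disjunct of $\NewStrCon{y}$ gives a splitting $p_0,\ldots,p_{\max_y}$ with $(\iota'(x(i)),\iota'(y(i))) \in R_{[p_{i-1},p_i]}$; concatenating the corresponding runs (which is legitimate because the endpoint of run $i$ is $p_i$, the start of run $i{+}1$) yields a run of $R$ witnessing $(\iota(x),\iota(y)) \in R$, since both the input and output sides concatenate component-wise. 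For the regular constraint $P(y)$: the satisfied disjunct of $\NewRegCon{y}$ provides a $\max_y$-splitting $q_0,\ldots,q_{\max_y}$ with $\iota'(y(k)) \in L(P_{[q_{k-1},q_k]})$, and concatenating these runs gives an accepting run of $P$ on $\iota(y)$.

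The main obstacle is the bookkeeping in the second inductive case: one must argue carefully that an arbitrary accepting run of the transducer $R$ on the pair $(\iota(x),\iota(y))$ can be cut at exactly the $\max_x - 1$ positions where the \emph{input} head finishes reading $\iota'(x(i))$ and starts $\iota'(x(i+1))$, and that the output produced strictly between two such cuts is a well-defined factor $u_i$ of $\iota(y)$ whose concatenation recovers $\iota(y)$. Since $\epsilon$-moves on the input tape are allowed, the cut position in the run is not unique, but any consistent choice works; the component-wise nature of $\circ$ on tuples guarantees that the input pieces glue back to $\iota(x)$ and the output pieces glue back to $\iota(y)$ simultaneously. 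Everything else is routine induction over $\prec$ using the effective properties of NFAs and transducers already recalled in Section~\ref{sec:prelim}.
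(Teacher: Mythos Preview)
Your proposal is correct and follows essentially the same route as the paper: the paper strengthens the statement to the concatenation invariant $\iota(y) = \iota'(y(1))\cdots\iota'(y(\max_y))$ (its Lemma~\ref{lm:correctness-hypothesis}) and proves it by induction on the topological order, handling the source, concatenation, and transducer cases exactly as you outline, including the observation that the cut points in the transducer run are not unique because of $\epsilon$-moves. The only cosmetic difference is that the paper phrases the induction over the restricted formulas $\varphi_y$ and $\varphi_y'$ rather than over the full formulas, but the content is the same.
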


The proof is done by induction on the position of nodes in the topological sort
$\prec$ of $\G(\varphi)$. To this end, given a node $y$ in $\G(\varphi)$, we define
$\varphi_y$ to be the set of conjuncts in $\varphi$ involving only $y$
and variables
that precede $y$ in $\prec$. That is, the conjuncts in $\varphi_y$ are
the ones that only mention variables in the set
$Z_y := \{ x : x \preceq y \}$. For example, if $\varphi$ contains the
constraint $z = yy$, then this constraint cannot
be a conjunct of $\varphi_y$ (but it is a conjunct of $\varphi_z$). 
Similarly, we define $\varphi_y' := 
\bigwedge_{x \in Z_y} \NewStrCon{x} \wedge \NewRegCon{x}$. 
That is, $\varphi_y'$ is the set of conjuncts in $\varphi'$ of the
form $\NewStrCon{x} \wedge \NewRegCon{x}$ involving
only variables in the set $Z_y' := \{ x(i) : x \in Z_y, \, 1 \leq i \leq
\max_x\}$.   
We will prove the following technical lemma, which is a stronger formulation of 
Lemma \ref{lm:reduce-correctness}.

\begin{lemma} \label{lm:correctness-hypothesis}
    For each node $y$ in $\G(\varphi)$ and every assignment 
    $\iota: Z_y \to \Sigma^*$ of variables in $\varphi_y$,
    the following are equivalent:
    \begin{description}
    \item[(i)] $\iota$ is a satisfying assignment for the formula $\varphi_y$.
    \item[(ii)] There is a satisfying assignment $\iota': Z_y' \to \Sigma^*$
        for $\varphi_y'$ such that
        $\iota(x) = \iota'(x(1))\circ \cdots\circ \iota'(x({\max}_x))$
        for each $x \preceq y$.
    \end{description}
\end{lemma}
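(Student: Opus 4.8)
The plan is to proceed by induction on the position of $y$ in the topological sort $\prec$ of $\G(\varphi)$, following exactly the case split used in the construction of $\varphi'$ (base case: source node; first inductive case: concatenation assignment $y = y_1\cdots y_n$; second inductive case: transducer assignment $y = R(x)$). In each case I must establish the two directions (i)$\Rightarrow$(ii) and (ii)$\Rightarrow$(i); the key observation making the induction go through is that $\varphi_y$ is obtained from $\bigwedge_{x \prec y,\, x \in Z_y}\varphi_x$ (more precisely, from the $\varphi_z$ for the immediate predecessors $z$ of $y$) by adding only the conjunct(s) that mention $y$ itself — namely the assignment defining $y$ and the regular constraint $P(y)$ — and symmetrically $\varphi_y'$ adds only $\NewStrCon{y} \wedge \NewRegCon{y}$ to the analogous conjunction of the $\varphi_x'$. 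So the inductive hypothesis already aligns the assignments to all strictly-earlier variables with their ``refinements'', and I only need to check that the new conjuncts are handled correctly by the new fresh variables $y(1),\dots,y(\max_y)$.

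\textbf{Base case.} If $y$ is a source node, $\varphi_y$ is just $P(y)$ (plus the trivial conjuncts on earlier source variables, handled by the IH), $\max_y = 1$, $\NewStrCon{y} = \top$, $\NewRegCon{y} = P(y)$, and $y(1)$ is the sole refinement; so $\iota(y) = \iota'(y(1))$ and $\iota(y) \in P$ iff $\iota'(y(1)) \in P$ — immediate.

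\textbf{First inductive case ($y = y_1\cdots y_n$).} For (i)$\Rightarrow$(ii): given $\iota$ satisfying $\varphi_y$, apply the IH to each $y_j$ to obtain $\iota'$ on $Z_{y_j}'$ with $\iota(y_j) = \iota'(y_j(1))\circ\cdots\circ\iota'(y_j(\max_{y_j}))$; then define $\iota'(y(k)) := \iota'(y_j(l))$ where $\nu(k) = (j,l)$. By construction of $\nu$ this makes $\iota'(y(1))\circ\cdots\circ\iota'(y(\max_y))$ equal to $\iota(y_1)\circ\cdots\circ\iota(y_n) = \iota(y)$, and it satisfies all conjuncts $y(k) = y_j(l)$ of $\NewStrCon{y}$. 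For $\NewRegCon{y}$: since $\iota(y) \in L(P)$, fix an accepting run of $P$ on $\iota(y)$ and read off the states $q_0,\dots,q_{\max_y}$ at the boundaries between the $\max_y$ consecutive segments $\iota'(y(1)),\dots,\iota'(y(\max_y))$; this is a $\max_y$-splitting of $P$ witnessing the disjunct, and each $\iota'(y(k))$ is accepted by $P_{[q_{k-1},q_k]}$. For (ii)$\Rightarrow$(i): reverse the argument — $\NewStrCon{y}$ forces $\iota'(y(k)) = \iota'(y_j(l))$, so $\iota(y) := \iota'(y(1))\circ\cdots$ equals $\iota(y_1)\circ\cdots\circ\iota(y_n)$, giving the assignment conjunct; concatenating the accepting runs of the $P_{[q_{k-1},q_k]}$ along the splitting $\sigma$ yields an accepting run of $P$ on $\iota(y)$, so $P(y)$ holds; all earlier conjuncts hold by the IH.

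\textbf{Second inductive case ($y = R(x)$).} Here $\max_y = \max_x$. For (i)$\Rightarrow$(ii): $(\iota(x),\iota(y)) \in R$, so fix an accepting run of the transducer $R$ (as an NFA over $\Sigma_\e\times\Sigma_\e$) on the word $w$ with $w \circ$-projecting to $(\iota(x),\iota(y))$; I need to cut this run into $\max_x$ pieces whose first components are exactly $\iota'(x(1)),\dots,\iota'(x(\max_x))$. This is the one genuinely delicate point, because the input and output of a transducer need not be aligned letter-by-letter, so the cut points in the input do not canonically determine cut points in the run — one must pick a run and then, for each prefix of the input of length $|\iota'(x(1))| + \cdots + |\iota'(x(i))|$, choose a position in the run where exactly that much input has been consumed (such a position exists since the run consumes all of $\iota(x)$). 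Setting $p_i$ to the state there and $\iota'(y(i))$ to the output emitted between cut $i-1$ and cut $i$, one checks $(\iota'(x(i)),\iota'(y(i))) \in R_{[p_{i-1},p_i]}$ and $\iota'(y(1))\circ\cdots\circ\iota'(y(\max_y)) = \iota(y)$, so $\NewStrCon{y}$ is satisfied; $\NewRegCon{y}$ is handled exactly as before. The converse (ii)$\Rightarrow$(i) is easier: concatenating the witnessing runs of the $R_{[p_{i-1},p_i]}$ gives a run of $R$ on $(\iota'(x(1))\circ\cdots, \iota'(y(1))\circ\cdots) = (\iota(x),\iota(y))$.

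\textbf{Main obstacle.} The hard part is the transducer-splitting argument in the second inductive case: because a transducer's heads move asynchronously, one must be careful that the chosen intermediate states $p_i$ are consistent (each $p_i$ reachable from $p_{i-1}$ while consuming exactly the $i$-th input refinement and emitting the corresponding output chunk) and that summing the output chunks reproduces $\iota(y)$ — this requires fixing one particular run first and cutting it at input-length boundaries, rather than trying to glue independently-chosen runs. Everything else (the concatenation case, the splitting of NFA accepting runs, and the reverse directions) is routine bookkeeping with the selector function $\nu$ and the definitions of $N$-splittings and $\Aut_{[q,q']}$. Finally, Lemma \ref{lm:reduce-correctness} follows by taking $y$ to be a maximal element (sink) of $\prec$, so that $\varphi_y = \varphi$ and $\varphi_y' = \varphi'$.
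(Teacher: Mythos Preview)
Your proposal is correct and follows essentially the same inductive structure as the paper's own proof, including your identification of the transducer-run splitting (choosing cut points in a fixed run at the input-length boundaries) as the one delicate step.

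One small point worth tightening in the first inductive case: you write ``apply the IH to each $y_j$'' separately, but distinct $y_j$'s may share ancestors in $\G(\varphi)$, and independent invocations of the existential IH could in principle return refinements $\iota'$ that disagree on those shared variables. The paper avoids this by applying the IH once to the unique immediate predecessor $z$ of $y$ in the \emph{total} order $\prec$ (so that $Z_z = Z_y \setminus \{y\}$ already contains every $y_j$), obtaining a single consistent $\iota'$ on all of $Z_z'$ in one shot; your argument then goes through verbatim. Your preamble parenthetical about ``the $\varphi_z$ for the immediate predecessors'' suggests you are aware of this, but the phrasing in the case itself should be adjusted accordingly.
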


The proof of Lemma \ref{lm:correctness-hypothesis} is by 
induction on the position of $y$ in the topological sort $\prec$ of
$\G(\varphi)$. Due to lack of space we relegate this proof to the
\shortlong{full version}{appendix}. 
Since Lemma 
\ref{lm:correctness-hypothesis} is a stronger formulation of Lemma 
\ref{lm:reduce-correctness}, the correctness of our construction in Step 2 
follows.

\paragraph*{Step 3: Solving the final formula}
After applying the transformation from Step 2, 
the size of the resulting formula $\varphi'$ 
could be exponential in $|\varphi|$ due to repeated applications of constraints of 
the form $y = y_1\ldots y_n$, where some variable $y_i$ occurs several times on the 
right-hand side of the equation. 
In particular, there are exponentially many conjuncts of the
form $\NewStrCon{x} \wedge \NewRegCon{x}$  in $\varphi'$, 
More precisely, the resulting formula $\varphi'$ is a conjunction of multiple
formulas of two types:
\begin{itemize}
    \item conjunctions of atomic formulas of the form $R(x,y)$, for some transducer $R$, or a 
            disjunction of several such formulas.
    \item atomic formulas of the form $P(x)$, for some regular language $P$, or a 
        disjunction of several such formulas.
\end{itemize}
So, except for the disjunctions, the formula $\varphi'$ satisfies the 
shape of the fragment $\AUR$ of $\AUWR$. In fact, it is not difficult to
remove these disjunctions without too much additional computational overhead.
Recall that disjunctions were caused by splitting automata or transducers.
Although in this case a conjunct can have exponentially many disjuncts, we may 
simply 
nondeterministically guess one of the disjuncts (in effect, guessing one of the 
splittings of the automata/transducers). 
Nondeterministic algorithms can be determinised at the cost 
of quadratically extra space \cite{Savitch70}. The resulting formula 
$\varphi''$ is now a conjunction of atomic
formulas of the form $R(x,y)$ or $P(x)$. Moreover, 
it is easy to prove that the undirected graph $\GraphR(\varphi'')$
defined in Section \ref{sec:lang} is acyclic: 

\begin{lemma} \label{lemma:ac} 
$\GraphR(\varphi'')$ is acyclic. Thus, $\varphi''
\in {\sf AC}$. 
\end{lemma}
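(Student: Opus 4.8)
The plan is to isolate two structural invariants of $\varphi''$ and then combine them with a short elementary graph argument. Recall from Step~3 that $\varphi''$ is already a conjunction of atomic constraints of the form $R(u,v)$ or $P(x)$, so it has exactly the syntactic shape of a formula in $\AC$ and the only thing left to establish is that $\GraphR(\varphi'')$ is acyclic. I would view $\GraphR(\varphi'')$ as the underlying undirected graph of the directed ``dependency'' graph obtained by orienting each relational conjunct $R(u,v)$ from its input $u$ to its output $v$, and deduce acyclicity of the undirected graph from properties of this orientation.

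The first invariant I would prove is that there is a vertex rank that strictly increases along every oriented edge. Fix the linear topological order $x_1 \prec \cdots \prec x_m$ of $\G(\varphi)$ used in Step~2, let $\ell(z)$ be the index of $z$ in it, and assign to each fresh variable $z(i)$ the rank $\ell(z)$ (treating an original variable $z$ surviving from the base case as $z(1)$). I would then check, by going through the three cases of the Step~2 construction, that every relational conjunct of $\varphi''$ has the form $R(z(i),y(j))$ with $z \prec y$: a source variable contributes $\NewStrCon{y}=\top$ and hence no relational conjunct; an assignment $y = y_1\cdots y_n$ has each $y_j \prec y$ by \textbf{(SL2)} and produces only identity-transducer conjuncts $R(y_j(l),y(k))$; an assignment $y = R(x)$ has $x \prec y$ by \textbf{(SL2)} and, once Step~3 fixes a splitting, produces only conjuncts $R_{[p_{i-1},p_i]}(x(i),y(i))$; and the pieces coming from $\NewRegCon{y}$ are all unary constraints $P(\cdot)$, contributing no edge at all. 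In particular the oriented graph has no directed cycle.

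The second invariant is that every fresh variable is the output of at most one relational conjunct, i.e.\ every vertex has in-degree at most $1$ in the oriented graph. This holds because the fresh variables introduced while processing distinct variables of $\varphi$ are pairwise disjoint, because within the block $\NewStrCon{y}$ each of $y(1),\dots,y(\max_y)$ occurs on the left-hand side of exactly one equation (or of none, in the base case), and because Step~3 only deletes disjuncts and never duplicates a definition.

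Finally I would close with the elementary observation that a finite directed graph carrying a vertex ranking that strictly increases along every edge and having all in-degrees at most $1$ has an acyclic underlying undirected graph: in any cycle of such a graph the vertex of maximum rank has both of its incident cycle-edges pointing into it (its cycle-neighbours have strictly smaller rank, and for a length-two ``cycle'' the two parallel edges likewise both point into it), contradicting the in-degree bound. Applying this to $\GraphR(\varphi'')$ with the rank and in-degree established above yields acyclicity, and ``thus $\varphi''\in\AC$'' is immediate since $\varphi''$ uses only $R(x,y)$ and $P(x)$ constraints. The only part that needs genuine care --- and hence the main obstacle --- is the case analysis for the first invariant: one must keep the bookkeeping of fresh variables straight across the first and second inductive cases of Step~2 and verify that the nondeterministic guessing of splittings in Step~3 neither introduces new edges nor breaks the in-degree bound; everything after that reduces to the one-paragraph graph argument above.
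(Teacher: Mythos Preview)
Your proposal is correct and follows essentially the same line as the paper. The paper's own proof is even terser---it records only the in-degree~$\leq 1$ observation (each fresh variable occurs on the left of at most one atom $y = R(x)$) and declares acyclicity of $\GraphR(\varphi'')$ ``clear''---whereas you also make explicit the rank-increase invariant and the max-rank-vertex argument that are needed to complete the step.
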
 

\begin{proof}
By construction of $\varphi' = \bigwedge_{y} 
(\NewStrCon{y} \wedge \NewRegCon{y})$, for each variable $y$ in
$\varphi'$ there is at most one $x$ such that $\NewStrCon{y}$ (and,
therefore, $\varphi'$) contains an atom of the form $y = R(x)$, for $R$
a rational transducer. From this it is clear that $\GraphR(\varphi'')$
is acyclic.   
\end{proof} 

Decidability in 
$\EXPSPACE$ can now be easily obtained by Proposition \ref{prop:BFL-PSPACE}, i.e.,
we apply Proposition \ref{prop:BFL-PSPACE}
on the formula $\varphi''$ of size at most exponential in $|\varphi|$ and so 
our resulting
algorithm runs in exponential space, as desired.
\OMIT{
Decidability (in fact, 
in double exponential time) can now easily 
be obtained by:
(1) using the classic result by Nivat in the study of rational relations 
(e.g. see the textbook \cite{Berstel}) that the pre/post images of regular 
languages under a rational transducer is effectively regular\footnote{In fact, 
    Nivat's construction outputs an NFA for the
language $\{ x \in \Sigma^* : \exists y( R(x,y) \wedge P(x) \}$, for a transducer
$R$ and an NFA $P$, in time $|R|\times |P|$. See \cite{BG07,BG08} for more
details}, and
(2) applying product automata construction to construct an NFA recognising the 
intersection of regular languages at each point of branching in $\G(\varphi'')$.
To obtain a desired $\EXPSPACE$ upper bound, we will however use a recent result 
from the study of graph logic:

\begin{proposition}[\cite{BFL13}]
    Satisfiability for the logic $\AUR$ is $\PSPACE$-complete.
    \label{prop:BFL-PSPACE}
    \OMIT{
    Suppose $\varphi$ is a conjunction of string constraints of the form
    $R(x,y)$, where $R$ is a rational transducer, and regular constraints of the
    form $P(x)$, where $P$ is an NFA. Then, if $\G(\varphi)$ is acyclic, then
    satisfiability of $\varphi$ can be checked in polynomial space.
    }
\end{proposition}

The authors of \cite{BFL13} refer to this problem as \emph{the generalised 
intersection problem with acyclic queries} (see Theorem 6.7 in \cite{BFL13}). 
Since 
we apply the above proposition
on the formula $\varphi''$ of size exponential in $|\varphi|$, our resulting
algorithm runs in exponential space, as desired.
}


\subsubsection{Proof of Theorem \ref{theo:smp}}

\anthonychanged{
Suppose that $\varphi \in \AUWR$ is satisfiable.
The previous algorithm computes a formula $\varphi''$ in $\sf AC$ such that
$\varphi$ is satisfiable iff $\varphi''$ is satisfiable (cf. Lemma
\ref{lm:reduce-correctness}). This implies that $\varphi''$ is satisfiable
and, so by the bounded model property from Proposition \ref{prop:BFL-PSPACE},
$\varphi''$ has a solution of size at most exponential in $|\varphi''|$. 
Now, the size $|\varphi''|$ is $O(\Dim(\varphi) \times |\varphi|)$, where
$\Dim(\varphi)$ is the \defn{dimension} of $\varphi$ which 
is defined to be the maximum $\max_x$ over all string variables $x$ in 
$\varphi$. The value $\Dim(\varphi)$ is at most exponential in $|\varphi|$.
This implies that 
$\varphi''$ has a solution of size at most $F(|\varphi|) := 
2^{(\Dim(\varphi) \times 
|\varphi|)^{O(1)}} = 2^{2^{|\varphi|^{O(1)}}}$. Then, by Lemma 
\ref{lm:correctness-hypothesis}, 
$\varphi$ has a solution of size
$\Dim(\varphi) \times F(|\varphi|) = 
2^{(\Dim(\varphi) \times |\varphi|)^{O(1)}} = 2^{2^{|\varphi|^{O(1)}}}$,
giving us the desired
upper bound on the maximum solution size for $\varphi$ that we need to
explore.}

\OMIT{
In order to provide a bound on the size of
solutions to explore, the first step is to
restate a stronger form of Proposition \ref{prop:BFL-PSPACE}. 
To this end,
we first recall the standard generalisation of the notion of transducers to
allow an arbitrary number of tracks (e.g. see \cite{CC06}).
An \defn{$m$-track (rational) transducer} over the alphabet $\ialphabet$ is a
tuple $\Aut = (\Gamma,\controls,\transrel,q_0,\finals)$, where
$\Gamma := \ialphabet_\epsilon^m$ and $\ialphabet_\epsilon := \ialphabet
\cup \{\epsilon\}$, such that $\Aut$ is syntactically an NFA over $\Gamma$.
In addition, we define the $m$-ary relation $R \subseteq (\ialphabet^*)^m$ that
$\Aut$ \emph{recognises} to consist of all tuples $\bar w$ for which there is an accepting 
run 
\[
    \pi := q_0 \tran{\sigma_1} q_1 \tran{\sigma_2} \cdots \tran{\sigma_n} q_n
\]
of $\Aut$ (treated as an NFA) such that $\bar w = \sigma_1 \circ \sigma_2 
\circ \cdots \circ \sigma_n$, where the string concatenation operator $\circ$ 
is extended to tuples over words component-wise (i.e. $(v_1,\ldots,v_k) 
\circ (w_1,\ldots,w_k) = (v_1w_1,\ldots,v_kw_k)$). An $m$-ary relation is said
to be \defn{rational} if it is recognised by an $m$-ary transducer. To avoid 
notational clutter, we shall confuse an $m$-ary transducer and the $m$-ary 
relation that it recognises. \anthony{In the following proposition, the set of solutions
(i.e. satisfying assignments) to a formula $\varphi$ in $\AUWRe$ --- i.e. mappings
from variables $x$ in $\varphi$ to strings, integers, or characters depending
on the type of $x$ --- is interpreted as a relation (i.e. a set of tuples) by
fixing any ordering to the variables occuring in $\varphi$.}

\begin{proposition}[\cite{BFL13}]
    There exists an exponential-time algorithm for computing an $m$-track 
    transducer $\Aut = (\Gamma,\controls,\transrel,q_0,\finals)$ for the set of solutions 
    of an 
    input formula $\varphi \in \AC$ with variables $x_1,\ldots,x_m$, where each 
    state in $\controls$ is of size polynomial in $|\varphi|$. In fact, there 
    exists a polynomial space algorithm for:
    \begin{enumerate}
    \item Computing $q_0$. 
    \item Checking whether a string is a state of $\controls$.
    \item Checking whether a state $q$ belongs to $\finals$.  
    \item Checking whether $(q,\bar a,q') \in \delta$, for some given states $q$ and 
        $q'$ and a symbol $\bar a \in \Gamma$. 
    \end{enumerate}
    \label{prop:BFL-transducer}
\end{proposition}
This proposition is a stronger version of Proposition \ref{prop:BFL-PSPACE}, 
which follows from the proof of Theorem 6.7 in \cite{BFL13}. 
}

\OMIT{
By using the techniques mentioned in 
Remark \ref{rm:nivat}, the satisfiability of a formula $\phi$ in $\sf
AC$ can be reduced to the non-emptiness problem for an NFA $\A(\phi)$ of
exponential size. Further, the length of 
each word accepted by $\A(\phi)$ provides
an upper bound for the length of the words in some satisfying
assignment for $\phi$. Now the result follows since $\varphi''$ is of
at most exponential-size in $\varphi$, and hence $\A(\varphi'')$ is of
at most double-exponential size in $\varphi$. Then, if the language
accepted by $\A(\varphi'')$ is non-empty, it must accept a word of
length $O(|\A(\varphi'')|)$, which is $O(2^{2^{p(|\varphi|)}})$ for
some polynomial $p(x)$. 
}

\subsubsection{Lower Bound of Theorem \ref{th:expspace}}

In order to prove that checking satisfiability of string 
constraints in $\AUWR$ is
$\EXPSPACE$-hard, we reduce from the acceptance problem for a
deterministic Turing machine $\M$ that works in space $2^{cn}$, for $c
> 1$. That is, we provide a polynomial time reduction that, given
an input $w$ to $\M$, it constructs a constraint $\varphi(w)$ in
$\AUWR$ such
that $w$ is accepted by $\M$ if and only if $\varphi(w)$ is
satisfiable. Due to lack of space, the complete proof is relegated to the
\shortlong{full version}{appendix}. We only sketch the main ideas below. 

The reduction starts by constructing a regular constraint of the form
$P_1(x) \wedge \dots \wedge P_m(x)$, in such a way that a word $w$
satisfies this constraint if and only if it codifies a sequence of
configurations of $\M$, the first such configuration corresponds to
the initial configuration of $\M$ on input $w$, and the final such
configuration corresponds to a final configuration of $\M$. The we
only require to check that each non-initial configuration in $w$ is
obtained from its preceding configuration by applying the transition
function of $\M$. This is done by adding a set of relational
constraints that creates an exponential number of copies of $x$ (with
equations of the form $y = xx$), but deletes certain distinguished
parts of each such copy (with suitable transducers).

\subsection{{\bf A $\PSPACE$ Restriction}}
\OMIT{
We will finish this section by first mentioning a $\PSPACE$ lower bound for
$\AUWR$, which immediately
follows since our logic can easily encode $\PSPACE$-complete problem of 
checking emptiness of the intersection of $k$ regular languages (e.g. given as
DFA) \cite{Kozen77}.
\begin{proposition}
The satisfiability problem for $\AUWR$ is $\PSPACE$-hard.
\end{proposition}} 

We mention a natural restriction of $\AUWR$ that yields a $\PSPACE$ upper 
bound and seems to be sufficiently expressive in practice.
Recall that the \defn{dimension} of a string constraint $\varphi$ is
the maximum $\max_x$ over all string variables $x$ in $\varphi$.
[Incidentally, this notion is closely related to the notion of dimension from 
the study of context-free grammars (e.g. see \cite{EGKL11}).] 

\begin{theorem}
    For any fixed $k \in \N$, satisfiability for the class of formulas in 
    $\AUWR$ of dimension $k$ is $\PSPACE$-complete.
    \label{th:pspace}
\end{theorem}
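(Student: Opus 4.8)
The plan is to revisit the three-step algorithm used to prove Theorem \ref{th:expspace} and show that, when $\Dim(\varphi) = k$ is fixed, every source of exponential blow-up disappears. Recall that the only place where the size of the intermediate formula $\varphi'$ grew exponentially was Step 2, where an assignment $y = y_1 \cdots y_n$ containing repeated variables caused $\max_y = \sum_j \max_{y_j}$ to grow multiplicatively along the topological sort of $\G(\varphi)$. Fixing the dimension to $k$ means precisely that $\max_x \le k$ for every variable $x$, so the set of fresh variables $\{x(1),\ldots,x(\max_x)\}$ has size at most $k$, and hence the total number of fresh variables is $O(k \cdot |\varphi|) = O(|\varphi|)$. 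Consequently each $\NewStrCon{x}$ and $\NewRegCon{x}$ is built from at most $k$-splittings of automata/transducers of size $\le |\varphi|$, so the number of disjuncts in each is at most $|\varphi|^{O(k)}$, i.e.\ polynomial for fixed $k$, and $|\varphi'|$ is polynomial in $|\varphi|$.

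The lower bound is immediate: already for dimension $1$ (in fact for the subfragment $\AUR$), Proposition \ref{prop:BFL-PSPACE} together with the standard reduction from the emptiness of the intersection of $k$ DFAs \cite{Kozen77} — which can be encoded using only regular constraints $P_1(x) \wedge \cdots \wedge P_k(x)$ on a single variable, a dimension-$1$ instance — shows $\PSPACE$-hardness. So it remains to argue the $\PSPACE$ upper bound. After Step 1 (guessing a polynomial-size disjunct $\theta$ of the DNF of the regular part, which is done in $\PSPACE$ exactly as before) and Step 2 (which now produces $\varphi'$ of polynomial size), we proceed as in Step 3: nondeterministically guess, for each conjunct, one of the polynomially many splittings to eliminate the disjunctions, obtaining a formula $\varphi''$ that is a conjunction of atoms $R(x,y)$ and $P(x)$ of polynomial size, with $\GraphR(\varphi'')$ acyclic by Lemma \ref{lemma:ac}, hence $\varphi'' \in \AC$. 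Applying Proposition \ref{prop:BFL-PSPACE} to $\varphi''$ — now of polynomial rather than exponential size — gives a $\PSPACE$ decision procedure, and all the nondeterministic guessing is absorbed by Savitch's theorem \cite{Savitch70}.

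The main obstacle I anticipate is bookkeeping rather than conceptual: one must verify carefully that fixing $\Dim(\varphi) = k$ is a genuine hypothesis one can \emph{check} in $\PSPACE$ on the input (or argue it is a promise), and that the dimension of the intermediate objects does not itself blow up — in particular that the $\max_x$ values computed during Step 2 never exceed $k$, which is exactly what the dimension bound asserts but must be threaded through the induction on the topological sort used in Lemma \ref{lm:correctness-hypothesis}. One should also double-check that the splittings introduced in the second inductive case (splitting the transducer $R$ into $\max_y = \max_x \le k$ pieces) and in the first inductive case (splitting the NFA $P$ into $\max_y \le k$ pieces) each contribute only $|\varphi|^{O(k)}$ many disjuncts, so that the guessing in Step 3 stays within polynomial space. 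Modulo these routine verifications, correctness follows verbatim from Lemma \ref{lm:reduce-correctness} and Lemma \ref{lm:correctness-hypothesis}, which do not depend on the dimension being large or small.
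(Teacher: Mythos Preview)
Your proposal is correct and follows essentially the same approach as the paper: the upper bound comes from observing that when $\Dim(\varphi) \le k$ the formula $\varphi''$ produced after Steps 1--3 has size $O(\Dim(\varphi) \cdot |\varphi|)$ (the paper states this bound explicitly in the proof of Theorem \ref{theo:smp}), hence polynomial, so Proposition \ref{prop:BFL-PSPACE} applies directly; the lower bound is the Kozen intersection-emptiness reduction encoded as a dimension-$1$ instance. Your concern about checking the dimension is easily dispatched --- the values $\max_x$ are computed in linear time along the topological sort of $\G(\varphi)$ --- and the remaining ``obstacles'' you flag are indeed routine.
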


The lower bound follows since 
our logic can easily encode in dimension one the $\PSPACE$-complete problem of 
checking emptiness of the intersection of $m$ regular languages given
as NFA \cite{Kozen77}. [In fact, when $k = 1$, the logic still subsumes
$\AUR$.] An easy corollary of the proof of Theorem \ref{theo:smp} is the
following improved bound.

\begin{theorem}
    \label{th:pspace-bmp}
    For any fixed $k \in \N$, if a formula $\varphi$ of dimension $k$ is 
    satisfiable, then it 
has a solution with each word of length at most $2^{p(|\varphi|)}$ for some 
polynomial $p(x)$.
\end{theorem}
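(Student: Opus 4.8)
The plan is to mirror the derivation of Theorem~\ref{theo:smp}, but to track the dependence on the dimension $k$ explicitly and observe that when $k$ is fixed the ``double exponential'' collapses to a single exponential. Recall from the proof of Theorem~\ref{th:expspace} that, given $\varphi \in \AUWR$, Step~2 produces a formula $\varphi'$ (hence, after Step~3's nondeterministic guessing of splittings, a formula $\varphi'' \in \AC$) whose size is $O(\Dim(\varphi)\times|\varphi|)$. When $\varphi$ has dimension $k$, a fixed constant, this size is simply $O(k\,|\varphi|) = O(|\varphi|)$, i.e.\ polynomial in $|\varphi|$.

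First I would invoke the bounded model property half of Proposition~\ref{prop:BFL-PSPACE}: since $\varphi''\in\AC$ and $\varphi''$ is satisfiable (by Lemma~\ref{lm:reduce-correctness}, as $\varphi$ is satisfiable), $\varphi''$ has a solution in which every word has length at most $2^{q(|\varphi''|)}$ for some polynomial $q$. Because $|\varphi''| = O(k\,|\varphi|)$ and $k$ is fixed, this bound is $2^{q(O(k|\varphi|))} = 2^{p_0(|\varphi|)}$ for a suitable polynomial $p_0$. Next I would transfer this bound back to $\varphi$ using Lemma~\ref{lm:correctness-hypothesis}: each original variable $x$ is reconstructed as the concatenation $\iota(x) = \iota'(x(1))\circ\cdots\circ\iota'(x(\max_x))$ of at most $\max_x \le \Dim(\varphi) = k$ of the new variables, each of length at most $2^{p_0(|\varphi|)}$. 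Hence $|\iota(x)| \le k\cdot 2^{p_0(|\varphi|)} \le 2^{p(|\varphi|)}$ for a polynomial $p$ absorbing the constant factor $k$, which is exactly the claimed bound.

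The only subtlety — and the one place I would be careful — is making sure that fixing the dimension really does make $|\varphi''|$ polynomial rather than merely ``one exponent smaller.'' In Theorem~\ref{theo:smp} the blow-up to double exponential came entirely from $\Dim(\varphi)$ itself being exponential in $|\varphi|$ in the worst case (repeated occurrences of a variable on the right-hand side of concatenations compound multiplicatively along the topological order). Once $k$ is a fixed constant this multiplicative compounding is bounded, $\max_x\le k$ for every $x$, so each of the recursive definitions $\NewStrCon{y},\NewRegCon{y}$ contributes only $O(k)$-many fresh variables and the splittings range over $O(k)$-splittings of polynomial-size automata; the total size is $O(k)\times|\varphi|$. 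This is the main (and essentially only) obstacle, and it is really just a bookkeeping observation about the Step~2 construction rather than a new argument. Everything else is an immediate specialisation of the proof of Theorem~\ref{theo:smp}.
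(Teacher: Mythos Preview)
Your proposal is correct and follows essentially the same approach as the paper. The paper's proof is a single sentence pointing out that the bound $2^{(\Dim(\varphi)\times|\varphi|)^{O(1)}}$ established in the proof of Theorem~\ref{theo:smp} collapses to $2^{|\varphi|^{O(1)}}$ once $\Dim(\varphi)=k$ is a fixed constant; you have simply unpacked that observation by re-tracing Steps~2--3, Proposition~\ref{prop:BFL-PSPACE}, and Lemma~\ref{lm:correctness-hypothesis} with the dimension parameter made explicit.
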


This bound can be derived by noticing from the proof of Theorem \ref{theo:smp}
that the maximal solution size of $\varphi$ that one needs to explore is $F(x) 
:= 2^{(\Dim(\varphi) \times |\varphi|)^{O(1)}}$, which is exponential in
$|\varphi|$ if $\Dim(\varphi)$ is a fixed constant.
%
%
Note that the dimension of the scripts in 
all our examples is 2 (in fact
the dimensions of the examples in the benchmark of \cite{Stranger} are all 1,
except for one with dimension 4). 

\section{Adding Integer and Character Constraints}
\label{sec:ext}

In this section we extend the language $\AUWR$ from Section
\ref{sec:decidable_string} with {\em integer} and {\em character
constraints}, and show that the satisfiability problem remains
decidable in $\EXPSPACE$. We also show that this bound 
continues to hold in the presence of two other important features: 
the $\IndexOf$ constraints and disequalities between strings. 


Our language will use  two types of variables, $\String$ and $\Int$. The
type $\String$ consists of the string variables we considered in the
previous sections. In particular, a constraint in $\AUWR$ only uses
variables from $\String$. On the other hand, a 
variable of type $\Int$ (also called an {\em integer variable}) 
ranges over the set $\N$ of all natural numbers.
The choice of omitting negative integers is only for simplicity, 
but our results easily extend to the case when $\Int$ includes negative
integers. 
For each $T \in \{\String,\Int\}$, we use $\Var(T)$ to denote the set of
variables of type $T$. 

We start by defining integer
constraints, which allow us to express bounds for linear
combinations of lengths or number of occurences of symbols in words.  

\begin{definition}[Integer constraints] 
An \emph{atomic integer constraint} over $\Sigma$ is an expression of the form
$$a_1t_1 + \cdots + a_nt_n \, \leq \, d,$$ where $a_1, \ldots, a_n, d \in 
\mathbb{Z}$ are constant integers (represented in binary) and each $t_i$ is 
either (i) an integer variable 
$u \in \Var(\Int)$, (ii) $|x|$ for a string variable $x \in \Var(\String)$, or
(iii) $|x|_a$ for $x \in \Var(\String)$ and some constant letter 
$a \in \ialphabet$. Here,
$|x|$ (resp. $|x|_a$) denotes the length of $x$ (resp. the number of occurrences of
$a$ in $x$). 
An {\em integer constraint} over $\Sigma$ is a Boolean combination af
atomic integer constraints over $\Sigma$. 
\end{definition} 

Character constraints, on the other hand, allow us to compare symbols
from different strings. They are formally defined below. 


\begin{definition}[Character constraints]
A \emph{atomic character constraint} over $\Sigma$ is an expression of
the form $x[u] = y[v]$, where: (1) $x$ and $y$ are either 
a variable in $\Var(\String)$ or a word in $\Sigma^*$, 
and (2) $u$ and $v$ are either integer variables in $\Var(\Int)$ or
constant positive integers. Here, the interpretation of the symbol $x[u]$ is 
consistent with our notation from Section \ref{sec:prelim}, i.e., the $u$-th 
letter in $x$. A {\em character constraint} over $\Sigma$ is a Boolean 
combination af atomic character constraints over $\Sigma$.  
\end{definition} 

Next, we define the extension of the class $\AUWR$ with integer and
character constraints. 

\begin{definition}[The class $\AUWRe$] 
The class $\AUWRe$ consists of all formulas $\varphi
\wedge \theta_{\Int} \wedge \theta_{\Char}$ such that (i) $\varphi$ is a
constraint in $\AUWR$, (ii) $\theta_{\Int}$ is an integer constraint,
and (iii) $\theta_{\Char}$ is a character constraint. 
\end{definition} 

Since constraints in $\AUWRe$ are two-sorted, we have to slightly 
refine the notion of \defn{assignment}. Formally, an assignment 
for a constraint $\varphi$ in $\AUWRe$ is a mapping $\iota$ from
each variable $x \in \Var(T)$ in $\varphi$ to an object of type $T$ (i.e. either 
a string or an integer). We also assume for safety that for each term
of the form $x[u]$ in $\varphi$ it is the case that $\iota(u) \leq
|\iota(x)|$ (i.e., $\iota(u)$ is in fact a position in $\iota(x)$).
[If this assumption is not met, we can simply define that the assignment
does not satisfy the formula $\varphi$.]
As before, $\iota$ \defn{satisfies} $\varphi$ if the constraint
$\varphi$ becomes true under the substitution of each variable $x$ by
$\iota(x)$.  
We formalise this for atomic integer and character
constraints (as Boolean connectives are standard): 

\begin{enumerate}

\item $\iota$ satisfies the atomic integer constraint $\sum_{i=1}^n
  a_i t_i \leq d$ if and only if $\sum_{i=1}^n
  a_i \iota(t_i) \leq d$, where for each $1 \leq i \leq n$ we have
  that (i) $\iota(t_i) = |\iota(x)|$, if $t = |x|$
  for $x \in \Var(\String)$, and (ii) $\iota(t_i) = |\iota(x)|_a$, if $t = |x|_a$
  for $x \in \Var(\String)$ and $a \in \Sigma$. 

\item $\iota$ satisfies the atomic character constraint $x[u] = y[v]$
  if and only if $\iota(x)[\iota(u)] = \iota(y)[\iota(v)]$, where (i)
  $\iota(x) = x$ (resp., $\iota(y) = y$), if $x$ (resp., $y$) is a constant
  word over $\Sigma$, and (ii)
  $\iota(u) = u$ (resp., $\iota(v) = v$), if $u$ (resp., $v$) is a
  positive integer. 

\end{enumerate} 

The constraint $\varphi$ is \defn{satisfiable} if there exists a
satisfying assignment for it. The {\em satisfiability problem for
$\AUWRe$} is the problem of deciding if $\varphi$ is satisfiable, for a
given constraint $\varphi$ in $\AUWRe$. 

\OMIT{
But before studying the satisfiability problem for $\AUWRe$, we show
that this language is capable of expressing two important classes of
constraints: $\IndexOf$ constraints and disequalities. 

\OMIT{
A \defn{constraint in our language} is simply a conjunction of a string constraint,
an integer constraint, a character constraint, and a regular constraint.
}

\paragraph*{{\bf Expressing the $\IndexOf$ method}}
One reason we introduced the character constraints is, besides the use of the 
JavaScript string 
method \texttt{charAt} (which is used rather frequently in JavaScript according
to the benchmark \cite{Berkeley-JavaScript}), they can also be used to
define $\IndexOf(w,x)$ for any word $w \in \ialphabet^*$, which is
the most standard usage of $\IndexOf$ method in practice. We consider both
the \emph{first-occurrence} semantics (i.e., for an integer variable $n$,
the constraint $n = \IndexOf(w,x)$ says that $n$ is the first position in $x$ where 
$w$ occurs), or the \defn{anywhere} semantics (i.e., $n = \IndexOf(w,x)$ says that
$n$ is any position in $x$ where $w$ occurs).

Formally, an $\IndexOf$ constraint is a conjunction of expressions of
the form $n = \IndexOf(w,x)$, where $w \in \Sigma^*$, $x$ is a 
string variable/constant, and $n$ is either an
integer variable or a positive integer. The satisfaction of an
expression of this form 
(under any of the two semantics) with
respect to an assignment of the variables is the expected one (following the intuition
given in the previous paragraph). The next proposition states
that $\IndexOf$ constraints 
do not increase the ``expressiveness'' of $\AUWRe$.  

\begin{proposition}
Let $\varphi$ be the conjunction of a constraint in $\AUWRe$ and an
$\IndexOf$ constraint. Then there is a constraint $\varphi'$ in
$\AUWRe$ 
such that $\varphi$ is satisfiable if and only if $\varphi'$ is
satisfiable.  
    \label{prop:indexof}
\end{proposition}

\begin{proof}
    Let $w = a_1\ldots a_p \in \ialphabet^*$. For the anywhere semantics, every 
    occurrence of $n = \IndexOf(w,x)$ in $\varphi$ is replaced in $\varphi'$ by the formula
    \begin{multline*}
        x[m_1] = a_1 \wedge \cdots \wedge x[m_p] = a_p 
    \\ 
        n = m_1 \wedge m_2 = m_1 + 1 \wedge \cdots \wedge m_p = m_{p-1} + 1,
    \end{multline*} 
where $m_1,\dots,m_p$ are fresh integer variables. 
    
The first-occurrence semantics could be handled by first introducing
in $\varphi'$ a 
    relational constraint $x = x_1 x_2 x_3$
    and regular constraints stating that $w$ does not occur in $x_1$ and $w = x_2$. 
This would allow us
    to use the anywhere semantics to express $n = \IndexOf(w,x_2)$, which can be
    done as before. The problem with this approach is that the
introduction of the word equation $x = x_1 x_2 x_3$ may yield a
$\varphi'$ that is no longer uniquely definitional.  
To avoid this, we make a nondeterministic
guess as to how $x_1$ overlaps with
$x(1),\ldots,x(\max_x)$, e.g., it might overlap with $x(1)x(2)x(3)$. We will
then simply assert that $x(1)x(2)x(3) \in L$, where $L$ is the regular
language of words $v$ that contains no $w$ as a (contiguous) subword.
We then have to apply the splitting technique for the regular
constraint just as in Step of 
Section \ref{sec:decidable_string} to express $x(1)x(2)x(3) \in L$ as
$\bigwedge_{i=1}^3 x(i) \in L_i$. This can all be done by nondeterministic guesses
while incurring only a polynomial blowup.
\end{proof}

\paragraph*{{\bf Disequalities}} Assume that 
constraints in $\AUWRe$ are extended 
with disequalities of the form $x
\neq y$, for $x,y \in \Var(\String)$, which state that $x$ and $y$
are interpreted as different strings. The disequality relation is
rational (in fact, a regular relation), and thus can be expressed as a 
transducer $y =
R(x)$. The problem, again, is that the addition of this
transducer may yield a constraint that is no longer uniquely
definitional. To solve this, we use 
character constraints; in fact, the disequality $x \neq y$ can be encoded as
$x[i] \neq y[i]$, for a fresh variable $i \in \Var(\Int)$. More
formally, if $\varphi$ is a constraint in $\AUWRe$ and $x \neq y$ is a
disequality between string variables, then $\varphi \wedge (x \neq y)$
is satisfiable if and only if the $\AUWRe$ constraint $\varphi \wedge
(x[i] \neq y[i])$ is satisfiable.  } 

\subsection{The Satisfiability Problem for $\AUWRe$} 

In this section, we will show that our $\EXPSPACE$ upper 
bound for the satisfiability of $\AUWR$ extends to $\AUWRe$.

\begin{theorem}
    The satisfiability problem for the class $\AUWRe$ is solvable in 
    $\EXPSPACE$. Furthermore, if $\AUWRe$ has a solution, then it has
    a solution of size at most $2^{2^{p(x)}}$, for some polynomial $p(x)$.
    \label{th:ext}
\end{theorem}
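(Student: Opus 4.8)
The plan is to recycle the Step~1--Step~3 machinery of Section~\ref{sec:decidable_string}, extending it so that it also carries the integer and character constraints, and thereby to reduce satisfiability in $\AUWRe$ to a non-emptiness question about a succinctly presented multi-track transducer under a Presburger condition on the letter-counts of its tracks; effective semilinearity of Parikh images then settles this last question, and the small-model property of existential Presburger arithmetic --- applied to a formula of exponential size --- yields the $2^{2^{p(|\varphi|)}}$ bound.

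\emph{Step A (transport everything to $\AC$).} Write the input as $\varphi \wedge \theta_{\Int} \wedge \theta_{\Char}$ with $\varphi \in \AUWR$. First I would run Steps~1--3 on $\varphi$ to obtain the $\AC$-formula $\varphi''$, of size at most exponential in $|\varphi|$, together with the decomposition $x \mapsto (x(1),\dots,x(\max_x))$ of Lemma~\ref{lm:correctness-hypothesis}, under which $\iota(x) = \iota(x(1))\cdots\iota(x(\max_x))$ in corresponding solutions. Because lengths and letter-counts are additive under this splitting, each term $|x|$ in $\theta_{\Int}$ becomes $\sum_{i=1}^{\max_x}|x(i)|$ and each $|x|_a$ becomes $\sum_{i=1}^{\max_x}|x(i)|_a$, producing a linear integer constraint $\Theta_{\Int}$ over the variables of $\varphi''$ of at most exponential size. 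A character constraint $x[u]=y[v]$ is transported by guessing which pieces the referenced positions fall into: guess $1 \le j \le \max_x$, $1 \le j' \le \max_y$, introduce fresh integer variables $u',v'$ with the Presburger side-conditions $u = u' + \sum_{i<j}|x(i)|$, $1 \le u' \le |x(j)|$ (symmetrically for $v'$), and replace the constraint by $x(j)[u']=y(j')[v']$ (with the obvious finite disjunction when $x$ or $y$ is a constant word). It then remains to decide satisfiability of $\varphi'' \wedge \Theta_{\Int} \wedge \Theta_{\Char}$, where $\Theta_{\Char}$ is a conjunction of constraints $x'[u']=y'[v']$ over variables of $\varphi''$; I would finally guess a letter $a \in \Sigma$ for each conjunct and replace it by $x'[u']=a \wedge y'[v']=a$.

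\emph{Step B (fold the character constraints into the automaton).} By the construction behind Proposition~\ref{prop:BFL-PSPACE} --- equivalently, Nivat's theorem applied along the forest $\GraphR(\varphi'')$ of Lemma~\ref{lemma:ac} --- the solution set of $\varphi''$ is recognised by a multi-track transducer $\mathcal A$ with states of polynomial size and a transition relation checkable in polynomial space (one track per variable, synchronised by $\e$-padding). A constraint ``the $u'$-th letter of track $x'$ is $a$'' is enforced by augmenting $\mathcal A$ with one bit of control (``has the marked step occurred yet?'') and one fresh track that emits a symbol exactly at the steps where track $x'$ emits a non-$\e$ symbol and the marked step has not yet occurred; the marked step is the unique step at which the bit flips, and at that step track $x'$ is required to emit $a$. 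Forcing the fresh track to have length $u'-1$ (a new atomic integer constraint) then pins the marked position to $u'$. Iterating over all character conjuncts gives $\mathcal A'$ --- still polynomial-size states, polynomial-space-checkable transitions --- and an enlarged linear integer constraint $\Theta'_{\Int}$.

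\emph{Step C (solve the Presburger problem) and the obstacle.} It now suffices to decide whether $\mathcal A'$ has an accepting run whose per-track letter-counts satisfy $\Theta'_{\Int}$. Using effective semilinearity of Parikh images in its existential-Presburger form, I would obtain an existential Presburger formula defining the Parikh image of $L(\mathcal A')$ over the count variables $(|x'|_a)$; conjoining it with $\Theta'_{\Int}$ (the length terms being abbreviations $|x'| = \sum_a |x'|_a$) and existentially quantifying the integer variables gives an existential Presburger sentence equisatisfiable with the original constraint. Performed on $\varphi''$ itself --- rather than by materialising the doubly-exponentially many states of $\mathcal A'$ --- this stays within polynomial space in $|\varphi''| + |\Theta'_{\Int}|$, hence within $\EXPSPACE$ in $|\varphi|$; the small-model property of existential Presburger then yields a satisfying assignment with all integers bounded by $2^{2^{p(|\varphi|)}}$, and realising the corresponding Parikh vector by an accepting run of $\mathcal A'$ (whose length is bounded by the sum of the vector's entries) gives strings of length $\le 2^{2^{p(|\varphi|)}}$ for all variables of $\varphi''$, which recompose through Lemma~\ref{lm:correctness-hypothesis} to a solution of the original constraint of size $\le 2^{2^{p(|\varphi|)}}$. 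I expect the main obstacle to be the character constraints: the positions $u,v$ may be doubly-exponentially large and so cannot be stored in control states, and the delicate part is the Step~B gadget --- offloading the position-count onto a fresh synchronised track whose length is then constrained arithmetically --- which must be faithful, keep states of polynomial size, and respect the $\e$-padding synchronisation; a secondary, more routine care point is organising Step~C to stay in $\EXPSPACE$ rather than $2$-$\EXPSPACE$. (As noted in the text, $\IndexOf$ and string disequalities add nothing further, since $n=\IndexOf(w,x)$ and $x\neq y$ are directly expressible using the character constraints already present in $\AUWRe$; and the same argument with $\Dim(\varphi)$ fixed gives the $\PSPACE$ analogue.)
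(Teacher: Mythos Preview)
Your proposal is correct and follows essentially the same architecture as the paper: Step~A coincides almost verbatim with the paper's reduction from $\AUWRe$ to $\AURe$ (the paper also guesses which piece $x(i)$ absorbs position $u$ and introduces $u' = u - \sum_{j<i}|x(j)|$), and your Step~B gadget is isomorphic to what the paper does, except that the paper phrases it as a \emph{monotonic counter machine} rather than as extra tracks --- your ``marked-step'' bit is their flag counter $z_{x[u]}$, and your fresh synchronised track whose length encodes the position is their position counter $y_{x[u]}$.

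The one place where you undersell the difficulty is Step~C, which you call a ``secondary, more routine care point''. It is in fact the technical crux: a naive Parikh-image construction on $\mathcal A'$ yields a Presburger formula polynomial in the number of states, which is doubly exponential in $|\varphi|$, and that would blow the budget. The paper's way out is to invoke a result of Kopczy\'nski--To (their Proposition about monotonic $k$-counter machines) stating that the reachable configurations of a monotonic counter machine with $n$ states and $k$ counters are a union of existential Presburger formulas each of size polynomial in $k + \log n$ --- the $\log n$ being exactly what lets the doubly-exponential state space collapse. Your ``perform on $\varphi''$ itself rather than materialising $\mathcal A'$'' is the right instinct, but to make it precise you would need either that Kopczy\'nski--To bound or an equivalent succinct-Parikh-image result for products of small automata; it is not the standard Verma--Seidl--Schwentick statement. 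Once that is in place, your Scarpellini step and the reconstruction via Lemma~\ref{lm:correctness-hypothesis} match the paper exactly.
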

The rest of the section is dedicated to proving the theorem. 
Let $\AURe$ be the 
extension of $\AUR$ with integer constraints and character constraints. 
Given a formula
$\varphi \in \AUWRe$, we first transform $\varphi$ into a constraint $\varphi'$ in 
$\AURe$ by using (an extension of) the satisfiability-preserving transformation 
from Section \ref{sec:ub}. 
We will then show that the formula $\varphi'$ has a bounded model property
(cf. Lemma \ref{lm:bounded} below). More precisely, if $\varphi'$ is satisfiable,
then it has a satisfying assignment of size at most exponential in $|\varphi'|$.
This immediately provides a decision procedure for checking satisfiability of 
$\varphi$, though a naive algorithm only yields a triple-exponential procedure. 
We will show, however, that this yields a polynomial-space procedure for checking
satisfiability of $\varphi'$, and hence a single exponential space procedure for
checking satisfiability for $\varphi$.

\subsubsection{{\bf Transforming $\AUWRe$ into $\AURe$}}
\label{sec:red} 

Suppose that $\varphi = \varphi_{\String} \wedge \varphi_{\Int} \wedge 
\varphi_{\Char} \wedge \varphi_{\Reg} \in \AUWRe$, where $\varphi_{\String}$ is a 
relational constraint
in $\AUWR$, $\varphi_{\Reg}$ a regular constraint, $\varphi_{\Int}$ an integer 
constraint, and $\varphi_{\Char}$ a character constraint. 

We apply the transformations from Step 1 and 2 in 
Section \ref{sec:ub} on the
formula $\psi := \varphi_{\String} \wedge \varphi_{\Reg} \in \AUWR$, yielding an 
acyclic string/regular constraint $\psi' = \bigwedge_y \NewStrCon{y} \wedge
\NewRegCon{y}$ with no concatenation (but possibly some 
disjunctions), where each variable $x$ in $\psi$ is replaced by several 
variables $x(1),\ldots,x(\max_x)$ in $\psi'$.
Recall that Lemma \ref{lm:reduce-correctness} states that $\psi$
is satisfiable iff $\psi'$ is satisfiable. In fact, following the notation in
Step 2 of Section \ref{sec:ub}, Lemma
\ref{lm:correctness-hypothesis} shows that for each node $y$ in $\G(\psi)$ and 
every assignment $\iota: Z_y \to \Sigma^*$ of (string) variables in the
constraint $\psi_y$ (associated with the node $y$ in $\G(\psi)$),
the following two conditions are equivalent:
    
\begin{itemize}
    \item $\iota$ is a satisfying assignment for the formula $\psi_y$.
    \item There exists a satisfying assignment $\iota': Z_y' \to \Sigma^*$
        for the formula $\psi_y' := \bigwedge_{y \in Z_y} \NewStrCon{y} \wedge 
        \NewRegCon{y}$ such that for each $x \preceq y$: 
        \begin{equation*}
            \iota(x) = \iota'(x(1))\circ \cdots\circ \iota'(x({\max}_x)). \tag{*}
        \end{equation*}
    \end{itemize}

Handling the integer constraint is now easy. Because of (*),
we simply replace 
each occurrence of $|y|$ (resp. $|y|_a$, where $a \in \ialphabet$) in
$\varphi_{\Int}$ by
$\sum_{i=1}^{{\max}_y} |y(i)|$ (resp. $\sum_{i=1}^{{\max}_y} |y(i)|_a$). 
Let $\psi_{\Int}'$ be the resulting formula.
[Note that $y$ is \emph{not} a variable in $\psi'$.]
This transformation preserves satisfiability, even in the above stronger sense.

Let us now show how to deal with the character constraint $\varphi_{\Char}$.
Without loss of generality, we may assume that the term $t$ which occurs on 
left/right hand side of atomic character constraint is of the form $x[u]$ (for an 
integer variable $u$), which denotes the $u$-th character in $x$. [If $t$ is a string
variable $x$, we can replace $x$ by $x[u]$, where $u$ is a fresh $\Int$ variable, 
and add the integer constraint
$|x| = 1 \wedge u = 1$. Similarly, if $t$ is of the form $x[c]$, where $c$
is an integer constant, we could simply replace this by $x[u]$, for a fresh
$\Int$ variable $u$, and add the integer constraint $u = c$.] Now the $u$-th
character $x[u]$ in $x$ must fall within
precisely one of the word segments $x(1), \ldots, x(\max_x)$. Therefore, 
we simply make a nondeterministic guess on which segment $x(i)$ the position $x[u]$ 
belongs to, and replace
every occurrence of $x[u]$ by $x(i)[u']$, where $u'$ is a fresh $\Int$ variable,
and add an integer constraint of the form $u = u' + \sum_{j=1}^{i-1} |x(j)|$.
Observe that the constraint $\theta$ that we generate from $\varphi_{\Char}$ 
involves both integer constraint and character constraints. 
Then, the formula $\varphi$ is satisfiable iff, for some formula 
$\theta$ obtained from the aforementioned nondeterministic construction,
the formula $\psi' \wedge \psi_{\Int}' \wedge \theta$ is satisfiable.

Note, however, that $\psi'$ still has some disjunctions and so strictly
speaking it is not a formula in $\AUR$.  So, to complete our transformation of
$\varphi$ into $\AURe$, we use Step 3 from Section \ref{sec:ub}
on $\psi'$ to make further nondeterministic guesses to eliminate the disjunctions. 
Let us call a possible resulting formula $\psi''$. Therefore,
$\varphi$ is satisfiable iff, for some $\psi''$ and $\theta$, the formula 
$\psi'' \wedge \psi_{\Int}' \wedge \theta$ is satisfiable.

\OMIT{
Lemma \ref{lm:correctness-hypothesis} implies
more. 
Note that this 
transformation is \emph{nondeterministic}, so
the only guarantee that is provided is that 
}

\subsubsection{{\bf Bounded Model Property for $\AURe$}}
We will prove a bounded model property for $\AURe$.

\begin{lemma}[Bounded Model]
    Given a formula $\varphi$ in $\AURe$, if it is satisfiable, then there exists a
    satisfying assignment of whose strings are of length at most exponential 
    in $|\varphi|$ and whose integers are of size (in binary) at most polynomial in
    $|\varphi|$.
    \label{lm:bounded}
\end{lemma}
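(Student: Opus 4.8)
The plan is to reduce the bounded model property for $\AURe$ to the bounded model property already established for $\AC$ in Proposition \ref{prop:BFL-PSPACE}, while separately accounting for the integer and character constraints by a Parikh-image/existential-Presburger argument. First I would separate a formula $\varphi \in \AURe$ into its relational part $\varphi_{\Rel}$ (conjunctions of $R(x,y)$ together with the regular constraints $P(x)$ — i.e.\ a formula in $\AC$), its integer part $\varphi_{\Int}$, and its character part $\varphi_{\Char}$. The key structural fact is Lemma \ref{lemma:ac}: after Step 3, $\varphi_{\Rel}$ is in $\AC$, so the dependency graph $\GraphR(\varphi_{\Rel})$ is acyclic, and by Proposition \ref{prop:BFL-transducer} the set of solutions of $\varphi_{\Rel}$ is recognised by an $m$-track transducer $\Aut$ whose states have size polynomial in $|\varphi|$ (so $\Aut$ has at most exponentially many states). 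The length of words in any solution is thus governed by traversals of $\Aut$.

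Next I would push the integer and character constraints into this automaton. A character constraint $x[u] = y[v]$ is equivalent (after guessing the common letter $a \in \ialphabet$) to the conjunction of ``the $u$-th letter of $x$ is $a$'', ``the $v$-th letter of $y$ is $a$'', plus an integer constraint relating $u$, $v$ to lengths of prefixes; each such ``the $u$-th letter is $a$'' assertion can be tracked by annotating the run of $\Aut$ with a marked position on the relevant track, costing only a polynomial blow-up in the state space and introducing a fresh integer variable counting the position. After this, every character constraint has been absorbed, and we are left with $\Aut'$ (still of at most exponential size, polynomial-size states) together with a pure integer constraint $\theta_{\Int}$ over variables that are either program integers $u$, lengths $|x|$, letter-counts $|x|_a$, or the fresh position variables. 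Now I would take the Parikh image of $L(\Aut')$: by Parikh's theorem the set of tuples $(|x|_a)_{x,a}$ (equivalently, the full vector of lengths and letter-counts and marked-position counters) realised by accepting runs of $\Aut'$ is a semilinear set, definable by an existential Presburger formula $\Phi_{\Aut'}$ of size polynomial in $|\Aut'|$, hence exponential in $|\varphi|$. Conjoining $\Phi_{\Aut'}$ with $\theta_{\Int}$ gives an existential Presburger formula of exponential size; by the classical small-model property for existential Presburger arithmetic (solutions bounded by a singly-exponential function of the formula size), if it is satisfiable it has a solution in which every integer — in particular every $|x|$ — is at most doubly exponential in $|\varphi|$.

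This is almost, but not quite, the claimed bound; the main obstacle is getting the word lengths down to \emph{singly} exponential rather than merely doubly exponential. To close this gap I would argue as in the proof of Proposition \ref{prop:BFL-PSPACE}: once the Parikh/length vector is fixed to a small witness, satisfiability of $\varphi_{\Rel}$ with those prescribed lengths is again an emptiness question for an automaton of exponential size, so a witnessing solution of length $O(|\Aut'|) = 2^{p(|\varphi|)}$ exists — one does not actually need the full doubly-exponential length suggested by the naive Presburger bound, because the length variable is itself the quantity being bounded and the automaton $\Aut'$ furnishes a short accepting run realising the chosen Parikh vector. More carefully: I would first show the integer part is satisfiable over $\N$ iff it is satisfiable with all values polynomially bounded \emph{relative to the chosen run length}, then observe the run length can be taken $\le |\Aut'|$, and finally reconcile the two by a fixpoint/monotonicity argument on the interplay between run length and the linear integer constraints (linear constraints with binary coefficients of magnitude $\le 2^{|\varphi|}$ acting on exponentially-bounded lengths still admit exponentially-bounded witnesses). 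Thus any satisfiable $\varphi \in \AURe$ has a model whose strings have length at most $2^{p(|\varphi|)}$ and whose integers have binary size at most $p(|\varphi|)$, as required. The integer bound on binary size follows because, with string lengths fixed to be at most singly-exponential (i.e.\ polynomial binary size), every remaining integer variable is pinned down by a linear system whose coefficients and constants have polynomial binary size, so its value also has polynomial binary size.
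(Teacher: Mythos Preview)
Your overall architecture---build the $m$-track transducer $\Aut$ from Proposition~\ref{prop:BFL-transducer}, absorb the character constraints by marking positions on the tracks, pass to a Parikh/semilinear description, conjoin with the integer constraints, and invoke a small-model property for existential Presburger arithmetic---is exactly the route the paper takes. The gap you yourself flag, however, is genuine, and your attempted patch does not close it.

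The problem is quantitative. With $\Aut'$ of exponential size, any off-the-shelf translation of its Parikh image into an existential Presburger formula (e.g.\ Verma--Seidl--Schwentick style) produces a formula whose size \emph{and number of auxiliary variables} are polynomial in $|\Aut'|$, hence exponential in $|\varphi|$. Plugging that into Scarpellini's bound $2^{(|\theta|+c)^{O(1)}}$ with $c$ exponential yields only a doubly-exponential witness. Your recovery argument---fix a small Parikh vector first, then find a short run of $\Aut'$ realising it---is circular: the lengths $|x|$ are components of the Parikh vector, so once you have committed to a Parikh vector with doubly-exponential entries (as the naive Presburger bound allows), every realising run is that long. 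Nor can you simply cap the run length at $|\Aut'|$ a priori, because the integer constraints $\theta_{\Int}$ may rule out all short runs; the interaction between $\Phi_{\Aut'}$ and $\theta_{\Int}$ is precisely what you need the Presburger argument to resolve, and you cannot decouple them by the monotonicity/fixpoint hand-wave you sketch.

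The paper closes the gap with a sharper structural result rather than a different argument shape. It recasts $\Aut'$ as a \emph{monotonic counter machine} with $k=O(|\varphi|)$ counters (one per $|x|_a$, per integer variable, and two per character term $x[u]$) and $n=2^{|\varphi|^{O(1)}}$ states, and then invokes the bound from \cite{KT10} (see also \cite{anthony-thesis}): the reachability set of such a machine is a \emph{disjunction} of existential Presburger formulas, where each disjunct has size polynomial in $k+\log n$ and only $O(k)$ variables. The number of disjuncts is polynomial in $n$ (so exponential), but that is irrelevant---a satisfying assignment lives in one disjunct, and that single disjunct conjoined with $\varphi'_{\Int}$ is a Presburger formula of size and variable-count polynomial in $|\varphi|$. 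Now Scarpellini gives integers of polynomial binary size, hence string lengths at most singly exponential. The ingredient you are missing is this $\log n$ dependence with $O(k)$ variables per disjunct; without it the argument stalls exactly where you noticed.
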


Before proving this lemma, we will first show how this can be used to
obtain Theorem \ref{th:ext}.

\subsubsection{Lemma \ref{lm:bounded} Implies Theorem \ref{th:ext}}

As mentioned in Section \ref{sec:red}, the problem of checking whether
an $\AUWRe$ formula $\varphi$ is satisfiable can be reduced in
nondeterministic exponential time to checking whether an $\AURe$
formula $\varphi'$ is satisfiable. Next we construct an 
algorithm that solves this problem in $\PSPACE$ in the size of
$\varphi'$, and, therefore, in $\EXPSPACE$ in the size of $\varphi$
(since $\varphi'$ might be exponentially bigger than $\varphi$).
Theorem \ref{th:ext} then follows from the fact that nondeterministic
exponential time is contained in $\EXPSPACE$ and $\EXPSPACE$
computable functions are closed under composition. 

In order to do this, the first step is to
restate a stronger form of Proposition \ref{prop:BFL-PSPACE}. 
To this end,
we first recall the standard generalisation of the notion of transducers to
allow an arbitrary number of tracks (e.g. see \cite{CC06}).
An \defn{$m$-track (rational) transducer} over the alphabet $\ialphabet$ is a
tuple $\Aut = (\Gamma,\controls,\transrel,q_0,\finals)$, where
$\Gamma := \ialphabet_\epsilon^m$ and $\ialphabet_\epsilon := \ialphabet
\cup \{\epsilon\}$, such that $\Aut$ is syntactically an NFA over $\Gamma$.
In addition, we define the $m$-ary relation $R \subseteq (\ialphabet^*)^m$ that
$\Aut$ \emph{recognises} to consist of all tuples $\bar w$ for which there is an accepting 
run 
\[
    \pi := q_0 \tran{\sigma_1} q_1 \tran{\sigma_2} \cdots \tran{\sigma_n} q_n
\]
of $\Aut$ (treated as an NFA) such that $\bar w = \sigma_1 \circ \sigma_2 
\circ \cdots \circ \sigma_n$, where the string concatenation operator $\circ$ 
is extended to tuples over words component-wise (i.e. $(v_1,\ldots,v_k) 
\circ (w_1,\ldots,w_k) = (v_1w_1,\ldots,v_kw_k)$). An $m$-ary relation is said
to be \defn{rational} if it is recognised by an $m$-ary transducer. To avoid 
notational clutter, we shall confuse an $m$-ary transducer and the $m$-ary 
relation that it recognises. In the following proposition, the set of solutions
(i.e. satisfying assignments) to a formula $\varphi$ in $\AUWRe$ --- i.e. mappings
from variables $x$ in $\varphi$ to strings, integers, or characters depending
on the type of $x$ --- is interpreted as a relation (i.e. a set of tuples) by
fixing any ordering to the variables occuring in $\varphi$.

\begin{proposition}[\cite{BFL13}]
    There exists an exponential-time algorithm for computing an $m$-track 
    transducer $\Aut = (\Gamma,\controls,\transrel,q_0,\finals)$ for the set of solutions 
    of an 
    input formula $\varphi \in \AUR$ with variables $x_1,\ldots,x_m$, where each 
    state in $\controls$ is of size polynomial in $|\varphi|$. Furthermore, there 
    exists a polynomial space algorithm for:
    \begin{enumerate}
    \item Computing $q_0$. 
    \item Checking whether a string is a state of $\controls$.
    \item Checking whether a state $q$ belongs to $\finals$.  
    \item Checking whether $(q,\bar a,q') \in \delta$, for some given states $q$ and 
        $q'$ and a symbol $\bar a \in \Gamma$. 
    \end{enumerate}
    \label{prop:BFL-transducer}
\end{proposition}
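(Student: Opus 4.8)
The plan is to build $\Aut$ as a ``flat product'' that runs every atomic automaton and transducer of $\varphi$ in parallel over $m$ tracks (one track per variable $x_i$), and then to check that (a) its states are succinct, (b) the four structural queries are cheap, and (c) --- the only real content --- $\Aut$ recognises exactly the set of solutions of $\varphi$, which is precisely where acyclicity is needed.

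Write $\varphi$ as $\bigwedge_{j=1}^{p} R_j(x_{s_j},x_{t_j}) \wedge \bigwedge_{j=1}^{q} P_j(x_{w_j})$ with each $R_j$ a transducer and each $P_j$ an NFA (after the Step~1 preprocessing each variable carries at most one regular constraint, but this is inessential below). Since $\varphi \in \AUR \subseteq \AUWR$ is straight-line, its directed dependency graph is acyclic and every variable is the output of at most one $R_j$; a DAG in which every vertex has in-degree at most $1$ has a forest as its underlying undirected graph (an undirected cycle would contain a vertex into which both incident cycle-edges point, violating the in-degree bound), so $\GraphR(\varphi)$ is a forest. Let track $i$ carry $x_i$. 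A state of $\Aut$ is the tuple $(q_1,\dots,q_p,r_1,\dots,r_q)$ of current states of the $R_j$'s and $P_j$'s; $q_0$ and $F$ collect the corresponding initial resp.\ final states. A transition $(q,\bar a,q')$, $\bar a=(a_1,\dots,a_m)\in\Sigma_{\epsilon}^{m}$, is permitted iff for every $j$ either $(q_j,(a_{s_j},a_{t_j}),q_j')\in\delta_{R_j}$ or ($a_{s_j}=a_{t_j}=\epsilon$ and $q_j=q_j'$), and symmetrically each $P_j$ either consumes a letter $a_{w_j}\in\Sigma$ or ($a_{w_j}=\epsilon$ and $r_j=r_j'$). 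Then a state has size $O(|\varphi|)$, and the four queries are in fact solvable in polynomial time: computing $q_0$ or testing $q\in F$ merely reads off component data, testing that a string is a state parses it and checks each component against $\varphi$, and testing a transition checks the per-component conditions against the transition tables given in $\varphi$. This yields the asserted polynomial-space algorithm, and since there are at most exponentially many (poly-size) states, an exponential-time construction of the whole transducer.

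It remains to prove $\Aut$ recognises the solution set. Soundness is immediate: in an accepting run on $\bar w$, an idle move of $R_j$ forces $\epsilon$ on both tracks $s_j,t_j$, so erasing the idle steps leaves an accepting $R_j$-run on exactly the pair $(w_{s_j},w_{t_j})$, hence $(w_{s_j},w_{t_j})\in R_j$; likewise $w_{w_j}\in L(P_j)$; so $\bar w$ is a solution. Completeness is where the forest structure is used, and I would prove it by induction on $p$. For $p=0$ one interleaves the letter-by-letter runs of the $P_j$'s. For $p\ge1$, pick a leaf $x$ of $\GraphR(\varphi)$; by the structure above it occurs in exactly one relational constraint $C$, joining it to some $y\ne x$. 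Let $\varphi'$ be $\varphi$ with $x$, $C$, and the regular constraints on $x$ deleted. By induction the witness runs for $\varphi'$ merge into a run $\Pi'$ of the corresponding transducer over the remaining variables (still including $y$), and the witness run of $C$ together with the letter-by-letter runs of the regular constraints on $x$ merge into a run $\pi^{+}$ over tracks $\{x,y\}$. Now combine $\Pi'$ and $\pi^{+}$: their only common track is $y$ --- this is exactly where acyclicity bites, since in a cyclic query a detached part could meet the rest at two variables and the two reading schedules of a common track could then deadlock --- so zip the two schedules of reading $\iota(y)$ letter by letter, and between consecutive $y$-reads run first all pending moves of $\Pi'$ (which touch only variables $\ne y$) and then all pending moves of $\pi^{+}$ (which touch only $x$), every other component idling because its tracks carry $\epsilon$ at those steps. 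This produces a run of $\Aut$ accepting the whole solution with all components ending in final states.

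I expect the only real obstacle to be making the completeness step fully rigorous: checking that the zip-and-interleave of $\Pi'$ and $\pi^{+}$ is genuinely a run of $\Aut$ --- that every component receives exactly its expected letters in order and idles precisely when its tracks carry $\epsilon$ --- which is routine but bookkeeping-heavy, and it is the one place where the hypothesis that $\GraphR(\varphi)$ is a forest (equivalently, $\varphi\in\AC$) cannot be dispensed with. Everything else --- succinctness of states, the polynomial-time structural queries, and soundness --- follows directly from the definitions; alternatively one could derive the coarse statement from Proposition~\ref{prop:BFL-PSPACE} together with the effective pre-image/post-image closure of rational relations applied bottom-up along the tree, at the cost of a less transparent state space.
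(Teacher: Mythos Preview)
The paper does not actually prove this proposition: immediately after stating it, the authors simply write that it ``is a stronger version of Proposition~\ref{prop:BFL-PSPACE}, which follows from the proof of Theorem~6.7 in \cite{BFL13}.'' So there is nothing in the paper to compare your argument against; you are supplying a proof that the paper deliberately outsources.

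That said, your argument is correct and is exactly the standard construction one expects (and, in outline, what \cite{BFL13} does). Your derivation that $\GraphR(\varphi)$ is a forest --- directed acyclicity plus in-degree $\le 1$ from unique definitionality --- is precisely the content of Lemma~\ref{lemma:ac} in the paper. The flat-product transducer is the right object, and the leaf-peeling induction for completeness, synchronising $\Pi'$ and $\pi^{+}$ on the single shared track $y$, is the canonical way to show that acyclic conjunctions of rational constraints admit a joint run; your diagnosis that this zipping step is where acyclicity is genuinely used is spot on. Two small remarks. First, when you ``pick a leaf $x$'' you should say you pick a \emph{degree-one} vertex of a non-trivial component (isolated vertices are not ``in exactly one relational constraint'' and are handled trivially); since $p\ge 1$ such a vertex exists. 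Second, your four structural queries are in fact decidable in polynomial \emph{time}, which is stronger than the polynomial \emph{space} the proposition asserts; this is worth saying explicitly, since it is what makes the on-the-fly exploration in the subsequent $\PSPACE$ algorithm work.
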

This proposition is a stronger version of Proposition \ref{prop:BFL-PSPACE}, which 
follows from the proof of Theorem 6.7 in \cite{BFL13}.

Obtaining a $\PSPACE$ algorithm for checking satisfiability of a given formula
$\varphi'$ in $\AURe$ is now almost immediate. Our nondeterministic algorithm guesses
an assignment to each character variable in $\varphi$, whose size is linear in
$|\varphi|$. By virtue of Lemma \ref{lm:bounded}, our algorithm needs to guess
an assignment to each integer variable in $\varphi$ of size polynomial in
$|\varphi|$ (i.e. numbers represented in binary). In effect, if $|x|$ (resp.
$|x|_a$) appears in the integer constraint of $\varphi$, our algorithm
also guesses the length of (resp. number of occurrences of $a$ in) the string 
variable $x$. Our algorithm now checks that
the integer constraints and character constraints in $\varphi$ are satisfied.
Next our algorithm guesses assignments to the string variables in $\varphi$.
However, since they are of exponential size in the worst case, our algorithm
will have to construct the assignment on the fly. By using
Proposition \ref{prop:BFL-transducer}, we will have to \emph{simultaneously} 
construct the string assignments to all string variables in $\varphi$.
To this end, for each string variable 
$x$, we keep track of $|\ialphabet|+1$ extra integer counters (counting
in binary) $l(x)$ and $l_a(x)$, respectively, for each $a \in\ialphabet$. The counter $l(x)$
(resp. $l_a$) keeps track of the length of (resp. number of occurrences of $a$ in)
the partially constructed string assignment for $x$. Putting this all together,
if $\Aut = (\Gamma,\controls,\transrel,q_0,\finals)$ is the $m$-track transducer for the 
set of solutions for $\varphi$ (as in Proposition \ref{prop:BFL-transducer}),
our algorithm first computes $q_0$ and let $q = q_0$. It then repeats the following 
step until (i) $q \in F$, and (ii)  
$l(x) = |x|$ and $l_a(x) = |x|_a$, for each string variable $x$ and letter $a \in 
\ialphabet$: 
\begin{enumerate}
\item 
Guess a state $q'$ of $\controls$ and
a symbol $\bar a \in \Gamma = \ialphabet_\epsilon^m$ of $\Aut$, and check
that $(q,\bar a,q') \in \transrel$ in polynomial space. 
\item If 
$\bar a = (a_1,\ldots,a_m)$, then set $l(x_i) := l(x_i) + 1$ and $l_a(x_i) := 
l_a(x_i) + 1$ for each $i \in \{1,\dots,m\}$ and $a \in \ialphabet$
satisfying $a_i = a$.  
\item Set $q := q'$
\end{enumerate} 

This is a nondeterministic algorithm that uses polynomial space. 
Nondeterministic algorithms can be determinised at the cost of quadratically
extra space \cite{Savitch70}. As for the bounded 
model property part of Theorem \ref{th:ext}, it can be derived in precisely
the same way as Theorem \ref{theo:smp}.

\OMIT{
This lemma immediately gives decidability of $\AUWRe$: simply enumerate all
satisfying assignments within the bound given in Lemma \ref{lm:bounded}, and
check whether one of them is a satisfying assignment of $\varphi$.
We first show how to use Lemma \ref{lm:bounded} to obtain an $\EXPSPACE$
algorithm for checking satisfiability of formulas in $\AUWRe$. Later on in
this section, we will then show how to prove Lemma \ref{lm:bounded}.
}

\subsubsection{Proving Lemma \ref{lm:bounded}}

Our proof idea goes as follows. Given a satisfiable formula $\varphi \in \AURe$ with
$m$ variables,
we suppose that $\iota$ is a satisfying assignment
of $\varphi$. It assigns each character variable $x[u]$ to a certain character
$\iota(x[u]) \in \ialphabet$. So, we will only have to ensure the existence of
a satisfying assignment that assigns each string (resp. integer) variable to a small 
enough string (resp. integer). To this end, Proposition \ref{prop:BFL-transducer} 
gives an 
$m$-track transducer $\Aut = (\Gamma,\controls,\transrel,q_0,\finals)$ that recognises
the set of solutions of $\varphi$. The number of states in $\Aut$ is exponential
in $|\varphi|$.
Next we erase the input tape of $\Aut$, while equipping it with nonnegative 
integer-valued counters that \emph{cannot
be decremented}. The resulting machine 
is a \emph{monotonic (Minsky's) counter machine}, whose set of final configurations
captures the set of solutions for $\varphi$.
\OMIT{

%
}
Monotonic counter machines are restrictions of reversal-bounded counter machine
\cite{Iba78} (i.e. counter machines whose counters 
can switch between 
non-incrementing and non-decrementing modes only for a fixed $r \in \N$ number of
times). [In the case of monotonic counter machines, we have $r = 0$].
Such a computation model is \emph{not} Turing-complete. In fact, their sets of
reachable configurations are effectively semi-linear, as was first shown in 
\cite{Iba78}. We will use a recent result from \cite{KT10,anthony-thesis} to 
analyse the size of the smallest reachable configuration, which will give us the
bounded model property of $\AURe$.

We now formalise the notion of monotonic counter machine with $k \in \N$ counters. 
A \defn{monotonic counter machine} 
is a tuple $\Aut = (\controls,\transrel,q_0,\finals,P)$, where: 
(1) $\controls$ is a finite set of states, (2) $q_0 \in \controls$ is an initial
state, (3) $\finals \subseteq \controls$ is a set of final states, (4)
$P = \{p_1,\ldots,p_k\}$ is a set of $k$ counters, and (5) $\transrel
\subseteq (\controls \times \Cons_{P}) \times 
(\controls \times \{0,1\}^k)$ is the transition relation, where 
$\Cons_P$ is the set of counter tests of the form $\bigwedge_{i=1}^k
p_i \sim_i 0$ such that $\sim_i \ \in \{=,>\}$ for each $1 \leq i \leq k$. 
A vector $\bar v \in \{0,1\}^k$ such that $(q,\psi,q',\bar v) \in
\transrel$, for $q,q' \in Q$ and $\psi \in \Cons_P$, 
is called an \emph{update vector}. In the sequel we shall also denote
this vector by its characteristic set, i.e., the one which consists 
of all counters $p_i \in P$ such
that $\bar v[i] = 1$.

A \defn{configuration} of $\Aut$ is a pair $(q,\bar v)$, where $q \in \controls$
and $\bar v \in \N^k$. A \defn{run} $\pi$ of $\Aut$ is a sequence of the form
\[
    (q_0,\bar v_0), \, (q_1,\bar v_1), \, \dots, \,(q_n,\bar v_n)
\]
such that: 
\begin{itemize}
    \item $q_i \in \controls$ for each $1 \leq i \leq n$,
    \item $\bar v_0 = (0,\ldots,0)$ (i.e., counters are initially
empty), 
    \item for each $1 \leq i \leq n$ there exists a transition $(q_{i-1},\psi(\bar x),
        q_{i},\bar c) \in \delta$ such that $\psi(\bar v_{i-1})$ is true, and 
        $\bar v_i = \bar v_{i-1} + \bar c$.
\end{itemize}
The configuration $(q_n,\bar v_n)$ is said to be \defn{accepting} if 
$q_n \in \finals$. The set of accepting configurations of $\Aut$ is denoted by
$\Lang(\Aut)$. 

From our transducer $\Aut = (\Gamma,\controls,\transrel,q_0,\finals)$, we construct
our monotonic counter machine $\AutB = (\controls,\transrel',q_0,\finals,P)$,
where $P$ consists of the following counters:
\begin{itemize}
    \item $c_{|x|_a}$ for each string variable $x$ in $\varphi$ and each 
        letter $a \in \ialphabet$;
    \item $c_u$ for each integer variable $u$ in $\varphi$; 
    \item $y_{x[u]}$ and $z_{x[u]}$ for each $x[u]$ occuring in a character 
        constraint of $\varphi$.
\end{itemize}
The counter $c_{|x|_a}$ records the number of $a$'s seen so far in
in the transducer's track corresponding to the variable $x$.
The counter $c_u$ records the guessed value for the variable $u$.
To avoid notational clutter, we shall confuse $c_{|x|_a}$ (resp. $c_u$) with 
$|x|_a$ (resp. $u$).
The counter $y_{x[u]}$ records a guess for the position of $u$ in $x$, which has
to be recorded separately due to the different tracks in $\Aut$.
The value of $z_{x[u]}$ is a boolean variable (i.e., either 0 or 1) that 
acts as a flag whether the guess for the variable $y_{x[u]}$ is complete.

We now specify the transitions in $\transrel'$. In doing so, we will ensure
that once a variable of the form $z_{x[u]}$ is set to 1, the value of $y_{x[u]}$
can no longer be incremented. Let $W$ be the set of all character variables $x[u]$
in $\varphi$. For each $Y \subseteq W$, let $\psi_Y$ denote the formula of the 
form
$\bigwedge_{x[u] \in Y} z_{x[u]} = 0 \wedge \bigwedge_{x[u] \in W\setminus 
Y} z_{x[u]} > 0$. Then: 
\begin{enumerate}
    \item If $(q,\bar a,q') \in \transrel$, where $\bar a = (a_1,\ldots,a_m)$,
        then for each subset $Y\subseteq W$ we add to $\delta'$
\emph{each} transition of the form
        \[
            (q,\psi_Y,q',Z), 
        \]
        where $Z$ consists of: (1) each $|x|_{a_i}$ with $a_i \neq \epsilon$,
        (2) $y_{x[u]}$ for each $x[u] \in Y$, and (3) 
        if $z_{x_i[u]} = 0$ and $a_i = \iota(x_i[u])$, the set $Z$ may
        (nondeterministically) contain $z_{x_i[u]}$.
    \item $(q,\top,q,\{u\})$ for each integer variable $u$ in
$\varphi$. 
\end{enumerate}
In other words, the first transition above simply simulates a transition of
$\Aut$, while the second transition nondeterministically increments the integer
counter $u$. 

Recall that $\varphi$ is our initial formula in $\AURe$ and $\iota$ is
a satisfying assignment for it. 
Now, let $\varphi_\Int$ be the conjunct of
$\varphi$ containing the integer constraint. We use $\varphi_\Int'$ to denote
a conjunction of (i) the constraint $\varphi_\Int$ but substituting every 
occurrence of $|x|$ by $\sum_{a\in\ialphabet} |x|_a$, (ii) a conjunction of
constraints of the form $u = y_{x[u]}$ for each $x[u] \in W$ 
\anthonychanged{(i.e. all positions $y_{x[u]}$ equal $u$)}, and (iii) a 
conjunction of constraints of the form $z_{x[u]} = 1$ for each $x[u] \in W$
\anthonychanged{(i.e. all $y_{x[u]}$ have been completely guessed)}.
The following lemma is immediate from our construction and Proposition 
\ref{prop:BFL-transducer}.

\begin{lemma}
    Given a configuration $(q,\bar v)$ of $\AutB$, the following are
    equivalent:
    \begin{itemize}
        \item $(q,\bar v) \in \Lang(\AutB)$ and $\bar v$ satisfies
$\varphi_\Int'$. 
        \item There exists a satisfying assignment $\iota'$ of $\varphi$ 
            whose integer values agree with $\bar v$ and whose character
            values agree with $\iota$.
    \end{itemize}
    \label{lm:correctness-counter}
\end{lemma}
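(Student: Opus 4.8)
The plan is to prove the two implications separately, in both cases exploiting the tight correspondence between runs of the monotonic counter machine $\AutB$ and runs of the $m$-track transducer $\Aut$ from which it was built. The only information $\AutB$ discards relative to $\Aut$ is the actual letters written on the tapes: it replaces them by the counters $c_{|x|_a}$ (which, by clause (1) of $\transrel'$, tally the number of non-$\epsilon$ occurrences of $a$ on the $x$-track), by the counters $y_{x[u]}$ (which record a guessed position within $w_x$, incremented with each non-$\epsilon$ read on the $x$-track while the flag $z_{x[u]}$ is still $0$), by the flags $z_{x[u]}$, and by the self-loops of type (2) that freely increment the integer counters $c_u$. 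Thus, modulo the type-(2) self-loops, a run of $\AutB$ is exactly a run of $\Aut$ annotated with nondeterministic choices of the subsets $Y$ and of which $z$-flags to raise; this is the skeleton of both directions, and the claim is indeed ``immediate from the construction and Proposition \ref{prop:BFL-transducer}'' once this correspondence is made explicit.

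For the direction ($\Rightarrow$), suppose $(q,\bar v)\in\Lang(\AutB)$ via a run $\pi$, and $\bar v$ satisfies $\varphi_\Int'$. Deleting the type-(2) self-loops from $\pi$ leaves a sequence of type-(1) transitions that projects, step by step, onto an accepting run of $\Aut$ (accepting since $q\in\finals$); let $\bar w=(w_1,\dots,w_m)$ be the tuple of words it reads. By Proposition \ref{prop:BFL-transducer}, $\bar w$ lies in the relation recognised by $\Aut$, so $\iota'(x_i):=w_i$ satisfies the string and regular conjuncts of $\varphi$. Put $\iota'(u):=\bar v(c_u)$ for each integer variable $u$. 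For the integer conjunct: by construction $\bar v(c_{|x|_a})=|w_x|_a$, hence $\sum_{a\in\ialphabet}\bar v(c_{|x|_a})=|w_x|=|\iota'(x)|$, so the fact that $\bar v\models\varphi_\Int'$ translates back to $\iota'\models\varphi_\Int$. For a character conjunct $x[u]=y[v]$: since $\bar v\models\varphi_\Int'$ we have $\bar v(z_{x[u]})=1$, so $\pi$ raised this flag at some step, and by clause (3) that step read the symbol $\iota(x[u])$ on the $x$-track; from that step on $y_{x[u]}$ is frozen (the increment is guarded by $\psi_Y$), and $\bar v\models\varphi_\Int'$ forces $\bar v(y_{x[u]})=\bar v(c_u)=\iota'(u)$, so the flagged letter is the $\iota'(u)$-th letter of $w_x$. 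Hence $\iota'(x)[\iota'(u)]=\iota(x[u])$; thus $\iota'$ agrees with $\iota$ on characters and every character conjunct has the same truth value under $\iota'$ as under $\iota$, which satisfies $\varphi$.

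For ($\Leftarrow$), I would run the argument in reverse. Given a satisfying assignment $\iota'$ of $\varphi$ with $\iota'(u)=\bar v(c_u)$ for all integer variables $u$ and $\iota'(x)[\iota'(u)]=\iota(x[u])$ for all character terms, pick an accepting run of $\Aut$ on $\bar w:=(\iota'(x_1),\dots,\iota'(x_m))$ (which exists, again by Proposition \ref{prop:BFL-transducer}). Lift it to a run of $\AutB$ using type-(1) transitions only: at each step take the unique subset $Y$ compatible with the current $z$-values, and for each character term $x[u]$ raise $z_{x[u]}$ at precisely the step at which the $x$-track reads its $\iota'(u)$-th non-$\epsilon$ symbol --- this is legal because that symbol equals $\iota(x[u])$, so clause (3) applies --- which simultaneously freezes $y_{x[u]}$ at value $\iota'(u)$. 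Finally append, for each integer variable $u$, exactly $\iota'(u)$ type-(2) self-loops. The resulting run ends in an accepting configuration whose counters encode $c_{|x|_a}=|\iota'(x)|_a$, $c_u=y_{x[u]}=\iota'(u)$ and $z_{x[u]}=1$; this is the sought configuration $(q,\bar v)$, and it satisfies $\varphi_\Int'$ because $\iota'\models\varphi_\Int$ and the auxiliary equalities hold by construction.

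The main obstacle is the bookkeeping around the character variables: one must check that a flag $z_{x[u]}$ can be --- and, for an accepting configuration meeting $\varphi_\Int'$, must have been --- raised exactly once, at the step reading the $\iota'(u)$-th symbol of the $x$-track, and that monotonicity (no counter decrements) is never violated, in particular that $y_{x[u]}$ is genuinely frozen from that step onward so that the terminal constraint $y_{x[u]}=c_u$ pins the guessed position to $u$. Everything else is a routine induction on the length of the run.
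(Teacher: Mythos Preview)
Your proposal is correct and is precisely the detailed unfolding of what the paper leaves implicit: the paper offers no proof beyond declaring the lemma ``immediate from our construction and Proposition~\ref{prop:BFL-transducer}'', and your two directions spell out exactly the run-to-run correspondence between $\AutB$ and $\Aut$ (modulo the type-(2) self-loops and the flag choices) that makes it so. One very minor point: as literally written, clause~(2) of the paper's transition relation increments $y_{x[u]}$ on \emph{every} type-(1) step with $x[u]\in Y$, not only on non-$\epsilon$ reads of the $x$-track; your reading (increment only on non-$\epsilon$ reads of that track) is however clearly the intended one and is what makes the identity $y_{x[u]}=\iota'(u)=$ (position of the flagged letter in $w_x$) go through in both directions.
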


We now use the following proposition, which is a result of \cite{KT10} (see the 
proof of \cite[Proposition 7.5.5]{anthony-thesis}):
\begin{proposition}
    Given a monotonic $k$-counter machine $\mcl{A}$ with $n$ states, the
    set of reachable configurations can be represented as a disjunction of
    existential Presburger formulas, each of size polynomial in 
    $k+\log(n)$ and at most $O(k)$ variables.
\end{proposition}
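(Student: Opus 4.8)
The plan is to follow the strategy of \cite{KT10,anthony-thesis}: reduce the set of reachable configurations of a monotonic $k$-counter machine to a bounded composition of \emph{Parikh images of NFAs}, and then invoke a structural description of such Parikh images in which every linear component uses only $O(k)$ period vectors of polynomially bounded norm. The first step exploits monotonicity to control the shape of a run. Since every counter is non-decreasing, the set $S$ of counters currently equal to $0$ can only shrink along a run, so any run factors into at most $k{+}1$ \emph{phases}, the $i$-th phase being a maximal segment during which the zero-set is a fixed $S_i$, with $[k] = S_0 \supset S_1 \supset \cdots \supset S_\ell$ (strict inclusions, so $\ell \le k$) and with the phases separated by at most $k$ distinguished transitions whose update vector first turns some zero counter positive. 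Inside phase $i$ every counter test of $\mcl{A}$ evaluates the same way, so $\mcl{A}$ acts there exactly like an ordinary NFA $N_i$: the sub-automaton keeping only the transitions whose guard is the characteristic test of $S_i$ and whose update vector avoids $S_i$, read over the ``alphabet'' $\{0,1\}^k$ of update vectors. I would put the choice of the \emph{phase profile} --- the chain $S_0 \supset \cdots \supset S_\ell$, the $\le k$ boundary transitions, and the entry/exit state of each phase --- into the outer disjunction; this is only $O(k^2 + k\log n)$ bits of data, that is $\mathrm{poly}(k+\log n)$, and there are at most $n^{O(k)}\cdot 2^{O(k^2)}$ such profiles.

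The counter vector accumulated inside phase $i$ is, by construction, a (componentwise-summed) element of the \emph{Parikh image} of the set of $N_i$-paths from the phase-entry state to the phase-exit state. Hence the set of reachable configurations is, up to adding the $\le k$ fixed boundary increments, a $(\le k{+}1)$-fold Minkowski sum of such Parikh images, taken over all phase profiles. The crux is therefore the following structural fact about NFAs, which is the heart of \cite{KT10}: the Parikh image of an NFA with $n$ states over a $k$-letter alphabet is a union of at most $n^{O(k)}$ linear sets of the form $\{\,\bar b + \sum_{j=1}^{O(k)} \mu_j\,\bar p_j : \mu_j \in \N\,\}$ in which $\bar b$ and each $\bar p_j$ have entries bounded by a polynomial in $n$. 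I would prove this in two moves. First, decompose a path from $p$ to $q$ into a \emph{base path} of length $\mathrm{poly}(n)$ together with inserted simple cycles (each of length $\le n$, hence with effect vector in $\{0,\dots,n\}^k$), all lying inside one connected region of the transition graph; the base path contributes $\bar b$ (entries $\mathrm{poly}(n)$) and the simple cycles are the candidate periods. Second, observe that the cycle-effects available inside a fixed region form a subset of $\{0,\dots,n\}^k \subseteq \Z^k$, and by the integer Carath\'eodory theorem of Eisenbrand and Shmonin every vector in the integer cone they generate is already a non-negative integer combination of at most $2k$ of them; so $O(k)$ periods per component suffice, and the number of components is dominated by the number of normalised base-path/region choices, which is $n^{O(k)}$.

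Putting it together, a disjunct of the final formula fixes a phase profile and, for each phase, one linear component of that phase's Parikh image (equivalently, one applies integer Carath\'eodory once to the effects of all simple cycles occurring across all phases, keeping the total number of periods down to $O(k)$). It then asserts that the control state equals the last exit state and that $\bar v$ equals the sum of the chosen base vectors plus the boundary increments plus $\sum_j \mu_j \bar p_j$, with fresh multiplicity variables $\mu_j$. All constants are $\mathrm{poly}(n)$, hence occupy $O(\log n)$ bits; there are $\mathrm{poly}(k)$ conjuncts and $O(k)$ multiplicity variables (reusing names across the $\le k{+}1$ phases), so each disjunct has size $\mathrm{poly}(k+\log n)$ and $O(k)$ variables, with $n^{O(k^2)}$ disjuncts in total. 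Soundness is the easy direction: given the disjunct data and the multiplicities, rebuild the run phase by phase, inserting the simple cycles with the prescribed multiplicities into the corresponding base paths (possible because within a component the base path and its cycles share one connected region) and concatenating consecutive phases through the boundary transitions. Completeness follows by taking an arbitrary accepting run, reading off its phase profile, and, inside each phase, reading off which linear component contains the accumulated effect.

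I expect the main obstacle to be the structural lemma of the second paragraph --- in particular the ``connectivity'' bookkeeping that simultaneously (a) keeps the base path polynomially short, (b) guarantees that the selected simple cycles are genuinely insertable into it, and (c) keeps the number of linear components down to $n^{O(k)}$ so that each disjunct stays of size $\mathrm{poly}(k+\log n)$ rather than $\mathrm{poly}(n)$. This is exactly the delicate part of \cite{KT10}; the integer Carath\'eodory bound is the tool that makes it work, collapsing the potentially exponentially many available cycles into $O(k)$ period vectors per component while forcing the base vectors to range over only $n^{O(k)}$ possibilities. The remaining steps --- the phase decomposition for the zero-tests, the reduction of each phase to a Parikh image, and the mechanical assembly of the disjunction --- are routine.
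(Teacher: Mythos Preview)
The paper does not actually prove this proposition: it is stated as a black-box citation of \cite{KT10} (with a pointer to \cite[Proposition 7.5.5]{anthony-thesis}) and then used. Your proposal is a faithful reconstruction of that cited argument --- the monotone zero-set phase decomposition into at most $k{+}1$ NFA-like segments, reduction to Parikh images of NFAs with $n$ states over a $\{0,1\}^k$-alphabet, and the Kopczy\'nski--To structural bound (via integer Carath\'eodory) giving $O(k)$ periods of magnitude $\mathrm{poly}(n)$ per linear component --- so it is both correct and essentially the same approach the paper relies on.
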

Indeed, in the above proposition the number of disjuncts is polynomial in
$n$, but this is not important for our purpose. It is now easy to obtain a
satisfying assignment for $\varphi$ given our original satisfying assignment
$\iota$. By the above proposition, the set of reachable configurations of the 
monotonic counter machine 
$\mcl{B}$ that we constructed is a disjunction of existential Presburger formulas 
each of size 
polynomial in $k+\log(n)$ and with $O(k)$ variables, where $k = O(|\varphi|)$ and 
$n = 2^{|\varphi|^{O(1)}}$. By Lemma \ref{lm:correctness-counter} and our
assumption that $\varphi$ is satisfiable with assignment $\iota$, it follows that 
one of these disjuncts $\psi$ is satisfiable.
Scarpellini \cite[Theorem 6(a)]{Sca84} proved that
a satisfiable existential Presburger formula $\theta$ with $c$ variables has
solutions where each variable is assigned a number that is at most 
$2^{(|\theta|+c)^{O(1)}}$ (and thus can be represented with at most $(|\theta|+c)^{O(1)}$
bits). Applying Scarpellini's result on $\psi$ now gives us a satisfying
assignment of $\varphi$ which assigns numbers of polynomially many bits to
integer variables of $\varphi$ and lengths of string variables. This
completes the proof of Lemma \ref{lm:bounded}.

\subsection{Extensions with Disequalities and $\IndexOf$}

Finally, we show that two important features can be added to the
language while retaining decidability in $\EXPSPACE$: Disequalities
between strings and $\IndexOf$ constraints. 

\paragraph*{{\bf Disequalities:}} Assume that 
constraints in $\AUWRe$ are now extended 
with disequalities of the form $x
\neq y$, for $x,y \in \Var(\String)$, which state that $x$ and $y$
are interpreted as different strings. The disequality relation is
regular, and thus can be expressed as a transducer $y =
R(x)$. The problem is that the addition of this
transducer may yield a constraint that is no longer uniquely
definitional. To solve this, we use 
integer and character constraints; in fact, the disequality $x \neq y$
is equivalent to 
$(|x| \neq |y|) \vee (x[u] \neq y[u])$, for a fresh variable $u \in \Var(\Int)$. More
formally, if $\varphi$ is a constraint in $\AUWRe$ and $x \neq y$ is a
disequality between string variables, then $\varphi \wedge (x \neq y)$
is satisfiable if and only if either the $\AUWRe$ constraint $\varphi \wedge
(|x| \neq |y|)$ or the $\AUWRe$ constraint $\varphi \wedge 
(x[u] \neq y[u])$ is satisfiable. Checking if any of these constraints
is satisfiable can be solved in $\EXPSPACE$ from Theorem \ref{th:ext}.
Clearly, adding more disequalities does not increase the computational
cost if we use a nondeterministic algorithm that chooses to check
either $|x| \neq |y|$ or $x[u] \neq y[u]$ for each disequality $x \neq
y$. 

\OMIT{
A \defn{constraint in our language} is simply a conjunction of a string constraint,
an integer constraint, a character constraint, and a regular constraint.
}

\paragraph*{{\bf Expressing the $\IndexOf$ method:}}
One reason we introduced the character constraints is, besides the use of the 
JavaScript string 
method \texttt{charAt} (which is used rather frequently in JavaScript according
to the benchmark \cite{Berkeley-JavaScript}), they can also be used to
define $\IndexOf(w,x)$ for any word $w \in \ialphabet^*$, which is
the most standard usage of $\IndexOf$ method in practice. We consider both
the \emph{first-occurrence} semantics (i.e., for an integer variable $u$,
the constraint $u = \IndexOf(w,x)$ says that $u$ is the first position in $x$ where 
$w$ occurs), or the \defn{anywhere} semantics (i.e., $u = \IndexOf(w,x)$ says that
$u$ is any position in $x$ where $w$ occurs).

Formally, an $\IndexOf$ constraint is a conjunction of expressions of
the form $u = \IndexOf(w,x)$, where $w \in \Sigma^*$, $x$ is a 
string variable/constant, and $u$ is either an
integer variable or a positive integer. The satisfaction of an
expression of this form 
(under any of the two semantics) with
respect to an assignment of the variables is the expected one (following the intuition
given in the previous paragraph). The next proposition states
that $\IndexOf$ constraints 
do not increase the ``expressiveness'' of $\AUWRe$.  

\begin{proposition}
Let $\varphi$ be the conjunction of a constraint in $\AUWRe$ and the
$\IndexOf$ constraint $u = \IndexOf(w,x)$. The satisfiability of
$\varphi$ can be checked in $\EXPSPACE$.  
    \label{prop:indexof}
\end{proposition}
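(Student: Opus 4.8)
The plan is to reduce the conjunction $\varphi \wedge (u = \IndexOf(w,x))$ to an equisatisfiable constraint in $\AUWRe$ and then appeal to Theorem \ref{th:ext}. Write $w = a_1\cdots a_p$ with each $a_i \in \ialphabet$. The elementary observation is that the predicate ``$w$ occurs in $x$ starting at position $u$'' is directly expressible in $\AUWRe$: introduce $p$ fresh integer variables $m_1,\dots,m_p$ and add the conjunction
\[
    x[m_1] = a_1 \wedge \cdots \wedge x[m_p] = a_p \wedge u = m_1 \wedge m_2 = m_1 + 1 \wedge \cdots \wedge m_p = m_{p-1} + 1
\]
of character constraints and integer constraints. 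Under the \emph{anywhere} semantics this is exactly the meaning of $u = \IndexOf(w,x)$, so replacing the $\IndexOf$ atom by the displayed formula yields a $\varphi'$ in $\AUWRe$ with $|\varphi'| = O(|\varphi| + |w|)$ and with $\varphi'$ satisfiable iff $\varphi$ is; Theorem \ref{th:ext} then gives the $\EXPSPACE$ bound. (The ``$w$ does not occur in $x$'' outcome, if one wishes to model it, is a disjoint regular constraint forbidding $w$ as a factor of $x$, handled in the same way.)

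Under the \emph{first-occurrence} semantics one must additionally rule out any occurrence of $w$ in $x$ starting at a position strictly smaller than $u$. The standard device is to split $x = x_1 \cdot x_2 \cdot x_3$, assert $x_2 = w$ and (via an integer constraint) $u = |x_1| + 1$, and impose the regular constraint that $x_1\cdot x_2$ contains $w$ only as its trailing factor --- a regular property, recognised e.g. by the Knuth--Morris--Pratt automaton of $w$. The main obstacle, and essentially the only point requiring care, is that adding the defining equation $x = x_1 x_2 x_3$ destroys the uniquely-definitional (straight-line) structure of $\varphi$, since $x$ already has a definition, so the transformation of Section \ref{sec:ub} can no longer be applied to the new constraint as it stands.

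To get around this I would not introduce any new defining equation for $x$, but instead push the splitting down onto the segment variables $x(1),\dots,x(\max_x)$ produced by the Step 2 transformation of Section \ref{sec:ub}, in the same spirit as the treatment of character constraints in Section \ref{sec:red}. By Lemma \ref{lm:correctness-hypothesis} every satisfying assignment reads $x$ as $x(1)\circ\cdots\circ x(\max_x)$, so the prefix $x_1$ preceding the first occurrence of $w$ either equals $x(1)\circ\cdots\circ x(j)$ for some $j$, or ends strictly inside a single segment $x(j)$. Since $\max_x$ is at most exponential in $|\varphi|$, the algorithm nondeterministically guesses this index $j$ (and, in the second case, a cut position inside $x(j)$; when $x(j)$ is itself defined by an equation --- an identity copy or an application of a transducer --- this cut is propagated one level deeper so that the extra variables are attached only to source segments and are never redefined), asserts that the relevant prefix lies in the regular language $L$ of words with no contiguous occurrence of $w$, and then reuses the automaton-splitting technique of Step 2 to rewrite that regular constraint as $\bigwedge_i x(i) \in L_i$ for suitable regular $L_i$. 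The value of $u$ is recovered by an integer constraint expressing $u = 1 + \sum_{i<j}|x(i)| + (\text{offset inside }x(j))$, exactly the bookkeeping already used for terms $x[u]$ in Section \ref{sec:red}. Each such step adds only a polynomial number of nondeterministic guesses and a polynomial-size $\AUWRe$ formula, so the resulting constraint is still of size polynomial in $|\varphi|$ before Step 3 is invoked, and Theorem \ref{th:ext} delivers the $\EXPSPACE$ bound. If several $\IndexOf$ atoms are present, the reduction is applied to each independently, which does not affect the asymptotic complexity.
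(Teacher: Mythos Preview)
Your proposal is correct and follows essentially the same approach as the paper: the anywhere semantics is handled identically (fresh integer variables plus character constraints), and for the first-occurrence semantics you likewise observe that the naive factorisation $x = x_1 x_2 x_3$ breaks the straight-line form and instead push the split onto the segment variables $x(1),\dots,x(\max_x)$, guessing where the prefix ends and expressing the ``no earlier occurrence'' condition as a regular constraint on a concatenation of segments via the automaton-splitting technique of Step~2. If anything, you are slightly more careful than the paper's sketch in two places: you phrase the regularity condition on $x_1\cdot x_2$ (so that occurrences of $w$ straddling the $x_1/x_2$ boundary are also excluded), and you explicitly treat the case where the cut falls strictly inside a segment.
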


\begin{proof}
    Let $w = a_1\ldots a_p \in \ialphabet^*$. For the anywhere semantics, every 
    occurrence of $u = \IndexOf(w,x)$ in $\varphi$ is replaced by the formula
    \begin{multline*}
        x[u_1] = a_1 \wedge \cdots \wedge x[u_p] = a_p 
    \\ 
        u = u_1 \wedge u_2 = u_1 + 1 \wedge \cdots \wedge u_p = u_{p-1} + 1,
    \end{multline*} 
where $u_1,\dots,u_p$ are fresh integer variables. The resulting
formula is an
$\AUWRe$ constraint whose satisfiability can be checked in $\EXPSPACE$
from Theorem \ref{th:ext}.  
    
The first-occurrence semantics could be handled by replacing $u =
\IndexOf(w,x)$ with a  
    relational constraint $x = x_1 x_2 x_3$
    and regular constraints stating that $w$ does not occur in $x_1$ and $w = x_2$. 
This would allow us
    to use the anywhere semantics to express $u = \IndexOf(w,x_2)$, which can be
    done as before. The problem with this approach is that the
introduction of the word equation $x = x_1 x_2 x_3$ may yield a
constraint that is no longer uniquely definitional.  
To avoid this, we make a nondeterministic
guess (as we did for formula $\varphi_{\Char}$ in Section
\ref{sec:red}) 
as to how $x_1$ overlaps with
$x(1),\ldots,x(\max_x)$, e.g., it might overlap with $x(1)x(2)x(3)$. We will
then simply assert that $x(1)x(2)x(3) \in L$, where $L$ is the regular
language of words $v$ that contains no $w$ as a (contiguous) subword.
We then have to apply the splitting technique for the regular
constraint just as in Step of 
Section \ref{sec:decidable_string} to express $x(1)x(2)x(3) \in L$ as
$\bigwedge_{i=1}^3 x(i) \in L_i$. This can all be done by nondeterministic guesses
while incurring only a polynomial blowup.
\end{proof}

\section{Related Work and Future Work}
\label{sec:rw} 

In this section, we mention a few related works and discuss their connections
with our work in more detail. Roughly
speaking, they can be classified into three categories: (1) decidability 
results, (2) heuristics and string solver implementations, (3) benchmarking 
examples. We shall also mention a few possible research avenues in passing.

\paragraph*{{\bf Decidability results:}}

In \S Introduction, we have mentioned the results of Makanin's and 
Plandowski's 
\cite{Makanin,Plandowski,Plandowski99,Plandowski06} on the 
decidability and complexity of satisfiability for word equations (a conjunction 
of equations of the form $v = w$, where $v$ and $w$ are a concatenation of 
string constants and variables). We should also remark that the decidability
(with the same $\PSPACE$ complexity)
extends to quantifier-free first-order theory of strings with concatenations
and regular constraints \cite{buchi}. Since extending word equations with
finite-state transducers yields undecidability (see Section \ref{sec:lang}),
the straight-line fragment $\AUWR$ of our core string constraint language is 
incomparable to word equations with regular constraints (neither subsumes the 
other). The fragment $\AUWR$ is, in a sense,
more complex since its computational complexity is $\EXPSPACE$, though 
for constraints of small dimensions the complexity reduces to $\PSPACE$
(cf. Theorem \ref{th:pspace}). 
In addition, it is still a long-standing open problem whether word equations 
with 
length constraints is decidable, though it is known that letter-counting (i.e. 
counting the number of occurrences of 0s and the number of occurrences of 1s 
separately) yields undecidability \cite{buchi}. On the other hand, the extension
$\AUWRe$ of our straight-line fragment $\AUWR$ is decidable (with the same 
complexity) and yet admits general letter-counting. 

In our decidability proof of Theorem \ref{th:expspace}, we have also used the 
result of Barcelo \emph{et al.} \cite{BFL13} (see Proposition 
\ref{prop:BFL-PSPACE}) that acyclic conjunctions of rational relation 
constraints (with regular constraints) is decidable in $\PSPACE$. 
Their logic $\sf AC$, however, supports neither string concatenations nor 
letter-counting constraints. In fact, it is easy to show by a standard pumping 
lemma argument that the constraint $x = y\cdot y$ cannot be expressed
in $\sf AC$. For this reason, our logics $\AUWR$ and $\AUWRe$ are not
subsumed in $\sf AC$. 

Abdulla \emph{et al.} \cite{Abdulla14} studied acyclic constraints over 
systems of word equations with a length predicate (without transducers)
and disequality constraints, for which they showed decidability. Our
decidable logics are incomparable to their logic.
On the one hand, $\AUWRe$ supports finite-state transducers, letter-counting,
and \texttt{IndexOf} constraints, which are not supported by their logic.
Our logic also supports unrestricted disequality constraints, 
whereas their logic supports only restricted (acyclic) disequalities.
On the other hand, the string logic of \cite{Abdulla14} supports 
string equations of the form $x\cdot y = z\cdot z'$ (i.e. both sides of the
equations contain different variables), which they showed could be reduced
to a boolean combination of regular constraints. Using this reduction,
we can incorporate this feature into $\AUWRe$, yielding a more expressive 
decidable string logic.
%

\OMIT{
Satisfiability for word equations 
were proven to be decidable by Makanin \cite{Makanin} in 1977. The 
complexity of Makanin's algorithm is super-exponential. There were a lot 
of works on improving the complexity of Makanin's algorithm \cite{??} 
\anthony{Pablo?}
culminating in Plandowski's polynomial space (i.e. single-exponential time)
algorithm \cite{Plandowski99,Plandowski} in 1999. Plandowski's 
algorithms also ...
}

\OMIT{
Mention symbolic transducers, symbolic finite-state transducers, ...
}

\paragraph*{{\bf Heuristics and string solver implementation:}}
In \S Introduction, we have mentioned the large amount of works in the past
seven years or so towards developing practical string solvers (e.g. 
\cite{Stranger,fmsd14,Berkeley-JavaScript,S3,BTV09,Abdulla14,cvc4,HW12,RVG12,Yu09,Balzarotti08,patching,symbolic-transducer,ganesh-word,SMV12,Was07,Was08a,Was08b,FL10,FPBL13,BEK,HAMPI,Z3-str,yu2011,CMS03,GSD04,Min05,DV13}). 
We are not aware of existing string solvers that support both 
concatenations and finite-state transductions. However, string solvers that 
support concatenations and the \texttt{replace-all} operator (i.e. a subset of
finite-state transductions) are available, e.g., \cite{S3,fmsd14,BTV09}. 

Since the focus of our work is on the fundamental issue of decidability,
we consider our work to be complementary to these works. In fact,
our results may be construed as providing some \emph{completeness guarantee} for
existing string solvers. 
Practical string solvers do not implement Makanin's or Plandowski's algorithms,
but instead rely on certain heuristics (e.g.  bounding the maximum length $k$ 
of solutions \cite{Berkeley-JavaScript,BTV09,HAMPI}). For this reason, none of 
the above solvers have a completeness guarantee for the entire class of word 
equations. 
However, when the 
input string constraint $\varphi$ falls within the logic $\AUWRe$, the bounded
model properties of $\AUWR$ and $\AUWRe$ (e.g. Theorem \ref{theo:smp} and 
Theorem \ref{th:ext}) 
imply that string solvers need only look for solutions of size at most 
$2^{2^{p(|\varphi|)}}$ for some polynomial $p(x)$. Double exponential size 
is of course only an extremely crude estimate, so in practice one could devise
an algorithm for computing a better estimate $f(\varphi) \leq
2^{2^{p(|\varphi|)}}$ by looking at the structure of the formula $\varphi$.
A rough estimate of $f(\varphi)$ could, for example, be obtained by first
computing the dimension of $\varphi$ (which could be computed quickly);
as we have seen in Theorem \ref{th:pspace-bmp}, when the dimension is 
\emph{small} the double exponential bound actually reduces to exponential size.
\OMIT{
We leave the construction of a better algorithm for estimating $f(\varphi)$ for 
future work.}
\begin{future}
Give a better algorithm for computing a better estimate $f(\varphi)$ of the 
maximum size of the solutions for straight-line formulas $\varphi$ that need to 
be explored.
\end{future}

\OMIT{
All of these solvers circumvent the undecidability of the logic by different
heuristics. 
The solver of \cite{BTV09} first finds a satisfying assignment 
$\nu$ to the length abstraction of the given string constraint $\varphi$ using
a solver for linear arithmetic, and then applies bounded-length string solver 
to find a string concretisation of the integer variables in the abstraction
(if no solution is found here, backtrack). The solver of \cite{S3} finds 
a satisfying solution of the constraint $\varphi$ by folding and unfolding the 
recursive definitions of each operator. Termination guarantee can be 
achieved for solutions of fixed length. Finally, 
Discuss \cite{S3,fmsd14}.
}


Veanes \emph{et al.} (e.g. see \cite{HV11,symbolic-transducer,DV13,BEK}) have 
observed 
that, in practice, the number of transitions of finite-state transducers for 
encoding web sanitisation functions could become large fairly quickly
(due to large alphabet size, e.g., \texttt{utf-8}). For this reason,
they introduce extensions of finite-state transducers that allow succinct
representations 
by allowing transitions to
take an arbitrary formula in a decidable logical theory, while taking advantage
of state-of-the-art SMT solvers for the theory. As a simple example, 
consider the finite-state transducer that converts a sequence of digits
(over the alphabet 
$\ialphabet = \{\text{\texttt{0}},\ldots,\text{\texttt{9}}\}$) 
to its HTML character numbers (over the alphabet $\ialphabet' = 
\{\text{\texttt{\&}},\text{\texttt{;}},\text{\texttt{\#}},\text{\texttt{0}},
\text{\texttt{9}}\}$). The general formula for this is that
a digit \text{\texttt{i}} is converted to \texttt{\&\#(48+i);}. Using the
standard finite-state transducers, we would require about $2+10 \times 3 = 32$
states, whereas representing using symbolic finite-state transducers only
$\sim 4$ states and $\sim 4$ transitions are required. There are real-world 
examples where 
this compression would be enormous (e.g. see the encoding of \texttt{HTMLdecode}
in \cite{symbolic-transducer} as a symbolic transducer). For this reason, 
for future work, it would make sense to consider an extension of our work that 
uses symbolic finite-state transducers (or extensions thereof) both from
practical and theoretical viewpoints.
\begin{future}
    Study the extension of $\AUWR$ and $\AUWRe$ with symbolic (finite-state)
transducers. 
\end{future}

In order to be able to apply string solvers to analyse injection and XSS
vulnerabilities, it is paramount to develop realistic browser models, which
would model implicit browser transductions. Preliminary works in this
direction are available (e.g. see \cite{web-model,Scriptgard}). 
We believe that an interesting (but perhaps extremely challenging) line of 
future work is to develop a formal and precise browser transductions and their 
transductions (e.g. for a particular version of Firefox) as finite-state 
transducers (or extensions thereof). Although some such transducers
are already available (e.g. see 
\cite{BEK,DV13,symbolic-transducer,BEK-website}), much work remains to be
done to develop full-fledged browser models that can capture all the
subtlety of browser behaviors (e.g. those that can be found in
\cite{mXSS,html5sec}).


%
%


\paragraph*{{\bf Benchmarking examples:}}
In this paper, we have provided four examples of analysis of mutation-based XSS
vulnerabilities 
that can be expressed in $\AUWR$. A few other interesting XSS vulnerability
examples from \cite{mXSS,Stock14} and \cite{html5sec} can actually be expressed 
in our 
logic, though the vulnerabilities only exist in older browsers (e.g. IE8).
We also note that the five benchmarking
examples of PHP programs from \cite{Stranger} that exhibit SQL injection and
XSS vulnerabilities (cf.  \url{http://www.cs.ucsb.edu/~vlab/stranger/}) can also
be expressed in $\AUWR$. As we have argued in \S Introduction, the benchmarking
examples from Kaluza \cite{Berkeley-JavaScript} are in \emph{solved forms},
and therefore expressible in $\AUWRe$. To the best of our knowledge, the
benchmarking examples from \cite{Stranger} and \cite{Berkeley-JavaScript} do
not contain mutation XSS test cases.

\OMIT{
We believe that the works
of \cite{Abdulla14}, \cite{fmsd14}, and \cite{BFL13} are the closest to our paper.
Yu \emph{et al.} \cite{fmsd14} studied string analysis for a language that
combines the concatenation and
the \texttt{replace-all} operators. Their technique circumvents undecidability
by \emph{overapproximating} an assignment where the r.h.s. is the concatenation 
operator (esp. when two variables share a common origin). An overapproximation
technique is useful for proving that a program is free of vulnerabilities, but
might occasionally give false negatives (i.e. declaring that a program has a bug,
whereas in reality it is bug-free). In contrast, our result restricts to a 
fragment of the string logic at the outset, but provides a sound and complete 
procedure for deciding whether a formula (encoding the logic of the programs)
is correct. Our fragment is an acyclic fragment of string logic, which is
appropriate for vulnerability detection via bounded model checking and dynamic
symbolic execution. 
In addition, the technique of \cite{fmsd14} is equipped with an \emph{widening} 
technique for handling loops in the program, which we do not consider in this paper.
In the future, it would be interesting to investigate how to combine our technique
with such a widening operator to handle loops in the programs.

Finally, we mention the result by \cite{SMV12}. Real-world programming
languages almost always have a complex semantics, and in the research 
literature it is unavoidable to make some assumptions on the semantics. Such is the 
case with the function \texttt{replace-all}, which has different matching 
strategies in real-world programming languages.
Sakuma \emph{et al.} \cite{SMV12} have provided generic algorithms for converting
\texttt{replace-all} with different matching strategies (longest match, first match,
etc.) to finite-state input/output transducers. In this paper, we often
assume that the \texttt{replace-all} operation is already given as a transducer.
}

\OMIT{
\anthony{Mention that people add replace-all to make up for the lack of
transducers power of word equations.}
}
 
\acks
We thank Leonid Libkin and anonymous referees for their helpful feedback. We 
also thank Lukas Holik and Joxan Jaffar for the fruitful
discussion. Lin was supported by Yale-NUS College through the MoE Tier-1 grants
R--607--265--056--121 and IG15--LR001. Barcel\'o is funded by the Millenium 
Nucleus Center for Semantic Web Research under grant NC120004.

\bibliographystyle{abbrvnat}

\bibliography{references}

\shortlong{}{
\section*{Appendix} 

\subsection*{Proof of Proposition \ref{prop:undec-prog}:} 

We use a reduction from PCP. Recall that an 
input to PCP are two equally long lists $a_1,a_2,\dots,a_n$
and $b_1,b_2,\dots,b_n$ of words over alphabet $\Sigma$. 
We want to decide whether there exists a {\em solution} for this input, i.e., 
a nonempty sequence of indices $i_1,i_2,\dots,i_k$ such that $1 \leq i_j
\leq n$ ($1 \leq j \leq k$) and $a_{i_1} a_{i_2} \cdots a_{i_k} =
b_{i_1} b_{i_2} \cdots b_{i_k}$.

Before explaining how our reduction works, we introduce some
notation. 
Assume without loss of generality that $\Sigma$ is disjoint from
$\mathbb{N}$. With every input $a_1,a_2,\dots,a_n$ and
$b_1,b_2,\dots,b_n$ for PCP over alphabet $\Sigma$, we define 
an alphabet $\Sigma(n) := \Sigma \cup
\{1,2,\dots,n\}$. 
We also define 
regular expressions
$L_i$ over $\Sigma(n)$, for $1 \leq i \leq 6$, 
 as follows: 
(1) 
 $L_1 := (\bigcup_{1 \leq i \leq n} a_i
\cdot i)^+$, (2) $L_2 = L_3 = L_5 = \Sigma(n)^+$, (3) $L_4 := (\bigcup_{1 \leq j \leq n} b_j
\cdot j)^+$, and (4)  
$L_6 = \epsilon$. 

Further, define $A = C = \{1,\dots,n\}$ and $B = D = \Sigma$.
Thus, if $w$ is a word over $\Sigma(n)$ it is the case that 
$\replace{w}{\epsilon/A} =$ $\replace{w}{\epsilon/C}$ corresponds to the
word obtained from $w$ by deleting every symbol from
$\{1,\dots,n\}$. On the other hand,  
$\replace{w}{\epsilon/B} = \replace{w}{\epsilon/D}$ corresponds to the
string obtained from $w$ by deleting every symbol from $\Sigma$.  

We now explain our reduction. Consider an input to PCP given by lists 
$a_1,a_2,\dots,a_n$ and
$b_1,b_2,\dots,b_n$ of words over $\Sigma$. We then construct a string
constraint
$\varphi$ as follows:  
\begin{eqnarray} \label{eq:pcp} 
    x_2 & = & \replace{x_1}{\epsilon/A} \wedge \nonumber \\
    x_3 & = & \replace{x_1}{\epsilon/B} \wedge \nonumber \\
    x_2 & = & \replace{x_4}{\epsilon/C} \wedge \nonumber \\
    x_3 & = & \replace{x_5}{\epsilon/D} \wedge \nonumber \\
    x_4 & = & x_6 . x_5 \wedge \nonumber \\
        &   & \bigwedge_{i=1}^6 x_i \in L_i, 
\end{eqnarray}
where $A$, $B$, $C$, $D$ are the subsets of $\Sigma(n)$ defined above, and
correspondingly for the regular expressions $L_1,\dots,L_6$.  
Clearly, $\varphi$ is of the form required by the statement of the
proposition. 

Assume first that the constraint $\varphi$ 
is satisfiable via some mapping $\sigma : \{x_1,\dots,x_6\} \to
        \Sigma(n)^*$.  Then $\sigma(x_1)$ is a word in the language
        $(\bigcup_{1 \leq i \leq n} a_i \cdot i)^+$ and $\sigma(x_4)$
        is a word in the language $(\bigcup_{1 \leq j \leq n} b_j \cdot
        j)^+$. Furthermore, $\sigma(x_4) = \sigma(x_5)$ due to the
        fact that $\sigma(x_4) = \sigma(x_6) . \sigma(x_5)$ and
        $\sigma(x_6)$ is the empty word (since $\sigma(x_6)$ belongs
        to $L_6 = \epsilon$). 

Assume that $\sigma(x_1) = a_{i_1} {i_1} \cdots a_{i_k} i_k$, where $1
\leq i_\l \leq n$ for each $1 \leq \l \leq k$, 
and $\sigma(x_4) = b_{j_1} {j_1} \cdots b_{j_p} j_p$, where $1
\leq j_\l \leq n$ for each $1 \leq \l \leq p$.  
Since \begin{eqnarray*} 
\sigma(x_2) & = & \replace{\sigma(x_1)}{\epsilon/A} \\ 
& = & \replace{\sigma(x_4)}{\epsilon/C} \\ 
& = & \replace{\sigma(x_4)}{\epsilon/A},
\end{eqnarray*} it is the case that 
$a_{i_1} \cdots a_{i_k} = b_{j_1} \cdots b_{j_p}$. Similarly, 
since \begin{eqnarray*} 
\sigma(x_3) & = & \replace{\sigma(x_1)}{\epsilon/B} \\ & = &
\replace{\sigma(x_5)}{\epsilon/D} \\ 
& = & \replace{\sigma(x_4)}{\epsilon/D} \\ 
& = & \replace{\sigma(x_4)}{\epsilon/B},\end{eqnarray*} 
we have that $i_1 \cdots i_k =
j_1 \cdots j_p$. We conclude that $i_1,\dots,i_k$ is a solution for
the PCP instance defined by $a_1,\dots,a_n$ and
$b_1,\dots,b_n$.  

Assume on the other hand that $i_1,\dots,i_k$ is a solution for
this PCP instance. Then it is easy to check that the constraint 
$\varphi$ defined above 
is satisfiable via the mapping $\sigma : \{x_1,\dots,x_6\} \to
        \Sigma(n)^*$ which is defined as follows: (1)  
 $\sigma(x_1) = a_{i_1} i_1 \cdots a_{i_k} i_k$, (2) 
$\sigma(x_2) =  a_{i_1} \cdots a_{i_k}
=  b_{i_1} \cdots b_{i_k}$, (3) 
$\sigma(x_3) = i_1 \cdots i_k$, (4) $\sigma(x_4) = \sigma(x_5) = 
b_{i_1} i_1 \cdots b_{i_k} i_k$, and (5) 
$\sigma(x_6) = \epsilon$. 

\subsection*{Another Mutation XSS example: }
\begin{example}
    \em
    This example is an adaptation of \cite[Listing 7]{Stock14}. [The original
    example no longer works on modern browsers.] This example has the same
    spirit as Example \ref{ex:mxss1}. Consider the following 
    JavaScript.
\begin{flushleft}
    \footnotesize
    \begin{verbatim}
var title = document.getElementById("node1");
var x = document.getElementById("node");

z = goog.string.escapeString(z); // sanitise
title.innerHTML = z; 
var name="blah";
var code = '<iframe id="' + title.innerHTML + '"';
code += ' name="' + name +'"';
code += 'src="http://www.w3schools.com"></iframe>';
x.innerHTML = code;
    \end{verbatim}
\end{flushleft}
The HTML file contains the following two lines:
\begin{flushleft}
    \footnotesize
    \begin{verbatim}
<div id="node1" style="font-size: h1">Title</div>
<div id="node">iFrame to add</div>
    \end{verbatim}
\end{flushleft}
The purpose of the JavaScript is to add a user-specified title (specified
in the variable \texttt{z}) to the page, and modifies the id attribute of the 
iframe element accordingly. Since the variable \texttt{z} is
untrusted, it is first sanitised before use. Consider the value
\verb+&#34; onload=&#34;alert(1)+ of the variable \texttt{z}. In this case,
it would remain unmodified by \texttt{escapeString} and is decoded into
\verb+" onload="alert(1)+ by innerHTML mutation. The browser would execute
the code \verb+alert(1)+, which was unintended by the developer.

To analyse such a vulnerability within $\AUWR$, we need to analyse whether 
\texttt{x.innerHTML} could take any string value of the form:
{\footnotesize 
\begin{verbatim}
e2 = /<iframe id="("|[^"]*[^"\\]")
           [a-zA-Z][a-zA-Z0-9]*="("|[^"]*[^"\\]")
           name="("|[^"]*[^"\\]")
           src="http:\/\/www.w3schools.com"><\/iframe>/
\end{verbatim}}
\noindent
The regular expression essentially implies that the constructed string
has an extra attribute.
[The reader could use \url{https://regex101.com/} to experiment.]
Notice that the above script is not a stright-line program. To this end,
we first convert it into the static-single assignment form in the standard
way:
\begin{flushleft}
    \footnotesize
    \begin{verbatim}
var title = document.getElementById("node1");
var x = document.getElementById("node");

newz = goog.string.escapeString(z); // sanitise
title.innerHTML = newz; 
var name="blah";
var code = '<iframe id="' + title.innerHTML + '"';
code1 = code + ' name="' + name +'"';
code2 = code1+'src="http://www.w3schools.com"></iframe>';
x.innerHTML = code2;
    \end{verbatim}
\end{flushleft}
Analysis against the attack pattern \texttt{e2} can then be expressed
in $\AUWR$ as a conjunction of the following constraints:
\begin{itemize}
    \item $\text{newz} = R_2(z)$
    \item $\text{\texttt{title.innerHTML}} = R_3(\text{newz})$
    \item $\text{name}=\text{newz}$
    \item $\text{code} = w_1\cdot \text{\texttt{title.innerHTML}} \cdot w_2$
    \item $\text{code1} = \text{code} \cdot w_3 \cdot \text{name} \cdot w_4$
    \item $\text{code2} = \text{code1} \cdot w_4$
    \item $\text{\texttt{x.innerHTML}} = R_3(\text{code2})$
    \item \texttt{x.innerHTML} matches \texttt{e2}
\end{itemize}
Here $R_2$ is the transducer implementing \texttt{escapeString},
while $R_3$ is a transducer
implementing the implicit browser transductions upon \texttt{innerHTML}
assignments. Observe that the resulting constraint is in $\AUWR$.
\qed
\end{example}

\subsection*{Proof of Lemma \ref{lm:correctness-hypothesis}:} 

The proof of Lemma \ref{lm:correctness-hypothesis} is by 
induction on the position of $y$ in the topological sort $\prec$ of $\G(\varphi)$.

\smallskip 

\underline{Base cases}: A source node $y$ in $\G(\varphi)$. In which case,
by construction the formula $\varphi_y'$ is exactly the same formula as $\varphi_y$ 
except that the variable $y$ is replaced by $y(1)$. (In particular, 
$\varphi_y = \{P(y)\}$, for $P$ a regular language, and therefore $\varphi'_y =
\{P(y(1))\}$). 
So, the claim holds.

\smallskip 


\underline{First inductive case}: \, 
A non-source node $y$, where $y = y_1\ldots y_n$ is the relational constraint 
witnessing
the incoming edge to $y$ and $P(y)$ is the regular constraint for $y$. In this
case, let $z$ be the immediate predecessor of $y$ in the topological sort $\prec$.
Among others, observe that $y_1,\ldots,y_n \preceq z$. To prove the desired claim
for $y$, we will now apply the induction hypothesis on $z$. 

In one direction, if $\iota : Z_y \to \Sigma^*$ 
is a satisfying assignment for $\varphi_y$, then the 
restriction 
$\iota_{|_{Z_z}}$ of $\iota$ to $Z_z$ is a satisfying
assignment for $\varphi_z$ (since $\varphi_y = \varphi_z \wedge y = y_1\ldots y_n
\wedge P(y)$). By induction then, there exists a satisfying assignment $\iota': 
Z_z' \to
\Sigma^*$ for $\varphi_z'$ such that
\[
    \iota_{|_{Z_z}}(u) \ = \ \iota'(u(1))\circ \cdots \circ 
                                \iota'(u({\max}_{u}))
\]
for each $u \in Z_z$. Therefore:  
\begin{multline*}\iota(y) \ = \ \iota(y_1)\ldots \iota(y_n) \ = \\
\iota'(y_1(1))\ldots \iota'(y_1({\max}_{y_1}))\ldots \iota'(y_n(1)\ldots
\iota'(y_n({\max}_{y_n})).
\end{multline*} 
For each $k \in \{1,\dots,{\max}_y\}$ and integers $i,l$
satisfying $\nu(k) = (i,l)$, the formula $\varphi_y'$ contains the constraint
$y(k) = y_i(l)$, and so we will simply extend the assignment $\iota': Z_z' \to
\Sigma^*$ to $Z'_y$ by setting $\iota'(y(k)) :=
\iota'(y_i(l))$. This shows that $\iota(u) = \iota'(u(1))\circ \cdots \circ 
                                \iota'(u({\max}_{u}))$ 
for each $u \in Z_y$. It also shows that 
$\iota' : Z'_y \to \Sigma^*$
satisfies $\NewStrCon{y}$. In addition, if $\pi$ is an accepting run
of $P$ on $\iota(y)$, then we may split $\pi$ into $\max_y$ segments 
$\pi_1,\ldots,\pi_{\max_y}$, where the initial (resp. final) state of each $\pi_i$
is a state $q_{i-1}$ (resp. $q_i$), such that the run $\pi_i$ is also an
accepting run of $P_{[q_{i-1},q_i]}$ on $\iota'(y(i))$. This shows
that 
$\iota' : Z'_y \to \Sigma^*$ satisfies 
$\bigwedge_{i=1}^{\max_y} P_{[q_{i-1},q_i]}(y(i))$ for some
$\max_y$-splitting $q_0,\dots,q_{\max_y}$ of $P$, and hence that it
satisfies $\NewRegCon{y}$. We conlude
then that $\iota' : Z'_y \to \Sigma^*$ satisfies $\varphi'_y$.  

The converse case can be proven by noticing that the chain of reasoning
in the previous paragraph can be reversed.

\smallskip 

\underline{Second inductive case}:
\, A non-source node $y$, where $R(x,y)$ is the string constraint witnessing
the incoming edge to $y$ and $P(y)$ is the regular constraint for
$y$. Again, let $z$ be the immediate predecessor of $y$ in the topological sort $\prec$.
Thus, $x \preceq z$. To prove the desired claim
for $y$, we apply the induction hypothesis on $z$. 



In one direction, assume $\iota : Z_y \to \Sigma^*$ 
is a satisfying assignment for $\varphi_y$. As in the previous case,
we can conclude by induction hypothesis that there exists a satisfying assignment $\iota': 
Z_z' \to
\Sigma^*$ for $\varphi_z'$ such that
\[
    \iota_{|_{Z_z}}(u) \ = \ \iota'(u(1))\circ \cdots \circ 
                                \iota'(u({\max}_{u}))
\]
for each $u \in Z_z$ (and, in particular, for $u = x$).
Since $\iota$ is a satisfying assignment for $\varphi_y$, it also follows that
there exists a run $\pi$ of $R$ witnessing $(\iota(x),\iota(y)) \in R$. We may
split this run $\pi$ into $\max_x$ segments $\pi_1,\ldots,\pi_{\max_x}$,
where the initial (resp. final) state of $\pi_i$ is some state $p_{i-1}$
(resp. $p_i$), such that for some $w_1,\ldots,w_n \in \Sigma^*$ such that
$\iota(y) = w_1\ldots w_n$, the run $\pi_i$ witnesses that 
$(\iota'(x_i),w_i) \in R_{[p_{i-1},p_i]}$. (Note that there might be more than one way
to split $\pi$ and $\iota(y)$ while satisfying the aforementioned condition,
e.g., if $\pi_1 = p_0 \tran{a/bb} p_1 \tran{\epsilon/a} p_2$ and $\pi_2 =
p_2 \tran{b/ab} p_3$, then the splitting $\pi_1' = p_0 \tran{a/bb} p_1$ and
$\pi_2' = p_1 \tran{\epsilon/a} p_2 \tran{b/ab} p_3$ will also satisfy the
aforementioned condition). We will simply extend the assignment $\iota'$ to
$Z'_y$ by setting $\iota'(y(k)) = w_k$ for each $k \in
\{1,\dots,\max_x\}$. Clearly then, $\iota(u) = \iota'(u(1))\circ \cdots \circ 
                                \iota'(u({\max}_{u}))$ 
for each $u \in Z_y$. Also, by definition $\iota'$ satisfies 
$\bigwedge_{i=1}^{{\max}_y} y(i) = R_{[p_{i-1},p_i]}(x(i))$ for some
$\max_y$-splitting $p_0,\dots,p_{\max_y}$ of $R$, and hence it
satisfies $\NewStrCon{y}$. 
Now, let $\pi$ be an accepting run of of $P$ on $\iota(y) = w_1 \ldots
w_n$. As in the first inductive case, we take the segment 
$\pi_i$ of $\pi$ that operates on $w_i$. If $p_{i-1}$ (resp. $p_i$) is the 
initial (resp. final) state in the run segment $\pi_i$, then $\iota'$ satisfies
$\bigwedge_{i=1}^{\max_y} P_{[p_{i-1},p_i]}(y(i))$. This shows that
$\iota'$ satisfies $\NewStrCon{y}$. We conclude that $\iota'$
satisfies $\varphi'_y$. 

As in the first inductive case, the converse case can be proven by noticing that 
the chain of reasoning in the previous paragraph can be reversed.

\subsection*{Proof of the lower bound of Theorem \ref{th:expspace}: }

We prove that checking satisfiability of string 
constraints in $\AUWR$ is
$\EXPSPACE$-hard. We reduce from the acceptance problem for a
deterministic Turing machine $\M$ that works in space $2^{cn}$, for $c
> 1$. That is, we provide a polynomial time reduction that, given
an input $w$ to $\M$, it constructs a constraint $\varphi(w)$ in
$\AUWR$ such
that $w$ is accepted by $\M$ if and only if $\varphi(w)$ is
satisfiable. 

Let us assume that $\M = (\Sigma,Q,q_0,q_f,\delta)$, where (i) 
$\Sigma$ is a
finite alphabet which contains the blank symbol $\flat$, 
(ii) $Q$ is the finite set of states, (iii) $q_0$ is the
initial state, (iv) $F$ is the set of final states, and (v) 
$\delta :
Q \times \Sigma \to Q \times \Sigma \times \{-1,0,1\}$ is the
transition function. We assume without loss of generality that
(a) $\Sigma \cap \{0,1\} = \emptyset$, (b) $F =
\{q_f\}$ is a singleton, and (c) before accepting the machine erases
its tape leaving the head in the first position. 

We define $\Sigma(Q) := \Sigma \cup (\Sigma \times Q) \cup \{0,1\}$. 
We use this
alphabet to represent configurations of $\M$. Consider, for instance, that $\M$
is in a configuration $c$ in which the $i$-th cell of the tape contains
symbol $a_i \in \Sigma$, for each $0 \leq i \leq 2^{cn}-1$, the
head is scanning cell $0 \leq j \leq 2^{cn}-1$, and the machine is in
state $q \in Q$. Then $c$ is represented as the following word over
$\Sigma(Q)$: 
$$[0] \, a_0 \, [1] \, a_1 \, [2] \, \cdots \, [j] \, (a_j,q) \, [j+1]
\, \dots \, [2^{cn}-1]
a_{2^{cn}-1},$$
where for each $0 \leq i \leq 2^{cn} - 1$ we have that $[i]$ is the
$cn$-bit representation of the integer $i$ over alphabet $\{0,1\}$. 


Let $w$ be an input to $\M$ (i.e., a word over
$\Sigma$) of length $n \geq 0$.  We now explain how to construct
$\varphi(w)$.
To start with, $\varphi(w)$ contains a regular constraint $\bigwedge_{1
\leq i \leq m} L_i(x)$ such that $L_1(x) \wedge \dots \wedge L_m(x)$ is satisfiable by a word
$x$ if and only the following holds: 
\begin{enumerate} 
\item 
The word $x$ satisfies the regular expression: 
$$\big([0] \, \Sigma(Q) \, [1] \, \Sigma(Q) \, [2] \, \cdots \, [2^{cn}-1]
\, \Sigma(Q) \, \$ \big)^*.$$
In other words, $x$ represents the concatenation of several $cn$-bit
counters separated by the delimiter \$. Each address $[i]$ in one of
these bit couters is followed by a symbol from $\Sigma(Q)$.  

\item The first counter in $x$ encodes the initial
configuration of $\M$ with input $w$. That is, if $w = a_0a_1 \cdots
a_{n-1}$ then the following word is a prefix of $x$:   
$$[0] \, (a_0,q_0) \, [1] \, a_1 \, \cdots \, [n-1] \,
a_{n-1} \, [n] \, \flat \, \cdots \, 
[2^{cn}-1] \, \flat \, \$ $$
Recall that $q_0$ is the initial state of $\M$ and $\flat$ is the
blank symbol. 

\item The last counter of $x$ encodes a final 
configuration of $\M$. That is, the following word is a suffix of $x$: 
$$[0] \, (\flat,q_f) \, [1] \, \flat \, [2]  \, \flat \, \cdots \, 
[2^{cn}-1] \, \flat.$$

\end{enumerate} 
Using standard ideas on how to enconde $n$-bit counters (see, e.g., \cite{Bor,Kozen77}),
it is possible to show that a set $\{L_1,\dots,L_m\}$ of NFAs
that satisfy the conditions specified above can be constructed in
polynomial time from $\M$ and $w$. 

The intuitive idea of the reduction is to codify in the word $x$ an
accepting run of $\M$ on input $w$.  
Let us assume that $x$ is of the form $c_0 \$ c_! \$ \cdots \$
c_p \$$, for $p \geq 0$, 
where for each $0 \leq i \leq p$ we have that $c_i$ encodes a
$cn$-bit counter. We know that $c_0$ and $c_p$
represent an initial and final configuration, respectively, of $\M$ on
input $w$. Thus, in order to check that $x$ in fact represents an
accepting run of $\M$ on $w$ we only need to verify that for each $0
\leq j \leq p-1$ it is the case that $c_{j+1}$ encodes the
configuration which is obtained
from $c_j$ by one application of the transition function $\delta$. 

As usual, we check this locally for each {\em block} of three
consecutive addresses in the $c_j$'s. Formally, for each $0 \leq i
\leq 2^{cn} -1$ and $0 \leq j \leq p$, we define $c_j[i]$ to be the
unique substring of $c_j$ of the form $[i]a$, for $a \in \Sigma(Q)$.  
Clearly
then, checking whether the word $x = c_0 \$ c_1 \$ \cdots \$ c_p \$$
codifies an accepting run of $\M$ on input $w$ is equivalent to
checking that for each $1 \leq i \leq 2^{cn}-2$ and $0 \leq j < p$ the
block $c_{j+1}[i-1] c_{j+1}[i] c_{j+1}[i+1]$ follows from $c_j[i-1]
c_j[i] c_j[i+1]$ according to the transition function $\delta$. For
instance, if $$c_j[i-1] c_j[i] c_j[i+1] \ = \ [i-1] \, a \, [i] \,
(b,q) \, [i+1] \, c,$$ 
where
$a,b,c \in \Sigma$ and $q \in Q$, and $\delta(q,b) = (q',b',-1)$, then
$$c_{j+1}[i-1] c_{j+1}[i] c_{j+1}[i+1] \ = \ [i-1] \, (a,q') \, [i] \,
b' \, [i+1] \, c.$$
This is precisely the role of the remaining constraints in
$\varphi(w)$, which we define next.

First, we add to the constraint $\varphi(w)$ the string constraint 
$$y \ = \ x \# x
\# x \# x \# x \# x$$ that takes six copies of
the word $x$, separates them with a new delimiter $\#$, and assigns
the result to the variable $y$. 
We then apply 
a transducer $R$ to $y$ to define a new variable $z$. Let us recall that
$x$ is of the form $c_0 \$ c_1 \$ \dots \$ c_p \$ $, where each
$c_i$ encodes a $cn$-bit counter. The
transducer $R$ then converts $y = x \# x
\# x \# x \# x \# x$ into $z = x_1 \# x_2
\# x_3 \# x_4 \# x_5 \# x_6$, where:   

\begin{enumerate}
\item
 $x_1$ is exactly as
$x$ except that now we have replaced each counter $c_j$ with a
new counter $c^1_j$ that only preserves from $c_j$ information about
addresses $[0],[1],[2],[6],[7],[8],\dots$
\item Same for $x_2$ and $c_2^j$, but now $c_2^j$
preserves information about
addresses $[1],[2],[3],[7],[8],[9],\dots$
\item Same for $x_3$ and $c_3^j$, but now $c_3^j$
preserves information about
addresses  $[2],[3],[4],[8],[9],[10],\dots$ 
\item Same for $x_4$ and $c_4^j$, but now $c_4^j$
preserves information about
addresses $[3],[4],[5],[9],[10],[11],\dots$
\item Same for $x_5$ and $c_5^j$, but now $c_5^j$
preserves information about
addresses  $[4],[5],[6],[10],[11],[12],\dots$
\item Same for $x_6$ and $c_6^j$, but now $c_6^j$
preserves information about
addresses  $[5],[6],[7],[11],[12],[13],\dots$
\end{enumerate} 
We also assume that transducer $R$ 
deletes from each counter $c_j^l$ in $u_l$ ($0 \leq j \leq
p$ and $1 \leq l \leq 6$) the ``incomplete'' blocks before a delimiter \$, i.e., the
suffixes that do not form a complete block of
three consecutive addresses. (For instance, if $2^{cn} = 8$ then each
$c_j^2$ is of the form $[1] a [2] b [3] c [7] d$. In this case, $R$
simply deletes the suffix $[7] d$ completely). 

Notice that each block of three consecutive addresses appears in one,
and only one, of the words
$x_l$, for $1 \leq l \leq 6$. In particular, the information from $x$
about addresses $[i-1],[i],[i+1]$, for $1 \leq i \leq 2^{cn}-2$, is
preserved in $x_j$, where $j$ is the remainder obtained by dividing
$i$ by 6 (assuming that if $i$ is divisible by 6 then this
remainder is 6). Further, it is not hard to see
how the transducer $R$ can be constructed in polynomial time from our
input.   

We now extend the formula $\varphi(w)$ by including the conjunction of
the following atomic string constraints:  
\begin{align*}
u_{i} \ & = \ z_{i-1} \#_i z_{i-1} \ \ \ & (1 \leq i \leq cn-1) \\
z_{i} \ & = \ S_i(u_{i-1}) & (1 \leq i \leq cn),  
\end{align*} 
where $z_0 = z$, the symbol $\#_i$ is a fresh
delimiter, and $S_i$ is a transducer we define
below. That is, $u_i$ consists of two copies of $z_{i-1}$ separated by
the delimiter $\#_i$ and $z_i$ is obtained from $u_{i-1}$ by applying
the transducer $S_i$. We stop this process after $cn$ steps. 

For the sake of readability, we will keep the explanation at the
intuitive level.  We start by explaining how the transducer $S_1$ is
defined. By definition, $u_1 = z_0 \#_1 z_0 = z \#_1 z$ and $z = x_1
\# x_2 \# x_3 \# x_4 \# x_5 \# x_6$. Then $S_1$ takes the the copy of
$x_l = c_0^l \$ \dots \$ c_p^l$, for $1 \leq l \leq 6$, that appears
before the delimiter $\#_1$ in $u_1$, and turns it into a new word
$x'_l = d_0^l \$ \dots \$ d_p^l$, where each $d_j^l$ is obtained by
performing the following modifications over $c_j^l$: If $c_j^l$
consists only of a block of three consecutive addresses, then leave as
it is (i.e., $d_j^l = c_j^l$); otherwise, remove every other block of
three consecutive addresses from $c_j^l$. For the copy of $x_l$ that
occurs after $\#_1$, the transducer $S_1$ transforms
it into a new word $x''_l$ by doing the opposite: If $c_j^l$ contains
more than one block of three consecutive addresses, then it removes
the first block, keeps the second one, removes the third one, etc. 
It is not hard to see that $S_1$ can be constructed in polynomial
time from our input.

Notice then that 
$$z_1 = x'_1 \# x'_2 \# x'_3 \# x'_4 \# x'_5 \# x'_6 \#_1 
x''_1 \# x''_2 \# x''_3 \# x''_4 \# x''_5 \# x''_6,$$ where, 
for instance, $x'_1$ corresponds to the restriction
of $x$ to addresses $[0],[1],[2],[12],[13],[14],\dots$, while $x''_1$
corresponds to the restriction of $x$ to 
$[6],[7],[8],[18],[19],[20],\dots$. In the same way, $x'_2$ 
corresponds to the restriction of $x$ to addresses
$[1],[2],[3],[13],[14],[15],\dots$, while $x''_2$ 
corresponds to the restriction of $x$ to addresses
$[7],[8],[9],[19],[20],[21],\dots$. In general, the restriction of $x$
to addresses $[i-1],[i],[i+1]$, for $1 \leq i \leq 2^{cn}-2$, is
contained in (i) $x'_j$ if and only if the remainder of $i$ divided by 12 is $1 \leq
j \leq 6$, and in (ii) $x'_j$ if and only if the remainder of $i$ divided by 12 is $j +
6$, for $1 \leq j \leq 6$. 



In general, we will assume inductively that $z_l$, for $1 \leq l \leq
cn - 1$, is constructed from words $w_1,w_2,\dots,w_{6 \cdot 2^i}$
over $\Sigma(Q) \cup \{\$\}$ -- suitably separated by delimiters
$\#,\#_1,\dots,\#_{i}$ according to the structure provided by the
constraints that define the $z_l$'s and $u_l$'s -- in such a way that
$w_j$ corresponds to the restriction of $x$ to addresses
$[i-1],[i],[i+1]$ ($1 \leq i \leq 2^{cn}-2$) for which the
remainder of $i$ divided by $6 \cdot 2^i$ is precisely $j$.

Then $u_{l+1} = z_l \#_{l+1} z_l$, and we define $S_{l+1}$ similarly to
$S_1$. That is, $S_{l+1}$ takes the copy of
$w_j =  t_0 \$ \dots \$ t_p$  (where each $t_h$ is a word over
$\Sigma(Q)$) 
that appears
before the delimiter $\#_{l+1}$ in $u_{l+1}$, and turns it into a new word
$w'_j = t'_0 \$ \dots \$ t'_p$, where each $t'_h$ is obtained by
performing the following modifications over $t_h$: If $t_h$
consists only of a block of three consecutive addresses, then leave as
it is (i.e., $t'_h = t_h$); otherwise, remove every other block of
three consecutive addresses from $t_h$. For the copy of $w_j$ that
occurs after $\#_{l+1}$, the transducer $S_{l+1}$ transforms
it into a new word $w''_j$ by doing the opposite: If $t_h$ contains
more than one block of three consecutive addresses, then it removes
the first block, keeps the second one, removes the third one, etc. 
It is not hard to see then that $z_{l+1}$ consists of words
$v_1,v_2,\dots,v_{6 \cdot 2^{i+1}}$ over $\Sigma(Q) \cup \{\$\}$ -- suitably separated by delimiters
$\#,\#_1,\dots,\#_{l+1}$ according to the structure provided by the
constraints that define the $z_l$'s and $u_l$'s -- in such a way that
$v_j$ corresponds to the restriction of $x$ to addresses
$[i-1],[i],[i+1]$ ($1 \leq i \leq 2^{cn}-2$) for which the
remainder of $i$ divided by $6 \cdot 2^{i+1}$ is precisely $j$. This
proves our inductive case. 

In particular then, $z_{cn}$ consists of words
$v_1,v_2,\dots,v_{6 \cdot 2^{cn}}$ over $\Sigma(Q) \cup \{\$\}$ -- suitably separated by delimiters
$\#,\#_1,\dots,\#_{cn}$ according to the structure provided by the
constraints that define the $z_l$'s and $u_l$'s -- in such a way that
$v_j$, for $1 \leq j \leq 2^{cn}-2$, corresponds to the restriction of $x$ to addresses
$[j-1],[j],[j+1]$ (and nothing else). That is, $v_j$ is of the form  
\begin{multline*}
c_0[j-1] c_0[j] c_0[j+1] \$ c_1[j-1] c_1[j] c_1[j+1] \$ \cdots \\
\$ \, c_p[j-1] c_p[j] c_p[j+1],
\end{multline*} Further, the $v_j$'s such that $j >
2^{cn}-2$ also correspond to the restriction of $x$ to addresses
$[i-1],[i],[i+1]$ (and nothing else), for some $1 \leq i \leq
2^{cn}-2$.  

Finally, we only have to check in each subword $v_j$ ($1 \leq j \leq 6
\cdot 2^{cn}$) of $z_{cn}$ of the form 
\begin{multline*}
c_0[i-1] c_0[i] c_0[i+1] \$ c_1[i-1] c_1[i] c_1[i+1] \$ \cdots \\
\$ \, c_p[i-1] c_p[i] c_p[i+1],
\end{multline*} 
for $1 \leq i \leq
2^{cn}-2$, that $c_{l+1}[i-1] c_{l+1}[i] c_{l+1}[i+1]$ is obtained
from $c_{l}[i-1] c_{l}[i] c_{l}[i+1]$ by applying the
transition function $\delta$, for each $0 \leq l < p$. 
It is easy to see how to construct an
NFA $L$ which verifies this for each subword $v_j$ of the form above 
(see, e.g.,
\cite{Kozen77}). We then add to $\varphi(w)$ a regular constraint $\A^*(z_{cn})$,
where $\A^*$ is the NFA that checks that each subword $v_j$ in $z_{cn}$ is
accepted by $\A$. The NFA $\A^*$
simply ``restarts'' $\A$ if $\A$ is in an acceptance state while
reading one of the delimiters $\#_i$ or $\#$; otherwise it rejects.

Clearly, $\varphi(w)$ is in $\AUWR$ and 
can be constructed in polynomial time
from the input. Furthermore, $\M$ accepts $w$ if and only if
$\varphi(w)$ is satisfiable. This concludes the proof.
 
}

\end{document}

